\newtheorem{lemma}{Lemma}
\newtheorem{proposition}{Proposition}
\newtheorem{theorem}{Theorem}
\newcommand{\proof}{\vspace{-.5ex}\textit{Proof:} }
\newcommand{\qed}{\hfill$\Box$\smallskip}
\newcolumntype{C}[1]{>{\centering\let\newline\\\arraybackslash\hspace{0pt}}m{#1}}
\newcommand\adj{adj}
\newcommand\id{id}
\newcommand\cp{\circ}
\newcommand\ext{\gamma}
\newcommand\cext[2]{\cp_{\!#1}^{#2}}
\newcommand{\adjedges}[1]{E_{\adj}^{#1}}
\newcommand{\selfedges}[1]{E_{\id}^{#1}}
\newcommand\tel{}
\newcommand\MR{M\!R}
\newcommand\MRC{\MR_{\circ}}
\renewcommand\AC{A_{\circ}}
\newcommand\BC{B_{\circ}}
\newcommand\AM{A^{\ddagger}}
\newcommand\BM{B^{\ddagger}}
\newcommand\AMi[1]{A^{\ddagger_{#1}}}
\newcommand\BMi[1]{B^{\ddagger_{#1}}}
\def\no{\varepsilon}
\newcommand\AB{A\!B}
\newcommand\BA{B\!A}
\newcommand\BB{B\!B}
\renewcommand\AA{A\!A}
\newcommand\tinysub[1]{\mspace{2mu}\!\scalebox{.65}{#1}}
\newcommand\AAno{\AA_\no}
\newcommand\ABno{\AB_\no}
\newcommand\BBno{\BB_\no}
\newcommand\AAaa{\AA_{\ma}}
\newcommand\AAbb{\AA_{\mb}}
\newcommand\AAab{\AA_{\mab}}
\newcommand\AAba{\AA_{\mba}}
\newcommand\BBaa{\BB_{\!\ma}}
\newcommand\BBbb{\BB_{\mb}}
\newcommand\BBab{\BB_{\!\mab}}
\newcommand\BBba{\BB_{\mba}}
\newcommand\ABaa{\AB_{\!\ma}}
\newcommand\ABbb{\AB_{\mb}}
\newcommand\ABab{\AB_{\!\mab}}
\newcommand\ABba{\AB_{\mba}}
\newcommand\BAab{\BA_{\mab}}
\newcommand\BAba{\BA_{\mba}}
\newcommand\ABend{\bullet}
\newcommand\ma{\mathcal A}
\newcommand\mb{\mathcal B}
\newcommand\mg{\mathcal G}
\newcommand\mtu{\mathcal U}
\newcommand\mab{\ma\!\mb}
\newcommand\mba{\mb\!\ma}
\newcommand\tl[1]{#1^{\z\!t}}
\newcommand\hd[1]{#1^{\z\!h}}
\newcommand\rev[1]{\overline{#1}}
\newcommand\occ{\Phi}
\newcommand\occof[1]{\occ_{\!#1}}
\newcommand\diff{\Delta\occ}
\newcommand\ddcj{d_{\scriptscriptstyle D\z\!C\!J}}
\newcommand\ddcjid{\ddcj^{\scriptscriptstyle id}}
\newcommand\T[1]{{\tt #1}}
\newcommand\Ti[2]{\T{#1}_{\tinysub{\T {#2}}}}
\newcommand\mk{m}
\newcommand\z{\mspace{2mu}}
\newcommand\W{\tt{W}}
\newcommand\M{\tt{M}}
\newcommand\WA{\overline{\W}}
\newcommand\WB{\underline{\W}}
\newcommand\MA{\overline{\M}}
\newcommand\MB{\underline{\M}}
\newcommand\N{\tt{N}}
\newcommand\Z{\tt{Z}}
\begin{document}

{\bf\large Computing the rearrangement distance of natural genomes}
\vspace{5ex}

Leonard Bohnenk\"amper, 
Mar\'ilia D. V. Braga, 
Daniel Doerr, 
and
Jens Stoye$^{\star}$ 
\vspace{5ex}

Faculty of Technology and Center for Biotechnology (CeBiTec)\\ Bielefeld University\\
Postfach 10 01 31\\ 
33501 Bielefeld\\ 
Germany\\
Phone: +49 521 106 3840\\
Fax: +49 521 106 6495

\vspace{5ex}

$^\star$
Corresponding author\\
Email: \url{jens.stoye@uni-bielefeld.de}
\vspace{5ex}

Running title:\\
Rearrangement distance of natural genomes
\vspace{5ex}

\newpage

\begin{abstract}
The computation of genomic distances has been a very active field of computational comparative genomics over the last 25 years.
Substantial results include the polynomial-time computability of the inversion distance by Hannenhalli and Pevzner in 1995 and the introduction of the double-cut and join (DCJ) distance by Yancopoulos, Attie and Friedberg in 2005.
Both results, however, rely on the assumption that the genomes under comparison contain the same set of unique \emph{markers} (syntenic genomic regions, sometimes also referred to as \emph{genes}).
In 2015, Shao, Lin and Moret relax this condition by allowing for duplicate markers in the analysis. This generalized version of the genomic distance problem is NP-hard, and they give an ILP solution that is efficient enough to be applied to real-world datasets. A restriction of their approach is that it can be applied only to \emph{balanced} genomes, that have equal numbers of duplicates of any marker. Therefore it still needs a delicate preprocessing of the input data in which excessive copies of unbalanced markers have to be removed.

In this paper we present an algorithm solving the genomic distance problem for \emph{natural} genomes, in which any marker may occur an arbitrary number of times. Our method is based on a new graph data structure, the \emph{multi-relational diagram}, that allows an elegant extension of the ILP by Shao, Lin and Moret to count \emph{runs} of markers that are under- or over-represented in one genome with respect to the other and need to be inserted or deleted, respectively. With this extension, previous restrictions on the genome configurations are lifted, for the first time enabling an uncompromising rearrangement analysis. Any marker sequence can directly be used for the distance calculation.

The evaluation of our approach shows that it can be used to analyze genomes with up to a few ten thousand markers, which we demonstrate on simulated and real data.
Source code and test data are available from~\url{https://gitlab.ub.uni-bielefeld.de/gi/ding}.

\end{abstract}

{\bf Keywords:} Comparative genomics, Genome rearrangements, DCJ-indel distance.

\newpage

\section{Introduction}

The study of genome rearrangements has a long tradition in comparative genomics.
A central question is how many (and what kind of) mutations have occurred between the genomic sequences of two individual genomes. 
In order to avoid disturbances due to minor local effects, often the basic units in such comparisons are syntenic regions identified between the genomes under study, much larger than the individual DNA bases.
We refer to such regions as \emph{genomic markers}, or simply \emph{markers}, although often one also finds the term \emph{genes}.

Following the initial statement as an edit distance problem~\citep{SAN-1992}, a comprehensive trail of literature has addressed the problem of computing the number of rearrangements between two genomes.
In a seminal paper in 1995, \citet{HAN-PEV-1999} introduced the first polynomial time algorithm for the computation of the inversion distance of transforming one 
chromosome into another one by means of segmental inversions. Later, the same authors generalized their results to the HP model~\citep{HAN-PEV-1995} which is capable of handling multi-chromosomal genomes and accounts for additional genome rearrangements. Another breakthrough was the introduction of the double cut and join (DCJ) model~\citep{YAN-ATT-FRI-2005,BER-MIX-STO-2006},
that is able to capture many genome rearrangements and whose genomic distance is computable in linear time. 
The model is based on a simple operation in which the genome sequence is cut twice between two consecutive markers and re-assembled by joining the resulting four loose cut-ends in a different combination. 

A prerequisite for applying the DCJ model in practice is that their genomic marker sets must be identical and that any marker occurs exactly once in each genome. This severely limits its applicability in practice. Linear time extensions of the DCJ model allow markers to occur exclusively in one of the two genomes, computing a genomic \emph{DCJ-indel distance} that minimizes the sum of DCJ and insertion/deletion (indel) events~\citep{BRA-WIL-STO-2011,COM-2013}.
Still, markers are required to be \emph{singleton}, i.e., no duplicates can occur. 
When duplicates are allowed, the problem is more intrincate and
all approaches proposed so far are NP-hard, see for
instance~\citep{SAN-1999,BRY-2000,BUL-JIA-2013,ANG-FER-RUS-THE-VIA-2009,MAR-FEI-BRA-STO-2015,SHA-LIN-MOR-2015,YIN-TAN-SCH-BAD-2016}.
From the practical side, more recently, \citet{SHA-LIN-MOR-2015} presented an integer linear programming (ILP) formulation for computing the DCJ distance in presence of duplicates, but restricted to \emph{balanced genomes}, where both genomes have equal numbers of duplicates. \citet{YIN-TAN-SCH-BAD-2016} then developed 
a branch and bound approach to compute the DCJ-indel distance of \emph{quasi-balanced genomes}, that have equal number of duplicated common markers but also markers that occur exclusively in one of the two genomes.
An ILP that computes the DCJ-indel distance of \emph{unbalanced genomes} was later presented by \citet{LYU-GER-GOR-2017}, but their approach does not seem to be applicable to real data sets, see Section~\ref{sec:ilp-details} 
for details.

In this paper we present the first feasible exact algorithm for solving the NP-hard problem of computing the distance under a general genome model where any marker may occur an arbitrary number of times in any of the two genomes, called \emph{natural genomes}.
Specifically, we adopt the \emph{maximal matches} model where only markers appearing more often in one genome than in the other can be deleted or inserted. Our ILP formulation is based on the one from \citet{SHA-LIN-MOR-2015}, but with an efficient extension that allows to count \emph{runs} of markers that are under- or over-represented in one genome with respect to the other, so that the pre-existing model of minimizing the distance allowing DCJ and indel operations~\citep{BRA-WIL-STO-2011} can be adapted to our problem. With this extension, once we have the genome markers, no other restriction on the genome configurations is imposed.

The evaluation of our approach 
shows that it can be used to analyze genomes with up to a few ten thousand markers, provided the number of duplicates is not too large.
The complete source code of our ILP implementation and the simulation software used for generating the benchmarking data in Section~\ref{sec:eva} are available from~\url{https://gitlab.ub.uni-bielefeld.de/gi/ding}.

This paper is an extended version of earlier work that was presented at RECOMB~2020~\citep{BOH-BRA-DOE-STO-2020a}.

\section{Preliminaries}\label{sec:preliminaries}

A \emph{genome} is a set of \emph{chromosomes} and each chromosome can be linear or circular.
Each \emph{marker} in a chromosome is an oriented DNA fragment.
The representation of a marker~$\mk$ in a chromosome can be
the symbol~$\mk$ itself, if it is read in direct orientation, or the
symbol~$\rev{\mk}$, if it is read in reverse orientation.
We represent a chromosome $S$ of a genome~$A$ by a string~$s$, obtained by the concatenation of all symbols in~$S$, read in any of the two directions. If $S$ is circular, we can start to read it at any marker and the string $s$ is flanked by parentheses. 

Given two genomes~$A$ and~$B$, let $\mtu$ be the set of all markers that occur in any of the two genomes.
For each marker $\mk \in \mtu$,
let $\occof{A}(\mk)$ be the number of occurrences of $\mk$ in genome $A$ and $\occof{B}(\mk)$ be the number of occurrences of $\mk$ in genome $B$.
We can then define
$\diff(\mk)=\occof{A}(\mk)-\occof{B}(\mk)$. 
If both $\occof{A}(\mk)>0$ and $\occof{B}(\mk)>0$, $\mk$ is called a {\em common marker}. We denote by
$\mg\subseteq\mtu$ the set of common markers of $A$ and $B$.
The markers in $\mtu\backslash\mg$ are called {\em exclusive markers}.  
For example, if
we have two unichromosomal linear genomes ~$A=\{\tel\T{1}\z\T{3}\z\T{2}\z\rev{\T{5}}\z\rev{\T{4}}\z\T{3}\z\T{5}\z\T{4}\tel\}$ and~$B=\{\tel\T{1}\z\T{6}\z\T{2}\z\T{3}\z\T{1}\z\T{7}\z\T{3}\z\T{4}\z\T{1}\z\T{3}\tel\}$, 
then~$\mtu=\{\T{1},\T{2},\T{3},\T{4},\T{5},\T{6},\T{7}\}$ and $\mg=\{\T{1},\T{2},\T{3},\T{4}\}$.
Furthermore, $\diff(\T{1})\!=\!1\!-\!3\!=\!-2$, $\diff(\T{2})\!=\!1\!-\!1\!=\!0$, $\diff(\T{3})\!=\!2\!-\!3\!=\!-1$,
$\diff(\T{4})\!=\!2\!-\!1\!=\!1$, 
$\diff(\T{5})\!=\!2$, 
and $\diff(\T{6})=\diff(\T{7})=-1$.

\subsection{The DCJ-indel model}
A genome can be transformed or \emph{sorted} into another genome
with the following types of mutations:

\begin{itemize}
  \item A \emph{double-cut-and-join} (DCJ) is the operation
that cuts a genome at two different positions (possibly in two different
chromosomes), creating four open ends, and joins these
open ends in a different way. This can represent many different rearrangements,
such as inversions, translocations, fusions and fissions. 
For example, a DCJ can cut linear chromosome~$\tel\T{1}\z\T{2}\z\rev{\T{4}}\z\rev{\T{3}}\z\T{5}\z\T{6}\tel$ 
before and after~$\rev{\T{4}}\z\rev{\T{3}}$, creating the segments~$\tel\T{1}\z\T{2}\bullet$,
 $\bullet \rev{\T{4}}\z\rev{\T{3}}\bullet$ and $\bullet\T{5}\z\T{6}\tel$, where the symbol~$\bullet$ represents the
open ends. By joining the first with the third and the second with the
fourth open end, we invert $\rev{\T{4}}\z\rev{\T{3}}$ and obtain $\tel\T{1}\z\T{2}\z\T{3}\z\T{4}\z\T{5}\z\T{6}\tel$.

\item Since the genomes can have distinct multiplicity of markers, we also need
to consider {\em insertions} and {\em deletions} of segments of contiguous
markers~\citep{YAN-FRI-2009,BRA-WIL-STO-2011,COM-2013}.
We refer to insertions and deletions collectively as \emph{indels}. 
For example, the deletion of segment~$\T{5}\z\T{2}\z\T{6}\z\T{2}$
from linear chromosome~$\tel\T{1}\z\T{2}\z\T{3}\z\T{5}\z\T{2}\z\T{6}\z\T{2}\z\T{4}\tel$
results in~$\tel\T{1}\z\T{2}\z\T{3}\z\T{4}\tel$.
Indels have two restrictions:
(i) only markers that  have positive $\diff$ can be deleted; and (ii) only markers that have negative $\diff$ can be inserted. 
\end{itemize}

\noindent
In this paper, we are interested in
computing the \emph{DCJ-indel distance} between two genomes $A$ and~$B$, that is
denoted by $\ddcjid(A,B)$ and corresponds to the minimum number of DCJs and
indels required to sort $A$ into $B$.
We separate the instances of the problem in four types:

\begin{enumerate}
  \item \emph{Singular genomes}: the genomes contain no duplicate markers, that is,
  each common marker %
  is singular in each genome. (The exclusive markers are not restricted to be singular, because it is mathematically trivial to transform them into singular markers when they occur in multiple copies.)
  Formally, we have that, for each $\mk \in
  \mg$, $\occof{A}(\mk)=\occof{B}(\mk)=1$. 
  The distance between singular genomes can be easily computed in linear
  time~\citep{BER-MIX-STO-2006,BRA-WIL-STO-2011,COM-2013}.
  \item \emph{Balanced genomes}: the genomes contain no exclusive markers, but can have duplicates,
  and the number of duplicates in each genome is the same. Formally, we have $\mg = \mtu$ and,
  for each $\mk \in \mtu$,
  $\occof{A}(\mk)=\occof{B}(\mk)$. Computing the distance for this set of instances
  is NP-hard, and an ILP formulation was given in~\citep{SHA-LIN-MOR-2015}. 
  \item \emph{Quasi-balanced genomes}: the genomes contain exclusive markers and can have duplicates,
  but still the number of duplicates in each genome is the same. Formally, we have $\mg \subseteq \mtu$ and,
  for each $\mk \in \mg$,
  $\occof{A}(\mk)=\occof{B}(\mk)$. Computing the distance for this set of instances
  is also NP-hard, and a branch and bound approach was given in~\citep{YIN-TAN-SCH-BAD-2016}. 
  \item \emph{Natural genomes}: these genomes can have exclusive markers and duplicates, with
  no restrictions on the number of copies. Since these are
  generalizations of balanced genomes, computing the distance for this set of
  instances is also NP-hard.
  In the present work we present an efficient ILP formulation for computing the distance in this case.
\end{enumerate} 

\section{DCJ-indel distance of singular genomes}
  
First we recall the problem when common duplicates do not occur, that is, when we have singular genomes.
We will summarize the linear time approach to compute the DCJ-indel distance in this
case that was presented in~\citep{BRA-WIL-STO-2011}, already adapted to the notation required for presenting the new results of this paper. 

\subsection{Relational diagram}

For computing a genomic distance it is useful to represent the relation between two genomes in some graph structure~\citep{HAN-PEV-1995,BER-MIX-STO-2006,FRI-DAR-YAN-2008,BRA-WIL-STO-2011}. Here we adopt a variation of this structure, defined as follows.
For each marker $\mk$, denote its two extremities by~$\tl{\mk}$ (tail) and~$\hd{\mk}$ (head).
Given two singular genomes~$A$ and~$B$, the \emph{relational diagram} $R(A,B)$ has a set of vertices $V = V(A) \cup V(B)$, where $V(A)$ has a vertex for each extremity of each marker of genome $A$ and $V(B)$ has a vertex for each extremity of each marker of genome $B$.
Due to the 1-to-1 correspondence between the
vertices of~$R(A,B)$ and the occurrences of marker extremities in~$A$ and~$B$, we
can identify each extremity with its corresponding vertex.
It is convenient to represent vertices in $V(A)$ in an upper line, respecting the order in which they appear in each chromosome of $A$, and the vertices in $V(B)$ in a lower line, respecting the order in which they appear in each chromosome of $B$.

If the marker extremities $\gamma_1$ and $\gamma_2$ are adjacent in a
chromosome of $A$, we have an \emph{adjacency edge} connecting them. Similarly,
if the marker extremities $\gamma_1'$ and $\gamma_2'$ are adjacent in a
chromosome of $B$, we have an adjacency edge connecting them.
Marker extremities located at chromosome ends are called \emph{telomeres}
and are not connected to any adjacency edge. 
In contrast, each extremity that is not a telomere is connected to exactly one adjacency edge.
Denote by $\adjedges{A}$ and by $\adjedges{B}$ the adjacency edges in $A$ and in
$B$, respectively.
In addition, for each common marker~$\mk\in\mg$,
we have two \emph{extremity edges}, one connecting the vertex $\hd{\mk}$ from~$V(A)$ to the vertex $\hd{\mk}$ from~$V(B)$ and the other connecting the vertex~$\tl{\mk}$ from $V(A)$ to the vertex $\tl{\mk}$ from~$V(B)$.
Denote by $E_{\ext}$ the set of extremity edges.
Finally, 
for each occurrence of an exclusive marker in~$\mtu\backslash\mg$,
we have an \emph{indel edge} 
connecting the  vertices representing its two extremities.
Denote by $\selfedges{A}$ and by $\selfedges{B}$ the indel edges in $A$ and in
$B$. Each vertex is then connected either to an extremity edge or to an indel edge.

All vertices have degree one or two, therefore~$R(A, B)$ is a simple collection of cycles and paths.
A path that has one endpoint 
in genome~$A$ and the other in genome~$B$ is called an {\em AB-path}. In the same 
way, both endpoints of an {\em AA-path} are in~$A$ and both 
endpoints of
a {\em BB-path} are in~$B$. A cycle contains either zero or an even number of extremity edges. When a cycle has at least two extremity edges, it is called an {\em AB-cycle}.
Moreover, a path (respectively cycle) of~$R(A, B)$ composed exclusively of indel and adjacency edges in one of the two genomes corresponds to a whole linear (respectively circular) chromosome and is called a {\em linear} (respectively {\em circular}) {\em
singleton} in that genome. Actually, linear singletons are particular cases of~$\AA$-paths or~$\BB$-paths. 
An example of a relational diagram for singular is given in Figure~\ref{fig:bp-adj-graph}.


\begin{figure}
  \begin{center}
    \scriptsize
    \setlength{\unitlength}{.7pt}
    \begin{picture}(500,80)

      \put(0,0){\color{blue}
        \qbezier(0,70)(20,40)(40,10)
      }
      
      \dottedline[$\cdot$]{3}(30,70)(45,70)
      \dottedline[$\cdot$]{3}(75,70)(90,70)
      \dottedline[$\cdot$]{3}(295,70)(310,70)
      \qbezier(30,70)(50,40)(70,10)
      \qbezier(90,70)(178,40)(265,10)
      \dottedline[$\cdot$]{3}(70,10)(85,10)
      \dottedline[$\cdot$]{3}(115,10)(130,10)
      \dottedline[$\cdot$]{3}(250,10)(265,10)
      \qbezier(310,70)(220,40)(130,10)
      \drawline(295,70)(250,10)
      \drawline(40,70)(80,70)
      \drawline(80,10)(120,10)
      
      \put(0,0){\color{blue}
        \dottedline[$\cdot$]{3}(120,70)(135,70)
        \qbezier(135,70)(155,40)(175,10)
        \qbezier(120,70)(208,40)(295,10)
        \qbezier(340,70)(250,40)(160,10)
        \dottedline[$\cdot$]{3}(160,10)(175,10)
      }
      \put(0,0){\color{red}
      \drawline(165,70)(205,10)
      \drawline(265,70)(220,10)
      \dottedline[$\cdot$]{3}(205,10)(220,10)
      }
      
      \put(0,0){\color{red}
      \dottedline[$\cdot$]{3}(430,10)(445,10)
      \drawline(400,10)(430,10)
      \drawline(445,10)(475,10)
      }
      
      \put(  0,70){\circle*{5}}
      \put( 30,70){\circle*{5}}
      \put( 45,70){\circle*{5}}
      \put( 75,70){\circle*{5}}
      \put( 90,70){\circle*{5}}
      \put(120,70){\circle*{5}}
      \put(135,70){\circle*{5}}
      \put(165,70){\circle*{5}}
      
      \put(265,70){\circle*{5}}
      \put(295,70){\circle*{5}}
      \put(310,70){\circle*{5}}
      \put(340,70){\circle*{5}}
     
      \put(-40,82){\makebox(0,0){$A$}}
      \put( 0,82){\makebox(0,0){$\tl{\T{1}}$}}
      \put( 30,82){\makebox(0,0){$\hd{\T{1}}$}}
      \put( 45,82){\makebox(0,0){$\hd{\T{6}}$}}
      \put( 75,82){\makebox(0,0){$\tl{\T{6}}$}}
      \put( 90,82){\makebox(0,0){$\tl{\T{5}}$}}
      \put(120,82){\makebox(0,0){$\hd{\T{5}}$}}
      \put(135,82){\makebox(0,0){$\tl{\T{3}}$}}
      \put(165,82){\makebox(0,0){$\hd{\T{3}}$}}
      \put(265,82){\makebox(0,0){$\tl{\T{4}}$}}
      \put(295,82){\makebox(0,0){$\hd{\T{4}}$}}
      \put(310,82){\makebox(0,0){$\tl{\T{2}}$}}
      \put(340,82){\makebox(0,0){$\hd{\T{2}}$}}
      
      \put( 40,10){\circle*{5}}
      \put( 70,10){\circle*{5}}
      \put( 85,10){\circle*{5}}
      \put(115,10){\circle*{5}}
      \put(130,10){\circle*{5}}
      \put(160,10){\circle*{5}}
      \put(175,10){\circle*{5}}
      \put(205,10){\circle*{5}}
      \put(220,10){\circle*{5}}
      \put(250,10){\circle*{5}}
      \put(265,10){\circle*{5}}
      \put(295,10){\circle*{5}}
      
      \put(400,10){\circle*{5}}
      \put(430,10){\circle*{5}}
      \put(445,10){\circle*{5}}
      \put(475,10){\circle*{5}}

      \put(-40,-2){\makebox(0,0){$B$}}
      \put( 40,-2){\makebox(0,0){$\tl{\T{1}}$}}
      \put( 70,-2){\makebox(0,0){$\hd{\T{1}}$}}
      \put( 85,-2){\makebox(0,0){$\tl{\T{7}}$}}
      \put(115,-2){\makebox(0,0){$\hd{\T{7}}$}}
      \put(130,-2){\makebox(0,0){$\tl{\T{2}}$}}
      \put(160,-2){\makebox(0,0){$\hd{\T{2}}$}}
      \put(175,-2){\makebox(0,0){$\tl{\T{3}}$}}
      \put(205,-2){\makebox(0,0){$\hd{\T{3}}$}}
      \put(220,-2){\makebox(0,0){$\tl{\T{4}}$}}
      \put(250,-2){\makebox(0,0){$\hd{\T{4}}$}}
      \put(265,-2){\makebox(0,0){$\tl{\T{5}}$}}
      \put(295,-2){\makebox(0,0){$\hd{\T{5}}$}}
      \put(400,-2){\makebox(0,0){$\tl{\T{7}}$}}
      \put(430,-2){\makebox(0,0){$\hd{\T{7}}$}}
      \put(445,-2){\makebox(0,0){$\hd{\T{8}}$}}
      \put(475,-2){\makebox(0,0){$\tl{\T{8}}$}}
      
    \end{picture}
  \end{center}
  \caption{\label{fig:bp-adj-graph}For genomes $A=\{\tel \T{1}\z\rev{\T{6}}\z\T{5}\z\T{3}\tel, \tel \T{4}\z\T{2}\tel \}$ and $B=\{\tel \T{1}\z\T{7}\z\T{2}\z\T{3}\z\T{4}\z\T{5}\tel, \tel \T{7}\z\rev{\T{8}}\tel\}$, the relational diagram contains one cycle, two $\AB$-paths (represented in blue), one $\AA$-path and one $\BB$-path (both represented in red).
  Short dotted horizontal edges are adjacency edges, long horizontal edges are indel edges, top-down edges are extremity edges.}
\end{figure}


The numbers of telomeres and of $\AB$-paths in $R(A, B)$ are even. 
The \emph{DCJ-cost}~\citep{BRA-WIL-STO-2011} of a DCJ operation $\rho$, denoted by $\|\rho\|$, is defined as follows. 
If it either increases the number of $\AB$-cycles by
one, or the number of $\AB$-paths by two, $\rho$ is \emph{optimal} and has $\|\rho\|=0$. If it does not affect the number of $\AB$-cycles and
$\AB$-paths in the diagram, $\rho$ is \emph{neutral} and has $\|\rho\|=1$. If it either decreases the number of $\AB$-cycles by one, or the number of
$\AB$-paths by two, $\rho$ is \emph{counter-optimal} and has $\|\rho\|=2$.

\subsection{Runs and indel-potential}

The approach that uses DCJ operations to group exclusive markers for minimizing indels depends on the following concepts.

Given two genomes~$A$ and~$B$ and a component~$C$ of~$R(A,B)$, a \emph{run}~\citep{BRA-WIL-STO-2011} is a maximal subpath of~$C$ in which the first and the last edges are indel edges and all indel edges belong to the same genome. 
It can be an $\ma$-run when its indel edges are in genome~$A$, or a $\mb$-run when its indel edges are in genome~$B$. 
We denote by $\Lambda(C)$ the number of runs in component~$C$. If $\Lambda(C)\geq 1$ the component $C$ is said to be \emph{indel-enclosing}, otherwise $\Lambda(C)=0$ and $C$ is said to be \emph{indel-free}.

While sorting components separately with optimal DCJs only, runs can be \emph{merged} (when two runs become a single one), and also \emph{accumulated} together (when all its indel edges alternate with adjacency edges only and the run 
can be inserted or deleted at once)~\citep{BRA-WIL-STO-2011}. 
The \emph{indel-potential} of a component~$C$, denoted by~$\lambda(C)$, is
the minimum number of indels derived from $C$ after this process
and can be directly computed from~$\Lambda(C)$:
\[
\lambda(C) = 
\begin{cases}
~~~~~0\:, & \mbox{ if $\Lambda(C)=0$~~~($C$ is indel-free)}; \\[1mm]
\left\lceil \frac{\Lambda(C)+1}{2}\right\rceil\:, & \mbox{ if $\Lambda(C) \geq 1$~~~($C$ is indel-enclosing)}.
\end{cases}
\] 

Figure~\ref{fig:run} shows a $\BB$-path with 4 runs, and how its indel-potential can be achieved.

\begin{figure}
\begin{center}
\tiny
\setlength{\unitlength}{.6pt}
\begin{tabular}{ccc}
{\bf (i)} &&{\bf (ii)}\\
  \begin{picture}(310,90)

    \drawline(10,70)(10,30)
    \dottedline{1}(10,70)(20,70)
    \put(0,0){\color{red}\drawline(20,70)(40,70)}
    \dottedline{1}(40,70)(50,70)
    \drawline(50,70)(50,30)
    \dottedline{1}(50,30)(60,30)
    \drawline(60,70)(60,30)
    \dottedline{1}(60,70)(70,70)
    \drawline(70,70)(70,30)
    \dottedline{1}(70,30)(80,30)
    \put(0,0){\color{red}\drawline(80,30)(100,30)}
    \dottedline{1}(100,30)(110,30)
    \drawline(110,70)(110,30)
    \dottedline{1}(110,70)(120,70)
    \put(0,0){\color{red}\drawline(120,70)(140,70)}
    \dottedline{1}(140,70)(150,70)
    \drawline(150,70)(150,30)
    \dottedline{1}(150,30)(160,30)
    \drawline(160,70)(160,30)
    \dottedline{1}(160,70)(170,70)
    \put(0,0){\color{red}\drawline(170,70)(190,70)}
    \dottedline{1}(190,70)(200,70)
    \drawline(200,70)(200,30)
    \dottedline{1}(200,30)(210,30)
    \put(0,0){\color{red}\drawline(210,30)(230,30)}
    \dottedline{1}(230,30)(240,30)
    \drawline(240,70)(240,30)
    \dottedline{1}(240,70)(250,70)
    \drawline(250,70)(250,30)
    \dottedline{1}(250,30)(260,30)
    \drawline(260,70)(260,30)
    \dottedline{1}(260,70)(270,70)
    \drawline(270,70)(270,30)
    \dottedline{1}(270,30)(280,30)
    \put(0,0){\color{red}\drawline(280,30)(300,30)}
    
    \put( 10,70){\circle*{3}}
    \put( 20,70){\circle*{3}}
    \put( 40,70){\circle*{3}}
    \put( 50,70){\circle*{3}}
    \put( 60,70){\circle*{3}}
    \put( 70,70){\circle*{3}}
    \put(110,70){\circle*{3}}
    \put(120,70){\circle*{3}}
    \put(140,70){\circle*{3}}
    \put(150,70){\circle*{3}}
    \put(160,70){\circle*{3}}
    \put(170,70){\circle*{3}}
    \put(190,70){\circle*{3}}
    \put(200,70){\circle*{3}}
    \put(240,70){\circle*{3}}
    \put(250,70){\circle*{3}}
    \put(260,70){\circle*{3}}
    \put(270,70){\circle*{3}}
    
    \put( 30,82){\makebox(0,0){$e_1$}}
    \put(130,82){\makebox(0,0){$e_3$}}
    \put(115,70){\makebox(0,0){$\backslash$}}
    \put(180,82){\makebox(0,0){$e_4$}}
    \put(195,70){\makebox(0,0){$/$}}
    \put(270,82){\makebox(0,0){}}
    \put(330,82){\makebox(0,0){}}
    
    \put( 10,30){\circle*{3}}
    \put( 50,30){\circle*{3}}
    \put( 60,30){\circle*{3}}
    \put( 70,30){\circle*{3}}
    \put( 80,30){\circle*{3}}
    \put(100,30){\circle*{3}}
    \put(110,30){\circle*{3}}
    \put(150,30){\circle*{3}}
    \put(160,30){\circle*{3}}
    \put(200,30){\circle*{3}}
    \put(210,30){\circle*{3}}
    \put(230,30){\circle*{3}}
    \put(240,30){\circle*{3}}
    \put(250,30){\circle*{3}}
    \put(260,30){\circle*{3}}
    \put(270,30){\circle*{3}}
    \put(280,30){\circle*{3}}
    \put(300,30){\circle*{3}}

    \put(30,-2){\makebox(0,0){$\underbrace{~~~}_{\mathcal
    A\mbox{-run}}$}} 
    \put(90,2){\makebox(0,0){$\underbrace{~e_2~}_{\mathcal
    B\mbox{-run}}$}}
    \put(155,-4){\makebox(0,0){$\underbrace{~~~~~~~~~~~~~~~~}_{\mathcal
    A\mbox{-run}}$}} 
    \put(257,2){\makebox(0,0){$\underbrace{e_5~~~~~~~~~~~~~e_6}_{\mathcal
    B\mbox{-run}}$}}
    
  \end{picture}  &
    \begin{picture}(100,90)
     \put(50,60){\makebox(0,0){$\rightarrow$}}
     \put(50,40){\makebox(0,0){optimal}}
     \put(50,20){\makebox(0,0){DCJ}}
    \end{picture} &
    \begin{picture}(310,90)
    
    \drawline(10,70)(10,30)
    \dottedline{1}(10,70)(20,70)
    \put(0,0){\color{red}\drawline(20,70)(40,70)}
    \dottedline{1}(40,70)(50,70)
    \drawline(50,70)(50,30)
    \dottedline{1}(50,30)(60,30)
    \drawline(60,70)(60,30)
    \dottedline{1}(60,70)(70,70)
    \drawline(70,70)(70,30)
    \dottedline{1}(70,30)(80,30)
    \put(0,0){\color{red}\drawline(80,30)(100,30)}
    \dottedline{1}(100,30)(110,30)
    \drawline(110,70)(110,30)
    \dottedline{1}(110,70)(120,70)
    \drawline(120,70)(120,30)
    \dottedline{1}(120,30)(130,30)
    \put(0,0){\color{red}\drawline(130,30)(150,30)}
    \dottedline{1}(150,30)(160,30)
    \drawline(160,70)(160,30)
    \dottedline{1}(160,70)(170,70)
    \drawline(170,70)(170,30)
    \dottedline{1}(170,30)(180,30)
    \drawline(180,70)(180,30)
    \dottedline{1}(180,70)(190,70)
    \drawline(190,70)(190,30)
    \dottedline{1}(190,30)(200,30)
    \put(0,0){\color{red}\drawline(200,30)(220,30)}
    
    \put( 10,70){\circle*{3}}
    \put( 20,70){\circle*{3}}
    \put( 40,70){\circle*{3}}
    \put( 50,70){\circle*{3}}
    \put( 60,70){\circle*{3}}
    \put( 70,70){\circle*{3}}
    \put(110,70){\circle*{3}}
    \put(120,70){\circle*{3}}
    \put(160,70){\circle*{3}}
    \put(170,70){\circle*{3}}
    \put(180,70){\circle*{3}}
    \put(190,70){\circle*{3}}
    
    \put( 30,82){\makebox(0,0){$e_1$}}
    \put(150,82){\makebox(0,0){}}
    \put(210,82){\makebox(0,0){}}
    \put(270,82){\makebox(0,0){}}
    
    \put( 10,30){\circle*{3}}
    \put( 50,30){\circle*{3}}
    \put( 60,30){\circle*{3}}
    \put( 70,30){\circle*{3}}
    \put( 80,30){\circle*{3}}
    \put(100,30){\circle*{3}}
    \put(110,30){\circle*{3}}
    \put(115,70){\circle{7}}
    \put(120,30){\circle*{3}}
    \put(130,30){\circle*{3}}
    \put(150,30){\circle*{3}}
    \put(160,30){\circle*{3}}
    \put(170,30){\circle*{3}}
    \put(180,30){\circle*{3}}
    \put(190,30){\circle*{3}}
    \put(200,30){\circle*{3}}
    \put(220,30){\circle*{3}}
    
    \put(30,-2){\makebox(0,0){$\underbrace{~~~}_{\mathcal
    A\mbox{-run}}$}} 
 
    \put(150,2){\makebox(0,0){$\underbrace{e_2~~\;~~~~~~e_5~~~~~~~~~~~~~~e_6}_{\mathcal
    B\mbox{-run}}$}}
    
    \put(0,0){\color{red}\drawline(250,70)(270,70)}
    \put(0,0){\color{red}\drawline(280,70)(300,70)}
    \qbezier(240,70)(255,50)(270,30)
    \qbezier(310,70)(295,50)(280,30)
    \dottedline{1}(240,70)(250,70)
    \dottedline{1}(270,70)(280,70)
    \dottedline{1}(300,70)(310,70)
    \dottedline{1}(270,30)(280,30)
    
    \put(240,70){\circle*{3}}
    \put(250,70){\circle*{3}}
    \put(270,70){\circle*{3}}
    \put(275,70){\circle{7}}
    \put(280,70){\circle*{3}}
    \put(300,70){\circle*{3}}
    \put(310,70){\circle*{3}}
    \put(270,30){\circle*{3}}
    \put(280,30){\circle*{3}}
    \put(260,85){\makebox(0,0){$e_4$}}
    \put(290,85){\makebox(0,0){$e_3$}}

    \put(275,-4){\makebox(0,0){$\underbrace{~~~~~~~~~~~}_{\ma\mbox{-run}}$}}
    \end{picture}
  \end{tabular}
\end{center}
\caption{\label{fig:run}(i) A $\BB$-path with 4 runs. 
(ii) After an optimal DCJ that creates a new cycle, one $\ma$-run is accumulated (between edges $e_4$ and $e_3$ there is only an adjacency edge) and two $\mb$-runs are merged ($e_2$ is in the same run with $e_5$ and $e_6$).
Indeed, the indel-potential of the original $\BB$-path is three.}
\end{figure}

The indel-potential allows to state an upper bound for the DCJ-indel distance:
 
\begin{lemma}[from~\citep{BRA-WIL-STO-2011,BER-MIX-STO-2006}]\label{lem:distance-upper}
Given two singular
genomes $A$ and $B$, whose relational diagram $R(A,B)$ has $c$ $\AB$-cycles and $i$ $\AB$-paths, we have 

\[
\ddcjid(A,B) \;\leq\; |\mg| - c -
\frac{i}{2}~+ \!\!\!\! \sum_{C \in R(A,B)} \!\!\!\!\!\! \lambda(C)\,.
\]
\end{lemma}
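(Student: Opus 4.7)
The plan is to construct an explicit sorting scenario—a sequence of DCJ and indel operations—whose total length equals the right-hand side. I would sort each component $C$ of $R(A,B)$ in isolation, never performing a DCJ that joins or splits distinct components, and then sum contributions.

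First, I would account for the DCJ cost. It is a standard consequence of the singular DCJ theory~\citep{BER-MIX-STO-2006} that, on the common-marker skeleton alone (ignoring indel edges), each component $C$ contributing $c(C)$ to $c$ and $i(C)$ to $i$ can be sorted into trivial pieces by exactly $n(C) - c(C) - i(C)/2$ optimal DCJs, where $n(C)$ counts the common markers represented in $C$. Summed over all components this yields the $|\mg| - c - i/2$ term.

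Next, I would handle the indels. An indel-free component contributes nothing. For an indel-enclosing component $C$ with $\Lambda(C) \geq 1$ runs, I would interleave the sorting with optimal DCJs (cost $0$ each) that either \emph{merge} two runs in the same genome into one, or \emph{accumulate} a run by isolating it as a chain of indel and adjacency edges (a circular singleton or trivial path), exactly as illustrated in Figure~\ref{fig:run}. After at most $\Lambda(C) - \lambda(C)$ such steps the component decomposes into $\lambda(C)$ indel-enclosing pieces, each clearable by a single indel operation. Crucially, these merging and accumulating DCJs are themselves optimal, hence they do not charge anything beyond what is already counted in the $|\mg| - c - i/2$ budget.

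The main technical obstacle is verifying that the two goals can be pursued simultaneously: the sequence of optimal DCJs used to sort the common-marker skeleton of $C$ must be choosable so that, whenever two same-genome runs coexist in $C$, some step along the way merges them, and whenever a run can be isolated, some step accumulates it. I would establish this by a case analysis on the four component types ($\AB$-cycle, $\AB$-path, $\AA$-path, $\BB$-path), combined with an induction on $\Lambda(C)$, exhibiting at each step a suitable optimal DCJ that advances both the skeleton sorting and the run bookkeeping. Once this compatibility is settled, summing the DCJ and indel counts across all components produces the claimed inequality.
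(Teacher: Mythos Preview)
The paper does not prove Lemma~\ref{lem:distance-upper}; it is cited from \citep{BRA-WIL-STO-2011,BER-MIX-STO-2006} and only the underlying intuition is sketched in the preceding paragraph (optimal DCJs can merge and accumulate runs so that each component~$C$ is left with $\lambda(C)$ indels). Your proposal expands exactly this sketch into a proof plan, so it is aligned with the approach of the original reference and with what the present paper indicates.

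One imprecision worth tightening: your per-component accounting ``$n(C)-c(C)-i(C)/2$ optimal DCJs'' is not literally well-defined, since the two extremity edges of a common marker may lie in different components (so $n(C)$ is not a count of whole markers), and for $\AA$- and $\BB$-paths both $c(C)$ and $i(C)$ vanish even though DCJs are still required. The clean statement is global: each optimal DCJ raises $c+\tfrac{i}{2}$ by one, and the sorted configuration attains $c+\tfrac{i}{2}=|\mg|$, so $|\mg|-c-\tfrac{i}{2}$ optimal DCJs suffice in total. The per-component argument you actually need is that, within a single component, one can always find an optimal DCJ that simultaneously reduces the run count in the way required by the recurrence $\lambda(C)=\lceil(\Lambda(C)+1)/2\rceil$; this is precisely the case analysis you outline and is how \citep{BRA-WIL-STO-2011} proceeds.
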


Let $\lambda_0$ and $\lambda_1$ be,
respectively, the sum of the indel-potentials for the components of the
relational diagram before and after a DCJ $\rho$.
The \emph{indel-cost} of $\rho$ is then $\Delta \lambda(\rho) = \lambda_1
- \lambda_0$, and the DCJ-indel cost of $\rho$ is defined as
$\Delta d(\rho) = \|\rho\| + \Delta \lambda(\rho)$.
While sorting components separately, 
it has been shown that by using neutral or counter-optimal DCJs one can 
never achieve $\Delta d < 0$, therefore we cannot decrease the upper bound stated above with DCJ operations that act on a single component of the diagram~\citep{BRA-WIL-STO-2011}. 

\subsection{Distance of circular genomes}

For singular circular genomes, the diagram $R(A,B)$ is composed of cycles only. 
In this case the upper bound given by Lemma~\ref{lem:distance-upper} is tight and leads to a simplified formula~\citep{BRA-WIL-STO-2011}:
\[
\ddcjid(A,B) \;=\; |\mg| - c~+ \!\!\!\! \sum_{C \in R(A,B)} \!\!\!\!\!\! \lambda(C)\,.
\]

\subsection{Recombinations and linear genomes}

For singular linear genomes,
the upper bound given by Lemma~\ref{lem:distance-upper} is achieved
when the components of $R(A,B)$ are sorted separately. However, there is another type of DCJ operation, called {\em recombination}, whose cuts are applied on two distinct components. These two components are called {\em sources}, while the components obtained after the joinings are called {\em resultants}. In particular, some recombinations whose both sources are paths have $\Delta d<0$ and are then said to be \emph{deducting}.
The total number of types of deducting recombinations is relatively small. By exhaustively exploring the space of recombination types, it is possible to identify groups of chained recombinations, 
so that the sources of each group are the original paths of the diagram. In other words, a path that is a resultant of a group is never a source of another group.
This results in a greedy approach that optimally finds the value to be deducted, as we will describe in the following

\paragraph{Deducting recombinations.}

Any recombination whose sources are an $\AA$-path and a $\BB$-path is optimal.
A recombination whose sources are two different $\AB$-paths can be either neutral, when the resultants are also $\AB$-paths, or counter-optimal, when the resultants are an $\AA$-path and a $\BB$-path.
Any recombination whose sources are an $\AB$-path and an $\AA$- or a $\BB$-path is neutral~\citep{BRA-STO-2010,BRA-WIL-STO-2011}.

Let $\ma$ (respectively
$\mb$) be a sequence with an odd ($\geq1$) number of runs, starting and ending
with an $\ma$-run (respectively $\mb$-run). We can then make any combination of
$\ma$ and $\mb$, such as $\mab$, that is a sequence with an even ($\geq2$)
number of runs, starting with an $\ma$-run and ending with a $\mb$-run. An
empty sequence (with no run) is represented by $\no$. Then each one of the
notations $\AAno$, $\AAaa$, $\AAbb$,
$\AAab \!\equiv\! \AAba$, $\BBno$,
$\BBaa$, $\BBbb$, $\BBab \!\equiv\!
\BBba$, $\ABno$, $\ABaa$, $\ABbb$,
$\ABab$ and $\ABba$ represents a particular type of path
($\AA$, $\BB$ or $\AB$) with a particular structure of runs ($\no$, $\ma$,
$\mb$, $\mab$ or $\mba$). By convention, an $\AB$-path is always read from $A$
to $B$. These notations were adopted due to the observation that, besides the DCJ
type of the recombination (optimal, neutral or counter-optimal), the only
properties that matter are whether the paths have an odd or an even number
of runs and whether the first run is in genome~$A$ or in genome~$B$~\citep{BRA-WIL-STO-2011}. An example of a deducting recombination is given in Figure~\ref{fig:recomb}. 

\begin{figure}
\begin{center}

\scriptsize
\setlength{\unitlength}{.75pt}
\begin{tabular}{ccc}

{\bf (i) Sources ($\sum\lambda = 2+2=4$)} &&~~{\bf (ii) Resultants ($\sum\lambda = 2$)}\\
{~~~$\AAab$~~+~~$\BBab$} & & {~~~~~~~$\ABbb$~~~~~\,+~~~~~~~$\ABno$}\\
{~~~2 runs~~+~~2 runs} & & {~~~~~~~3 runs~~~~~~+~~~~~~no run}\\
\begin{picture}(200,100)

    \drawline(10,70)(10,30)
    \dottedline{1}(10,30)(20,30)
    \dottedline{1}(40,30)(50,30)
    \drawline(50,70)(50,30)
    \dottedline{1}(50,70)(60,70)
        {\color{red}
          \drawline(60,70)(80,70)
          \drawline(130,70)(150,70)
          \drawline(20,30)(40,30)
          \drawline(170,30)(190,30)
        }
        
        \drawline(120,70)(120,30)
        \dottedline{1}(120,70)(130,70)
        \dottedline{1}(150,70)(160,70)
        \drawline(160,70)(160,30)
        \dottedline{1}(160,30)(170,30)
        
        \put( 10,70){\circle*{3}}
        \put( 50,70){\circle*{3}}
        \put( 60,70){\circle*{3}}
        \put( 80,70){\circle*{3}}
        \put( 70,80){\makebox(0,0){$e_2$}}
        \put( 80,70){\makebox(0,0){$/$}}
        
        \put(120,70){\circle*{3}}
        \put(130,70){\circle*{3}}
        \put(150,70){\circle*{3}}
        \put(160,70){\circle*{3}}
        \put(140,80){\makebox(0,0){$e_3$}}
        \put(125,70){\makebox(0,0){$\backslash$}}
        
        \put( 10,30){\circle*{3}}
        \put( 20,30){\circle*{3}}
        \put( 40,30){\circle*{3}}
        \put( 50,30){\circle*{3}}
        \put( 30,18){\makebox(0,0){$e_1$}}
        \put(120,30){\circle*{3}}
        \put(160,30){\circle*{3}}
        \put(170,30){\circle*{3}}
        \put(190,30){\circle*{3}}
        \put(180,18){\makebox(0,0){$e_4$}}
        
\end{picture} &
\begin{picture}(100,100)
     \put(50,80){\makebox(0,0){$\rightarrow$}}
     \put(50,60){\makebox(0,0){optimal}}
     \put(50,40){\makebox(0,0){DCJ}}
    \end{picture} &
    \begin{picture}(200,100)
    \drawline(10,70)(10,30)
    \dottedline{1}(10,30)(20,30)
    \dottedline{1}(40,30)(50,30)
    \drawline(50,70)(50,30)
    \dottedline{1}(50,70)(60,70)
    \dottedline{1}(80,70)(90,70)
        {\color{red}
        \drawline(20,30)(40,30)
          \drawline(60,70)(80,70)
          \drawline(90,70)(110,70)
          \drawline(130,30)(150,30)
        }
        \dottedline{1}(110,70)(120,70)
        \drawline(120,70)(120,30)
        \dottedline{1}(120,30)(130,30)

        \put( 10,70){\circle*{3}}
        \put( 50,70){\circle*{3}}
        \put( 60,70){\circle*{3}}
        \put( 80,70){\circle*{3}}
        \put( 85,70){\circle{7}}
        \put( 90,70){\circle*{3}}
        \put( 70,80){\makebox(0,0){$e_2$}}
        \put( 90,70){\circle*{3}}
        \put(110,70){\circle*{3}}
        \put(120,70){\circle*{3}}
        \put(100,80){\makebox(0,0){$e_3$}}
        \put( 10,30){\circle*{3}}
        \put( 20,30){\circle*{3}}
        \put( 40,30){\circle*{3}}
        \put( 50,30){\circle*{3}}
        \put( 30,18){\makebox(0,0){$e_1$}}
        
        \put(120,30){\circle*{3}}
        \put(130,30){\circle*{3}}
        \put(150,30){\circle*{3}}
        \put(140,18){\makebox(0,0){$e_4$}}
        
        \drawline(190,70)(190,30)
        \put(190,70){\circle*{3}}
        \put(190,70){\circle{7}}
        \put(190,30){\circle*{3}}
     \end{picture}
\end{tabular}
\end{center}
\vspace{-7mm}
\caption{\label{fig:recomb}An optimal recombination with $\Delta d=\Delta\lambda=-2$.}
\end{figure}

The complete set of path recombinations with $\Delta d \leq -1$ is given in Table~\ref{tab:recomb}. 
In Table~\ref{tab:recomb-0} we also list recombinations with $\Delta d =0$ that create at least one source of recombinations of Table~\ref{tab:recomb}.
We denote by $\ABend$ an $\AB$-path that can not be a source of a recombination in Tables~\ref{tab:recomb} and ~\ref{tab:recomb-0}, such as $\ABno$, $\ABaa$ and $\ABbb$.

\medskip

\begin{table}
\caption{Path recombinations that have $\Delta d \leq -1$ and allow the best
reuse of the resultants.}
\label{tab:recomb}
\begin{center}
\scriptsize
\setlength{\tabcolsep}{2pt}
\begin{tabular}{ccc@{~~~~~~~~}ccc@{~~~~~~~~}r@{~~~~}c@{~~~~}r}
\hline
\multicolumn{3}{l}{\bf ~~~~~sources} & \multicolumn{3}{l}{\bf resultants}
& \boldmath$\Delta \lambda$ & \boldmath$\|\rho\|$ & \boldmath$\Delta d$ \\
\hline
\hline
$\AAab$ &+ &$\BBab$ & $\ABend$ &+ &$\ABend$
& $-2$ & $0$ & $-2$\\ 
\hline
\hline
$\AAab$ &+ &$\BBaa$ & $\ABend$ &+ &$\ABba$
& $-1$ & $0$ & $-1$\\  
$\AAab$ &+ &$\BBbb$ & $\ABend$ &+ &$\ABab$
& $-1$ & $0$ & $-1$\\ 
\hline
\hline
$\AAaa$ &+ &$\BBab$ & $\ABend$ &+ &$\ABab$
& $-1$ & $0$ & $-1$\\ 
$\AAbb$ &+ &$\BBab$ & $\ABend$ &+ &$\ABba$
& $-1$ & $0$ & $-1$\\ 
\hline
\hline
$\AAaa$ &+ &$\BBaa$ & $\ABend$ &+ &$\ABend$
& $-1$ & $0$ & $-1$\\
$\AAbb$ &+ &$\BBbb$ & $\ABend$ &+ &$\ABend$
& $-1$ & $0$ & $-1$\\
\hline
\end{tabular} 
\hspace{8mm}
\setlength{\tabcolsep}{2pt}
\begin{tabular}{ccc@{~~~~~~~~}ccc@{~~~~~~~~}r@{~~~~}c@{~~~~}r}
\hline
\multicolumn{3}{l}{\bf ~~~~~sources} & \multicolumn{3}{l}{\bf resultants}
& \boldmath$\Delta \lambda$ & \boldmath$\|\rho\|$ & \boldmath$\Delta d$ \\
\hline
\hline
$\AAab$ &+ &$\AAab$ & $\AAaa$ &+ &$\AAbb$
& $-2$ & $+1$ & $-1$\\ 
$\BBab$ &+ &$\BBab$ & $\BBaa$ &+ &$\BBbb$
& $-2$ & $+1$ & $-1$\\
\hline 
\hline
$\AAab$ &+ &$\ABab$ & $\ABend$ &+ &$\AAaa$
& $-2$ & $+1$ & $-1$\\ 
$\AAab$ &+ &$\ABba$ & $\ABend$ &+ &$\AAbb$
& $-2$ & $+1$ & $-1$\\
\hline
\hline
$\BBab$ &+ &$\ABab$ & $\ABend$ &+ &$\BBbb$
& $-2$ & $+1$ & $-1$\\
$\BBab$ &+ &$\ABba$ & $\ABend$ &+ &$\BBaa$
& $-2$ & $+1$ & $-1$\\
\hline
\hline
$\ABab$ &+ &$\ABba$ & $\ABend$ &+ &$\ABend$
& $-2$ & $+1$ & $-1$\\ 
\hline
\end{tabular}
\end{center}
\end{table}

\begin{table}
\caption{Path recombinations with $\Delta d =0$ creating 
resultants that can be used in recombinations with $\Delta d \leq -1$.}
\label{tab:recomb-0}
\begin{center}
\scriptsize
\setlength{\tabcolsep}{2pt}
\begin{tabular}{ccc@{~~~~~~~~}ccc@{~~~~~~~~}r@{~~~~}c@{~~~~}r}
\hline
\multicolumn{3}{l}{\bf ~~~~sources} & \multicolumn{3}{l}{\bf resultants}
& \boldmath$\Delta \lambda$ & \boldmath$\|\rho\|$ & \boldmath$\Delta d$ \\
\hline
\hline
$\AAaa$ &+ &$\ABba$ & $\ABend$ &+ &$\AAab$
& $-1$ & $+1$ & $0$\\ 
$\AAbb$ &+ &$\ABab$ & $\ABend$ &+ &$\AAab$
& $-1$ & $+1$ & $0$\\ 
\hline
\hline
$\BBaa$ &+ &$\ABab$ & $\ABend$ &+ &$\BBab$
& $-1$ & $+1$ & $0$\\ 
$\BBbb$ &+ &$\ABba$ & $\ABend$ &+ &$\BBab$
& $-1$ & $+1$ & $0$\\ 
\hline
\end{tabular}
\hspace{8mm}
\setlength{\tabcolsep}{2pt}
\begin{tabular}{ccc@{~~~~~~~~}ccc@{~~~~~~~~}r@{~~~~}c@{~~~~}r}
\hline
\multicolumn{3}{l}{\bf ~~~~sources} & \multicolumn{3}{l}{\bf resultants}
& \boldmath$\Delta \lambda$ & \boldmath$\|\rho\|$ & \boldmath$\Delta d$ \\
\hline
\hline
$\AAaa$ &+ &$\BBbb$ & $\ABend$ &+ &$\ABab$
& $0$ & $0$ & $0$\\ 
$\AAbb$ &+ &$\BBaa$ & $\ABend$ &+ &$\ABba$
& $0$ & $0$ & $0$\\ 
\hline
\hline
$\ABab$ &+ &$\ABab$ & $\AAaa$ &+ &$\BBbb$
& $-2$ & $+2$ & $0$\\ 
$\ABba$ &+ &$\ABba$ & $\AAbb$ &+ &$\BBaa$
& $-2$ & $+2$ & $0$\\ 
\hline
\end{tabular}
\end{center}
\end{table}

The two sources of a recombination can also be called {\em partners}.
Looking at Table~\ref{tab:recomb} we observe that all partners of $\ABab$- and $\ABba$-paths are also partners of $\AAab$- and $\BBab$-paths, all partners of $\AAaa$- and $\AAbb$-paths are also partners of $\AAab$-paths and all partners of $\BBaa$- and $\BBbb$-paths are also partners of $\BBab$-paths.
Moreover, in some cases deducting recombinations are \emph{chained}, that is, resultants from deducting recombinations in Tables~\ref{tab:recomb} and~\ref{tab:recomb-0} are sources of other deducting recombinations, as shown in Figure~\ref{fig:chained-recomb}.
These observations allow the identification of groups of chained recombinations, as listed in Table~\ref{tab:recomb-group}.  

\begin{figure}
\begin{center}

\tiny
\setlength{\unitlength}{.55pt}
\begin{tabular}{cc}
\begin{tabular}{cccc}

~~~{~~~$\AAab$~~+~~$\BBaa$} & & {$\ABno$~~~~+~~~$\ABba$~~~~~~~}\\
~~~{~~~2 runs~~~+~~1 run} & & {no run~+~~~2 runs~~~~~~~}\\
\begin{picture}(140,100)
 
        \put(0,0){\color{red}
          \drawline(50,70)(70,70)
          \drawline(110,70)(130,70)
          \drawline(10,30)(30,30)
        }

        \dottedline{1}(0,30)(10,30)
        \dottedline{1}(30,30)(40,30)
        \drawline(0,70)(0,30)
        \drawline(40,70)(40,30)
        \dottedline{1}(40,70)(50,70)
        \drawline(100,70)(100,30)
        \dottedline{1}(100,70)(110,70)
        \dottedline{1}(130,70)(140,70)
        \drawline(140,70)(140,30)
        
        \put(  0,70){\circle*{3}}
        \put( 40,70){\circle*{3}}
        \put( 50,70){\circle*{3}}
        \put( 70,70){\circle*{3}}
        \put( 60,80){\makebox(0,0){$e_2$}}
        \put( 70,70){\makebox(0,0){$/$}}
        
        \put(100,70){\circle*{3}}
        \put(110,70){\circle*{3}}
        \put(130,70){\circle*{3}}
        \put(140,70){\circle*{3}}
        \put(120,80){\makebox(0,0){$e_3$}}
        \put(105,70){\makebox(0,0){$\backslash$}}
        
        \put(  0,30){\circle*{3}}
        \put( 10,30){\circle*{3}}
        \put( 30,30){\circle*{3}}
        \put( 40,30){\circle*{3}}
        \put( 20,18){\makebox(0,0){$e_1$}}
        \put(100,30){\circle*{3}}
        \put(140,30){\circle*{3}}
\end{picture} &
\begin{picture}(60,100)
     \put(30,80){\makebox(0,0){$\rightarrow$}}
     \put(30,60){\makebox(0,0){optimal}}
     \put(30,40){\makebox(0,0){DCJ}}
    \end{picture} &
    \begin{picture}(160,100)
        \drawline(20,70)(20,30)
        \put(20,70){\circle*{3}}
        \put(20,70){\circle{7}}
        \put(20,30){\circle*{3}}
        
        \put(0,0){\color{red}
          \drawline(100,70)(120,70)
          \drawline(130,70)(150,70)
          \drawline(60,30)(80,30)
        }
        
        \drawline(50,70)(50,30)
        \dottedline{1}(50,30)(60,30)
        \dottedline{1}(80,30)(90,30)
        \drawline(90,70)(90,30)
        \dottedline{1}(90,70)(100,70)
        \dottedline{1}(120,70)(130,70)
        \dottedline{1}(150,70)(160,70)
        \drawline(160,70)(160,30)
        
        \put( 50,70){\circle*{3}}
        \put( 90,70){\circle*{3}}
        \put(100,70){\circle*{3}}
        \put(110,80){\makebox(0,0){$e_2$}}
        \put(120,70){\circle*{3}}
        \put(130,70){\circle*{3}}
        \put(125,70){\circle{7}}
        \put(140,80){\makebox(0,0){$e_3$}}
        \put(150,70){\circle*{3}}
        \put(160,70){\circle*{3}}
        \put(155,70){\makebox(0,0){$/$}}
        
        \put( 50,30){\circle*{3}}
        \put( 60,30){\circle*{3}}
        \put( 80,30){\circle*{3}}
        \put( 90,30){\circle*{3}}
        \put( 70,18){\makebox(0,0){$e_1$}}
        \put(160,30){\circle*{3}}
        
     \end{picture}&
\begin{picture}(80,100)
     \put(40,40){\makebox(0,0){$\searrow$}}
    \end{picture}\\
     {~~~$\AAab$~~+~~$\BBbb$} & & {$\ABno$~~~~+~~~$\ABab$~~~~~~~}&~~~neutral~$ _\rightarrow$\\
{~~~2 runs~~+~~1 run} & & {no run~+~~~2 runs~~~~~~~}&DCJ~~\\
     \begin{picture}(140,100)
     \put(0,0){\color{red}
          \drawline(50,70)(70,70)
          \drawline(10,30)(30,30)
        }

        \dottedline{1}(0,30)(10,30)
        \dottedline{1}(30,30)(40,30)
        \drawline(0,70)(0,30)
        \drawline(40,70)(40,30)
        \dottedline{1}(40,70)(50,70)
        \drawline(100,70)(100,30)
        \dottedline{1}(100,70)(110,70)
        \drawline(110,70)(110,30)
        \dottedline{1}(110,30)(120,30)
        
        \put(  0,70){\circle*{3}}
        \put( 40,70){\circle*{3}}
        \put( 50,70){\circle*{3}}
        \put( 70,70){\circle*{3}}
        \put( 60,80){\makebox(0,0){$e_5$}}
        \put(  0,70){\makebox(0,0){$/$}}
        
        \put(100,70){\circle*{3}}
        \put(110,70){\circle*{3}}
        \put(105,70){\makebox(0,0){$\backslash$}}
        
        \put(  0,30){\circle*{3}}
        \put( 10,30){\circle*{3}}
        \put( 30,30){\circle*{3}}
        \put( 40,30){\circle*{3}}
        \put( 20,18){\makebox(0,0){$e_4$}}
        \put(100,30){\circle*{3}}
        \put(110,30){\circle*{3}}
        \put(120,30){\circle*{3}}
        \put(140,30){\circle*{3}}
     
        \put(0,0){\color{red}
          \drawline(120,30)(140,30)
        }
        
        \put(130,18){\makebox(0,0){$e_6$}}
        
\end{picture} &
\begin{picture}(60,100)
     \put(30,80){\makebox(0,0){$\rightarrow$}}
     \put(30,60){\makebox(0,0){optimal}}
     \put(30,40){\makebox(0,0){DCJ~}}
    \end{picture} &
    \begin{picture}(160,100)
        \drawline(20,70)(20,30)
        \put(20,70){\circle*{3}}
        \put(20,70){\circle{7}}
        \put(20,30){\circle*{3}}
    
        \put(0,0){\color{red}
          \drawline(50,70)(70,70)
          \drawline(90,30)(110,30)
          \drawline(140,30)(160,30)
        }
        
        \dottedline{1}(70,70)(80,70)
        \drawline(80,70)(80,30)
        \dottedline{1}(80,30)(90,30)
        \dottedline{1}(110,30)(120,30)
        \drawline(120,70)(120,30)
        \dottedline{1}(120,70)(130,70)
        \drawline(130,70)(130,30)
        \dottedline{1}(130,30)(140,30)
        
        \put( 50,70){\circle*{3}}
        \put( 70,70){\circle*{3}}
        \put( 80,70){\circle*{3}}
        \put( 60,80){\makebox(0,0){$e_5$}}
        \put( 50,70){\makebox(0,0){$\backslash$}}
        \put(120,70){\circle*{3}}
        \put(130,70){\circle*{3}}
        \put(125,70){\circle{7}}
        
        \put( 80,30){\circle*{3}}
        \put( 90,30){\circle*{3}}
        \put(110,30){\circle*{3}}
        \put(120,30){\circle*{3}}
        \put(100,18){\makebox(0,0){$e_4$}}
        \put(130,30){\circle*{3}}
        \put(140,30){\circle*{3}}
        \put(160,30){\circle*{3}}
        \put(150,18){\makebox(0,0){$e_6$}}

     \end{picture}&
\begin{picture}(80,100)
     \put(40,80){\makebox(0,0){$\nearrow$}}
    \end{picture}
\end{tabular} &
\begin{tabular}{c}
\hfill\\[1mm]
{$\ABno$~~~~~+~~~~~~$\ABbb$~~~~~~~~~~~~~~~~~~~~~~}\\
{no run~~~~+~~~~~~3 runs~~~~~~~~~~~~~~~~~~~~~~}\\
    \begin{picture}(270,100)
        \drawline(20,70)(20,30)
        \put(20,70){\circle*{3}}
        \put(20,70){\circle{7}}
        \put(20,30){\circle*{3}}
    
        \put(0,0){\color{red}
          \drawline(100,70)(120,70)
          \drawline(130,70)(150,70)
          \drawline(60,30)(80,30)
        }
        
        \drawline(50,70)(50,30)
        \dottedline{1}(50,30)(60,30)
        \dottedline{1}(80,30)(90,30)
        \drawline(90,70)(90,30)
        \dottedline{1}(90,70)(100,70)
        \dottedline{1}(120,70)(130,70)
        \dottedline{1}(150,70)(160,70)
        
        \put( 50,70){\circle*{3}}
        \put( 90,70){\circle*{3}}
        \put(100,70){\circle*{3}}
        \put(110,80){\makebox(0,0){$e_2$}}
        \put(120,70){\circle*{3}}
        \put(130,70){\circle*{3}}
        \put(140,80){\makebox(0,0){$e_3$}}
        \put(150,70){\circle*{3}}
        \put(155,70){\circle{7}}
        
        \put( 50,30){\circle*{3}}
        \put( 60,30){\circle*{3}}
        \put( 80,30){\circle*{3}}
        \put( 90,30){\circle*{3}}
        \put( 70,18){\makebox(0,0){$e_1$}}
    
        \put(0,0){\color{red}
          \drawline(160,70)(180,70)
          \drawline(200,30)(220,30)
          \drawline(250,30)(270,30)
        }
        
        \dottedline{1}(180,70)(190,70)
        \drawline(190,70)(190,30)
        \dottedline{1}(190,30)(200,30)
        \dottedline{1}(220,30)(230,30)
        \drawline(230,70)(230,30)
        \dottedline{1}(230,70)(240,70)
        \drawline(240,70)(240,30)
        \dottedline{1}(240,30)(250,30)
        
        \put(160,70){\circle*{3}}
        \put(180,70){\circle*{3}}
        \put(190,70){\circle*{3}}
        \put(170,80){\makebox(0,0){$e_5$}}
        \put(230,70){\circle*{3}}
        \put(240,70){\circle*{3}}
        
        \put(190,30){\circle*{3}}
        \put(200,30){\circle*{3}}
        \put(220,30){\circle*{3}}
        \put(230,30){\circle*{3}}
        \put(210,18){\makebox(0,0){$e_4$}}
        \put(240,30){\circle*{3}}
        \put(250,30){\circle*{3}}
        \put(270,30){\circle*{3}}
        \put(260,18){\makebox(0,0){$e_6$}}
        
     \end{picture}
\end{tabular}
\end{tabular}
\end{center}
\caption{\label{fig:chained-recomb}Chained recombinations transforming four paths ($2\times \AAab + \BBaa + \BBbb$) into four other paths ($3 \times \ABno + \ABbb$) with overall $\Delta d=-3$.}
\end{figure}

Each group is represented by a combination of letters, where:
\begin{itemize}
\item $\W$ represents an $\AAab$, $\WA$ represents an $\AAaa$ and $\WB$ represents an $\AAbb$;
\item $\M$ represents a $\BBab$, $\MA$ represents a $\BBaa$ and $\MB$ represents a $\BBbb$;
\item$\Z$ represents an $\ABab$ and $\N$ represents an $\ABba$.
\end{itemize}
  
Although some groups have reusable resultants, those are actually never reused. 
(If groups that are lower in the table use as sources resultants from higher groups, the sources of all referred groups would be previously consumed in groups that occupy even higher positions in the table.)
Due to this fact, the number of occurrences in each group depends only on the initial number of each type of component. 
 
\begin{table}
\caption{Chained recombination groups obtained from Tables~\ref{tab:recomb} and~\ref{tab:recomb-0}. 
The column {\bf scr} indicates the contribution of each path
to the distance decrease (the table is sorted in descending order with respect
to this column).}
\label{tab:recomb-group}
\begin{center}
\scriptsize
\setlength{\tabcolsep}{4pt}
\begin{tabular}[t]{|lc|ccccc|ccccccr|r|r|}
\hline
& {\bf id} & \multicolumn{5}{c}{\bf sources} & \multicolumn{7}{c}{\bf resultants} & \boldmath$\Delta d$ & {\bf scr}\\
\hline
\hline
$\mathcal P$ 
& $\W\M$ & $\AAab$ &~&$\BBab$ &&-----&-----&&-----&&-----&& $2\times\ABend$ & $-2$ & $-1$\\
\hline
\hline
& & & & && & && && &&\\[-3mm]
$\mathcal Q$ 
& $\W\W\MA\MB$ & $2\times\AAab$ &~&$\BBaa\!+\!\BBbb$ & &-----&-----&&-----&&-----&& $4\times\ABend$ & $-3$ & $-3/4$\\[1mm]
& $\M\M\WA\WB$ &  $\AAaa\!+\!\AAbb$ &~& $2\times\BBab$ &&----- &-----&&-----&&-----&& $4\times\ABend$ & $-3$ & $-3/4$\\
\hline
\hline
& & & & && & && && &&\\[-3mm]
$\mathcal T$ 
& $\W\Z\MA$ & $\AAab$ &~&$\BBaa$ &~&$\ABab$ &-----&&-----&&-----&& $3\times\ABend$ & $-2$ & $-2/3$\\
& $\W\W\MA$& $2\times\AAab$ &~&$\BBaa$ &&-----&$\AAbb$ &&-----&&-----&~& $2\times\ABend$  & $-2$ & $-2/3$\\[1mm]
& $\W\N\MB$ & $\AAab$ &~&$\BBbb$ &~&$\ABba$ &-----&&-----&&-----&& $3\times\ABend$ & $-2$ & $-2/3$\\
& $\W\W\MB$& $2\times\AAab$ &~&$\BBbb$ &&----- &$\AAaa$ &&-----&&-----&~& $2\times\ABend$  & $-2$ & $-2/3$\\[1mm]
& $\M\N\WA$ & $\AAaa$ &~& $\BBab$ &~&$\ABba$ &-----&&-----&&-----&& $3\times\ABend$ & $-2$ & $-2/3$\\
& $\M\M\WA$& $\AAaa$ &~& $2\times\BBab$&&-----&-----&&$\BBbb$&&-----&~& $2\times\ABend$ & $-2$ & $-2/3$\\[1mm]
& $\M\Z\WB$ & $\AAbb$ &~&$\BBab$ &~&$\ABab$ &-----&&-----&&-----&& $3\times\ABend$ & $-2$ & $-2/3$\\
& $\M\M\WB$& $\AAbb$ &~&$2\times\BBab$ &&-----&-----&&$\BBaa$&&-----&~& $2\times\ABend$ & $-2$ & $-2/3$\\[1mm]
\hline
\hline
$\mathcal S$ 
& $\Z\N$ & -----&&-----&&$\ABab\!+\!\ABba$ &-----&&-----&&-----&&  $2\times\ABend$ & $-1$ & $-1/2$\\[1mm]
& $\WA\MA$ & $\AAaa$ &~&$\BBaa$ &&-----&-----&&-----&&-----&&  $2\times\ABend$ & $-1$ & $-1/2$\\
& $\WB\MB$ & $\AAbb$ &~&$\BBbb$ &&-----&-----&&-----&&-----&& $2\times\ABend$ & $-1$ & $-1/2$\\[1mm]
&$\W\MA$ & $\AAab$ &~&$\BBaa$ &&-----&-----&&-----&& $\ABba$&~&$\ABend$ & $-1$ & $-1/2$\\[1mm]
&$\W\MB$ & $\AAab$ &~&$\BBbb$ &&-----&-----&&-----&& $\ABab$&~&$\ABend$ & $-1$ & $-1/2$\\[1mm]
&$\W\Z$ & $\AAab$ &&-----&~&$\ABab$ & $\AAaa$ &&-----&&-----&~& $\ABend$ & $-1$ & $-1/2$\\[1mm]
&$\W\N$ & $\AAab$ &&-----&~&$\ABba$ & $\AAbb$ &&-----&&-----&~& $\ABend$ & $-1$ & $-1/2$\\[1mm] 
& $\W\W$& $2\times\AAab$ &&-----&&-----& $\AAbb\!+\!\AAaa$&&-----&&-----&&----- & $-1$ & $-1/2$\\[1mm]
& $\M\WA$ & $\AAaa$ &~& $\BBab$&&-----&-----&&-----&& $\ABab$&~&$\ABend$ & $-1$ & $-1/2$\\[1mm] 
& $\M\WB$ & $\AAbb$ &~& $\BBab$&&-----&-----&&-----&& $\ABba$&~&$\ABend$ & $-1$ & $-1/2$\\[1mm]
& $\M\Z$ & -----&& $\BBab$ &~&$\ABab$ &  -----&&$\BBbb$&&-----&~& $\ABend$ & $-1$ & $-1/2$\\[1mm]
& $\M\N$ & -----&& $\BBab$ &~&$\ABba$ &  -----&&$\BBaa$&&-----&~& $\ABend$ & $-1$ & $-1/2$\\[1mm]
& $\M\M$&  -----&& $2\times\BBab$ &&-----&-----&& $\BBbb\!+\!\BBaa$ &&-----&&-----& $-1$ & $-1/2$\\
\hline
\hline
& & & & && & && && &&\\[-3mm]
$\mathcal M$ 
& $\Z\Z\WB\MA$ & $\AAbb$ &~&$\BBaa$ &~&$2\times\ABab$ &-----&&-----&&-----&& $4\times\ABend$ & $-2$ & $-1/2$\\[1mm]
& $\N\N\WA\MB$ & $\AAaa$ &~&$\BBbb$ &~&$2\times\ABba$ &-----&&-----&&-----&& $4\times\ABend$ & $-2$ & $-1/2$\\
\hline
\hline
& & & & && & && && &&\\[-3mm]
$\mathcal  N$ 
& $\Z\WB\MA$ & $\AAbb$ &~&$\BBaa$ &~&$\ABab$ &-----&&-----&& $\ABba$&~&$2\times\ABend$ & $-1$ & $-1/3$\\[1mm]
& $\Z\Z\WB$& $\AAbb$ &&-----&~& $2\times\ABab$ & $\AAaa$ &&-----&&-----&~& $2\times\ABend$ & $-1$ & $-1/3$\\[1mm]
& $\Z\Z\MA$& ----- &&$\BBaa$ &~&$2\times\ABab$ & -----&&$\BBbb$&&-----&~& $2\times\ABend$ & $-1$ & $-1/3$\\[1mm]
&$\N\WA\MB$ & $\AAaa$ &~&$\BBbb$ &~&$\ABba$ &-----&&-----&& $\ABab$&~&$2\times\ABend$ & $-1$ & $-1/3$\\[1mm]
& $\N\N\WA$ & $\AAaa$ &&----- &~&$2\times\ABba$ &$\AAbb$ &&-----&&-----&~& $2\times\ABend$ & $-1$ & $-1/3$\\[1mm]
& $\N\N\MB$& ----- &&$\BBbb$ &~&$2\times\ABba$ &-----&&$\BBaa$&&-----&~& $2\times\ABend$ & $-1$ & $-1/3$\\
\hline
\end{tabular}
\end{center}
\end{table}

The deductions shown in Table~\ref{tab:recomb-group} can be computed with an approach that greedily maximizes the groups in $\mathcal P$, $\mathcal Q$, $\mathcal T$, $\mathcal S$, $\mathcal M$ and $\mathcal N$ in this order.
The $\mathcal P$ part contains only one operation and is thus very simple.
The same happens with $\mathcal Q$, since the two groups in this part are exclusive
after applying $\mathcal P$. The four subparts of $\mathcal T$ are also exclusive after applying $\mathcal Q$. (Note that groups $\W\W\MA$, $\W\W\MB$, $\M\M\WA$ and $\M\M\WB$ of $\mathcal T$ are simply subgroups of $\mathcal Q$.) The groups in $\mathcal S$ correspond to the simple application of all possible
remaining operations with $\Delta d=-1$. After applying operations of type $\Z\N$, $\WA\MA$ and $\WB\MB$, the remaining operations in $\mathcal S$ are all exclusive. After $\mathcal S$, the two groups
in $\mathcal M$ are exclusive and then the same happens to the six groups in $\mathcal N$ (that are simply  subgroups of $\mathcal M$).

We can now write the theorem that gives the exact formula for the DCJ-indel distance of linear singular genomes:

\begin{theorem}[from~\citep{BRA-WIL-STO-2011}]\label{thm:exact-distance}
Given two singular linear
genomes $A$ and $B$, whose relational diagram $R(A,B)$ has $c$ $\AB$-cycles and $i$ $\AB$-paths, we have
\[\ddcjid(A,B) = |\mg| -c -\frac{i}{2} +\!\!\!\sum_{C \in R(A,B)}\!\!\! \lambda(C) - \delta,\] 
where $\delta = 2\mathcal P + 3\mathcal Q +2\mathcal T + \mathcal S + 2\mathcal M + \mathcal N$ and $\mathcal P$, $\mathcal Q$, $\mathcal T$, $\mathcal S$, $\mathcal M$ and $\mathcal N$ here refer to the number of deductions in the corresponding chained recombination groups.
\end{theorem}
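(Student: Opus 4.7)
The plan is to establish the claimed equality by matching upper and lower bounds. Lemma~\ref{lem:distance-upper} already gives $\ddcjid(A,B) \leq |\mg| - c - i/2 + \sum_{C} \lambda(C)$, which corresponds to a sorting strategy that handles every component of $R(A,B)$ independently using optimal DCJs and then performs the implied indels. What remains is to show that (a) the chained recombination groups in Table~\ref{tab:recomb-group} provide an additional saving of exactly $\delta$, and (b) no sorting scenario can do better.

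For the upper bound, I would first verify each row of Table~\ref{tab:recomb-group} by executing the chained recombination (as illustrated in Figure~\ref{fig:chained-recomb}) and checking that the sum of $\|\rho\|$ and $\Delta\lambda$ along the chain matches the claimed $\Delta d$, with resultants exactly of the listed types. Then the greedy schedule in the order $\mathcal P, \mathcal Q, \mathcal T, \mathcal S, \mathcal M, \mathcal N$ has to be shown \emph{feasible}: after maximising $\mathcal P$ (pairing all $\AAab$ with all $\BBab$), the remaining components of the relevant types flow into $\mathcal Q$-groups without reusing already consumed sources; an analogous disjointness holds when $\mathcal T$ follows $\mathcal Q$, and so on. The partner observation stated after Table~\ref{tab:recomb-group} — that the partners available at each lower level are a subset of the partners available at higher levels — makes this disjointness transparent. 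Accumulating the per-group score column gives the saving $\delta = 2\mathcal P + 3\mathcal Q + 2\mathcal T + \mathcal S + 2\mathcal M + \mathcal N$, hence the upper bound $\ddcjid(A,B) \leq |\mg| - c - i/2 + \sum_C \lambda(C) - \delta$.

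For the lower bound, I would argue that any optimal sorting scenario can be rearranged into one consisting of at most $\delta$ deducting recombinations followed by independent sorting of the remaining components. The first step uses the cited fact that a DCJ confined to a single component satisfies $\Delta d \geq 0$, so savings can only arise from operations that cut across two components. The second step is an exhaustive case analysis over the parity of runs and the endpoints (in $A$ or $B$) of the two source paths, using the observation that only the path type and the run structure $\no,\ma,\mb,\mab,\mba$ affect $\Delta d$ and $\Delta\lambda$; this shows that Tables~\ref{tab:recomb} and~\ref{tab:recomb-0} list every recombination with $\Delta d \leq 0$ whose resultants can feed a subsequent deducting step. Finally, the choice of chained recombinations is cast as a maximum-weight packing over the multiset of path types, and a local exchange argument, driven by the partner hierarchy, confirms that the greedy schedule used in the upper bound attains the packing optimum.

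The hard part will be the lower bound, specifically ruling out clever interleavings where individually neutral recombinations from Table~\ref{tab:recomb-0} are used to reshape the multiset of path types in a way that creates \emph{more} deducting opportunities than the greedy count predicts. The argument must track every neutral reshaping and show that the net effect is already captured inside one of the six families $\mathcal P, \mathcal Q, \mathcal T, \mathcal S, \mathcal M, \mathcal N$; once that invariant is in place, the packing lower bound matches the upper bound and the theorem follows.
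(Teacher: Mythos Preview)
The paper does not give a self-contained proof of this theorem; it is quoted as a result from~\citep{BRA-WIL-STO-2011}, and the surrounding text only supplies the ingredients (Tables~\ref{tab:recomb}--\ref{tab:recomb-group}, the partner-hierarchy remark, and the paragraph explaining why the greedy order $\mathcal P,\mathcal Q,\mathcal T,\mathcal S,\mathcal M,\mathcal N$ works and why resultants are never reused). Your proposal follows exactly this line---Lemma~\ref{lem:distance-upper} for the baseline, Table~\ref{tab:recomb-group} for the achievable savings, exhaustive enumeration of recombination types for the lower bound, and greedy optimality via the partner hierarchy---so it matches the paper's (and the cited source's) approach.
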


\section{DCJ-indel distance of natural genomes}

Based on the results presented so far, we develop an approach for computing the DCJ-indel distance of natural genomes $A$ and $B$.
First we note that it is possible to transform $A$ and $B$ into \emph{matched} singular genomes $\AM$ and $\BM$ as follows.
For each common marker $\mk \in \mg$, if $\occof{A} \leq \occof{B}$, we should determine which occurrence of~$\mk$ in~$B$ matches each occurrence of~$\mk$ in~$A$, or if $\occof{B} < \occof{A}$, which occurrence of~$\mk$ in~$A$ matches each occurrence of~$\mk$ in~$B$.
The matched occurrences receive the same identifier (for example, by adding the same \emph{index}) in~$\AM$ and in~$\BM$.
Examples are given in Figure~\ref{fig:multi-adj-graph} (top and center).
Observe that, after this procedure, the number of common markers between any pair of matched genomes $\AM$ and $\BM$ is
\[
n_*=\sum_{\mk \in \mg} \min\{\occof{A}(\mk),\occof{B}(\mk)\}\,.
\]

Let $\mathbb{M}$ be the set of all possible pairs of matched singular genomes obtained from natural genomes $A$ and~$B$. The DCJ-indel distance of $A$ and $B$ is then defined as
\begin{equation*}
\label{eq:dcj-indel}
\ddcjid(A, B) = \min_{(\AM,\BM) \in \mathbb{M}}\{ \ddcjid(\AM,\BM) \} \,.
\end{equation*}

\begin{figure}
  \begin{center}
    \tiny
    \setlength{\unitlength}{.65pt}
    \begin{tabular}{c}
    \begin{picture}(500,100)

      {\color{purple}
      \dottedline[$\cdot$]{2}(65,70)(80,70)
      \dottedline[$\cdot$]{2}(145,70)(160,70)
      \dottedline[$\cdot$]{2}(185,70)(200,70)
      \dottedline[$\cdot$]{2}(225,70)(240,70)
      \dottedline[$\cdot$]{2}(265,70)(280,70)
      \dottedline[$\cdot$]{2}(305,70)(320,70)
      
      \dottedline[$\cdot$]{2}(105,10)(120,10)
      \dottedline[$\cdot$]{2}(185,10)(200,10)
      \dottedline[$\cdot$]{2}(225,10)(240,10)
      \dottedline[$\cdot$]{2}(265,10)(280,10)
      \dottedline[$\cdot$]{2}(305,10)(320,10)
      \dottedline[$\cdot$]{2}(345,10)(360,10)
      
      \drawline(80,70)(120,10)
      \qbezier(65,70)(125,40)(185,10)
      \qbezier(145,70)(125,40)(105,10)
      
      \drawline(160,70)(185,70)
      \drawline(200,10)(225,10)
      
      \qbezier(200,70)(253,40)(305,10)
      \qbezier(225,70)(253,40)(280,10)
      \drawline(240,70)(240,10)
      \drawline(265,70)(265,10)
      
      \drawline(280,70)(305,70)
      \drawline(320,10)(345,10)
      \drawline(320,70)(345,70)
      \drawline(360,10)(385,10)
      }

      \put(0,0){\color{orange}
        \dottedline[$\cdot$]{2}(25,10)(40,10)
        \dottedline[$\cdot$]{2}(65,10)(80,10)
        \dottedline[$\cdot$]{2}(105,70)(120,70)
        \dottedline[$\cdot$]{2}(145,10)(160,10)
        
        \qbezier(40,70)(100,40)(160,10)
        \qbezier(120,70)(100,40)(80,10)
        \drawline(0,10)(25,10)
        \drawline(40,10)(65,10)
        \drawline(105,70)(145,10)
      }
      
      \put( 40,70){\circle*{5}}
      \put( 65,70){\circle*{5}}
      \put( 80,70){\circle*{5}}
      \put(105,70){\circle*{5}}
      \put(120,70){\circle*{5}}
      \put(145,70){\circle*{5}}
      \put(160,70){\circle*{5}}
      \put(185,70){\circle*{5}}
      \put(200,70){\circle*{5}}
      \put(225,70){\circle*{5}}
      \put(240,70){\circle*{5}}
      \put(265,70){\circle*{5}}
      \put(280,70){\circle*{5}}
      \put(305,70){\circle*{5}}
      \put(320,70){\circle*{5}}
      \put(345,70){\circle*{5}}

      \put( 40,82){\makebox(0,0){$\tl{\Ti{1}{1}}$}}
      \put( 65,82){\makebox(0,0){$\hd{\Ti{1}{1}}$}}
      \put( 80,82){\makebox(0,0){$\tl{\Ti{3}{1}}$}}
      \put(105,82){\makebox(0,0){$\hd{\Ti{3}{1}}$}}
      \put(120,82){\makebox(0,0){$\tl{\Ti{2}{1}}$}}
      \put(145,82){\makebox(0,0){$\hd{\Ti{2}{1}}$}}
      \put(160,82){\makebox(0,0){$\hd{\T{5}}$}}
      \put(185,82){\makebox(0,0){$\tl{\T{5}}$}}
      \put(200,82){\makebox(0,0){$\hd{\Ti{4}{1}}$}}
      \put(225,82){\makebox(0,0){$\tl{\Ti{4}{1}}$}}
      \put(240,82){\makebox(0,0){$\tl{\Ti{3}{2}}$}}
      \put(265,82){\makebox(0,0){$\hd{\Ti{3}{2}}$}}
      \put(280,82){\makebox(0,0){$\tl{\T{5}}$}}
      \put(305,82){\makebox(0,0){$\hd{\T{5}}$}}
      \put(320,82){\makebox(0,0){$\tl{\T{4}}$}}
      \put(345,82){\makebox(0,0){$\hd{\T{4}}$}}
      
      \put(  0,10){\circle*{5}}
      \put( 25,10){\circle*{5}}
      \put( 40,10){\circle*{5}}
      \put( 65,10){\circle*{5}}
      \put( 80,10){\circle*{5}}
      \put(105,10){\circle*{5}}
      \put(120,10){\circle*{5}}
      \put(145,10){\circle*{5}}
      \put(160,10){\circle*{5}}
      \put(185,10){\circle*{5}}
      \put(200,10){\circle*{5}}
      \put(225,10){\circle*{5}}
      \put(240,10){\circle*{5}}
      \put(265,10){\circle*{5}}
      \put(280,10){\circle*{5}}
      \put(305,10){\circle*{5}}
      \put(320,10){\circle*{5}}
      \put(345,10){\circle*{5}}
      \put(360,10){\circle*{5}}
      \put(385,10){\circle*{5}}

      \put( 0,-2){\makebox(0,0){$\tl{\T{1}}$}}
      \put( 25,-2){\makebox(0,0){$\hd{\T{1}}$}}
      \put( 40,-2){\makebox(0,0){$\tl{\T{6}}$}}
      \put( 65,-2){\makebox(0,0){$\hd{\T{6}}$}}
      \put( 80,-2){\makebox(0,0){$\tl{\Ti{2}{1}}$}}
      \put(105,-2){\makebox(0,0){$\hd{\Ti{2}{1}}$}}
      \put(120,-2){\makebox(0,0){$\tl{\Ti{3}{1}}$}}
      \put(145,-2){\makebox(0,0){$\hd{\Ti{3}{1}}$}}
      \put(160,-2){\makebox(0,0){$\tl{\Ti{1}{1}}$}}
      \put(185,-2){\makebox(0,0){$\hd{\Ti{1}{1}}$}}
      \put(200,-2){\makebox(0,0){$\tl{\T{7}}$}}
      \put(225,-2){\makebox(0,0){$\hd{\T{7}}$}}
      \put(240,-2){\makebox(0,0){$\tl{\Ti{3}{2}}$}}
      \put(265,-2){\makebox(0,0){$\hd{\Ti{3}{2}}$}}
      \put(280,-2){\makebox(0,0){$\tl{\Ti{4}{1}}$}}
      \put(305,-2){\makebox(0,0){$\hd{\Ti{4}{1}}$}}
      \put(320,-2){\makebox(0,0){$\tl{\T{1}}$}}
      \put(345,-2){\makebox(0,0){$\hd{\T{1}}$}}
      \put(360,-2){\makebox(0,0){$\tl{\T{3}}$}}
      \put(385,-2){\makebox(0,0){$\hd{\T{3}}$}}
      
      \put(450,40){\makebox(0,0){$\ddcjid
    =5-0-\frac{2}{2}+1+3=8$}}
    \end{picture}\\[5mm]
    \begin{picture}(500,100)

      {\color{blue}
      \dottedline[$\cdot$]{2}(65,70)(80,70)
      \dottedline[$\cdot$]{2}(105,70)(120,70)
      \dottedline[$\cdot$]{2}(265,70)(280,70)
      \dottedline[$\cdot$]{2}(305,70)(320,70)
      \dottedline[$\cdot$]{2}(25,10)(40,10)
      \dottedline[$\cdot$]{2}(65,10)(80,10)
      \dottedline[$\cdot$]{2}(145,10)(160,10)
      \dottedline[$\cdot$]{2}(185,10)(200,10)
      \dottedline[$\cdot$]{2}(225,10)(240,10)
      \dottedline[$\cdot$]{2}(265,10)(280,10)
      \qbezier(65,70)(45,40)(25,10)
      \qbezier(80,70)(160,40)(240,10)
      \drawline(40,10)(65,10)
      \qbezier(105,70)(180,40)(265,10)
      \qbezier(120,70)(100,40)(80,10)
      \drawline(160,10)(185,10)
      \drawline(200,10)(225,10)
      \qbezier(265,70)(205,40)(145,10)
      \drawline(280,70)(305,70)
      \qbezier(320,70)(300,40)(280,10)
      }

      \put(0,0){\color{cyan}
      \dottedline[$\cdot$]{2}(145,70)(160,70)
      \dottedline[$\cdot$]{2}(185,70)(200,70)
      \dottedline[$\cdot$]{2}(225,70)(240,70)
      \dottedline[$\cdot$]{2}(105,10)(120,10)
      \dottedline[$\cdot$]{2}(305,10)(320,10)
      \dottedline[$\cdot$]{2}(345,10)(360,10)
        \qbezier(40,70)(20,40)(0,10)
        \qbezier(145,70)(125,40)(105,10)
        \drawline(160,70)(185,70)
        \drawline(200,70)(225,70)
        \qbezier(240,70)(180,40)(120,10)
        
       \qbezier(345,70)(325,40)(305,10)
       \drawline(320,10)(345,10)
       \drawline(360,10)(385,10)
      }
      
      \put( 40,70){\circle*{5}}
      \put( 65,70){\circle*{5}}
      \put( 80,70){\circle*{5}}
      \put(105,70){\circle*{5}}
      \put(120,70){\circle*{5}}
      \put(145,70){\circle*{5}}
      \put(160,70){\circle*{5}}
      \put(185,70){\circle*{5}}
      \put(200,70){\circle*{5}}
      \put(225,70){\circle*{5}}
      \put(240,70){\circle*{5}}
      \put(265,70){\circle*{5}}
      \put(280,70){\circle*{5}}
      \put(305,70){\circle*{5}}
      \put(320,70){\circle*{5}}
      \put(345,70){\circle*{5}}

      \put( 40,82){\makebox(0,0){$\tl{\Ti{1}{1}}$}}
      \put( 65,82){\makebox(0,0){$\hd{\Ti{1}{1}}$}}
      \put( 80,82){\makebox(0,0){$\tl{\Ti{3}{1}}$}}
      \put(105,82){\makebox(0,0){$\hd{\Ti{3}{1}}$}}
      \put(120,82){\makebox(0,0){$\tl{\Ti{2}{1}}$}}
      \put(145,82){\makebox(0,0){$\hd{\Ti{2}{1}}$}}
      \put(160,82){\makebox(0,0){$\hd{\T{5}}$}}
      \put(185,82){\makebox(0,0){$\tl{\T{5}}$}}
      \put(200,82){\makebox(0,0){$\hd{\T{4}}$}}
      \put(225,82){\makebox(0,0){$\tl{\T{4}}$}}
      \put(240,82){\makebox(0,0){$\tl{\Ti{3}{2}}$}}
      \put(265,82){\makebox(0,0){$\hd{\Ti{3}{2}}$}}
      \put(280,82){\makebox(0,0){$\tl{\T{5}}$}}
      \put(305,82){\makebox(0,0){$\hd{\T{5}}$}}
      \put(320,82){\makebox(0,0){$\tl{\Ti{4}{1}}$}}
      \put(345,82){\makebox(0,0){$\hd{\Ti{4}{1}}$}}
      
      \put(  0,10){\circle*{5}}
      \put( 25,10){\circle*{5}}
      \put( 40,10){\circle*{5}}
      \put( 65,10){\circle*{5}}
      \put( 80,10){\circle*{5}}
      \put(105,10){\circle*{5}}
      \put(120,10){\circle*{5}}
      \put(145,10){\circle*{5}}
      \put(160,10){\circle*{5}}
      \put(185,10){\circle*{5}}
      \put(200,10){\circle*{5}}
      \put(225,10){\circle*{5}}
      \put(240,10){\circle*{5}}
      \put(265,10){\circle*{5}}
      \put(280,10){\circle*{5}}
      \put(305,10){\circle*{5}}
      \put(320,10){\circle*{5}}
      \put(345,10){\circle*{5}}
      \put(360,10){\circle*{5}}
      \put(385,10){\circle*{5}}
      
      \put( 0,-2){\makebox(0,0){$\tl{\Ti{1}{1}}$}}
      \put( 25,-2){\makebox(0,0){$\hd{\Ti{1}{1}}$}}
      \put( 40,-2){\makebox(0,0){$\tl{\T{6}}$}}
      \put( 65,-2){\makebox(0,0){$\hd{\T{6}}$}}
      \put( 80,-2){\makebox(0,0){$\tl{\Ti{2}{1}}$}}
      \put(105,-2){\makebox(0,0){$\hd{\Ti{2}{1}}$}}
      \put(120,-2){\makebox(0,0){$\tl{\Ti{3}{2}}$}}
      \put(145,-2){\makebox(0,0){$\hd{\Ti{3}{2}}$}}
      \put(160,-2){\makebox(0,0){$\tl{\T{1}}$}}
      \put(185,-2){\makebox(0,0){$\hd{\T{1}}$}}
      \put(200,-2){\makebox(0,0){$\tl{\T{7}}$}}
      \put(225,-2){\makebox(0,0){$\hd{\T{7}}$}}
      \put(240,-2){\makebox(0,0){$\tl{\Ti{3}{1}}$}}
      \put(265,-2){\makebox(0,0){$\hd{\Ti{3}{1}}$}}
      \put(280,-2){\makebox(0,0){$\tl{\Ti{4}{1}}$}}
      \put(305,-2){\makebox(0,0){$\hd{\Ti{4}{1}}$}}
      \put(320,-2){\makebox(0,0){$\tl{\T{1}}$}}
      \put(345,-2){\makebox(0,0){$\hd{\T{1}}$}}
      \put(360,-2){\makebox(0,0){$\tl{\T{3}}$}}
      \put(385,-2){\makebox(0,0){$\hd{\T{3}}$}}
      
      \put(450,40){\makebox(0,0){$\ddcjid
      =5-2-\frac{2}{2}+1+2+2=7$}}
    \end{picture}\\[10mm]
    \begin{picture}(570,120)
    
      \put(0,0){\color{cyan}
      \drawline(240,99)(280,99)
      \drawline(300,100)(340,100)
      }
     \put(0,0){\color{blue}
      \drawline(420,99)(460,99)
      }

      \put(0,0){\color{purple}
      \drawline(240,101)(280,101)
      \drawline(420,101)(460,101)
      \drawline(480,100)(520,100)
      }
      
      \put(0,0){\color{orange}
      \drawline(0,10)(40,10)
      \drawline(60,11)(100,11)
      }
      \put(0,0){\color{purple}
      \drawline(300,11)(340,11)
      \drawline(480,11)(520,11)
      \drawline(540,11)(580,11)
      }
      
      \put(0,0){\color{blue}
      \drawline(60,9)(100,9)
      \drawline(240,10)(280,10)
      \drawline(300,9)(340,9)
      }
      \put(0,0){\color{gray}
      \drawline(180,10)(220,10)
      \drawline(360,10)(400,10)
      }
      \put(0,0){\color{cyan}
      \drawline(480,9)(520,9)
      \drawline(540,9)(580,9)
      }
      
      \put(0,0){\color{gray}
      \qbezier(60,100)(270,55)(480,10)
      \qbezier(100,100)(300,52)(520,10)
      \qbezier(120,100)(340,58)(540,10)
      \qbezier(160,100)(370,55)(580,10)
      \qbezier(360,100)(450,55)(540,10)
      \qbezier(400,100)(490,55)(580,10)
      }
      \put(0,0){\color{cyan}
      \qbezier(60,100)(30,55)(0,10)
      }
      \put(0,0){\color{orange}
      \qbezier(60,100)(150,55)(240,10)
      }
      \put(0,0){\color{blue}
      \qbezier(100,100)(70,55)(40,10)
      }
      \put(0,0){\color{purple}
      \qbezier(100,100)(190,55)(280,10)
      \qbezier(120,100)(150,55)(180,10)
      }
      \put(0,0){\color{blue}
      \qbezier(120,100)(240,55)(360,10)
      }
      
      \put(0,0){\color{orange}
      \qbezier(179,100)(149,55)(119,10)
      }
      \put(0,0){\color{blue}
      \qbezier(181,100)(151,55)(121,10)
      }
      \put(0,0){\color{orange}
      \qbezier(160,100)(190,55)(220,10)
      }
      \put(0,0){\color{blue}
      \qbezier(160,100)(280,55)(400,10)
      }
      
      \put(0,0){\color{purple}
      \qbezier(219,100)(189,55)(159,10)
      }
      \put(0,0){\color{cyan}
      \qbezier(221,100)(191,55)(161,10)
      }
      \put(0,0){\color{purple}
      \qbezier(300,100)(380,55)(460,10)
      }
      \put(0,0){\color{cyan}
      \qbezier(360,100)(270,55)(180,10)
      }
      \put(0,0){\color{purple}
      \qbezier(340,100)(380,55)(420,10)
      \drawline(360,100)(360,10)
      }
      
      \put(0,0){\color{blue}
      \qbezier(400,100)(310,55)(220,10)
      }
      \put(0,0){\color{purple}
      \drawline(400,100)(400,10)
      }
      \put(0,0){\color{blue}
      \qbezier(480,100)(450,55)(420,10)
      }
      
      \put(0,0){\color{cyan}
      \qbezier(520,100)(490,55)(460,10)
      }
      
      \put(0,0){\color{purple}
      \dottedline[$\cdot$]{2}(100,101)(120,101)
      \dottedline[$\cdot$]{2}(220,101)(240,101)
      \dottedline[$\cdot$]{2}(280,101)(300,101)
      \dottedline[$\cdot$]{2}(340,101)(360,101)
      \dottedline[$\cdot$]{2}(400,101)(420,101)
      \dottedline[$\cdot$]{2}(460,101)(480,101)
      \dottedline[$\cdot$]{2}(160,11)(180,11)
      \dottedline[$\cdot$]{2}(280,11)(300,11)
      \dottedline[$\cdot$]{2}(340,11)(360,11)
      \dottedline[$\cdot$]{2}(400,11)(420,11)
      \dottedline[$\cdot$]{2}(460,11)(480,11)
      \dottedline[$\cdot$]{2}(520,11)(540,11)
      }
      \put(0,0){\color{blue}
      \dottedline[$\cdot$]{2}(100,99)(120,99)
      \dottedline[$\cdot$]{2}(160,99)(180,99)
      \dottedline[$\cdot$]{2}(400,99)(420,99)
      \dottedline[$\cdot$]{2}(460,99)(480,99)
      \dottedline[$\cdot$]{2}(40,9)(60,9)
      \dottedline[$\cdot$]{2}(100,9)(120,9)
      \dottedline[$\cdot$]{2}(220,9)(240,9)
      \dottedline[$\cdot$]{2}(280,9)(300,9)
      \dottedline[$\cdot$]{2}(340,9)(360,9)
      \dottedline[$\cdot$]{2}(400,9)(420,9)
      }
      \put(0,0){\color{orange}
      \dottedline[$\cdot$]{2}(160,101)(180,101)
      \dottedline[$\cdot$]{2}(40,11)(60,11)
      \dottedline[$\cdot$]{2}(100,11)(120,11)
      \dottedline[$\cdot$]{2}(220,11)(240,11)
      }
      \put(0,0){\color{cyan}
      \dottedline[$\cdot$]{2}(220,99)(240,99)
      \dottedline[$\cdot$]{2}(280,99)(300,99)
      \dottedline[$\cdot$]{2}(340,99)(360,99)
      \dottedline[$\cdot$]{2}(160,9)(180,9)
      \dottedline[$\cdot$]{2}(460,9)(480,9)
      \dottedline[$\cdot$]{2}(520,9)(540,9)
      }
      
      \put( 60,100){\circle*{5}}
      \put(100,100){\circle*{5}}
      \put(120,100){\circle*{5}}
      \put(160,100){\circle*{5}}
      \put(180,100){\circle*{5}}
      \put(220,100){\circle*{5}}
      \put(240,100){\circle*{5}}
      \put(280,100){\circle*{5}}
      \put(300,100){\circle*{5}}
      \put(340,100){\circle*{5}}
      \put(360,100){\circle*{5}}
      \put(400,100){\circle*{5}}
      \put(420,100){\circle*{5}}
      \put(460,100){\circle*{5}}
      \put(480,100){\circle*{5}}
      \put(520,100){\circle*{5}}

      \put(-25,112){\makebox(0,0){$A$}}
      \put( 60,112){\makebox(0,0){$\tl{\T{1}}$}}
      \put(100,112){\makebox(0,0){$\hd{\T{1}}$}}
      \put(120,112){\makebox(0,0){$\tl{\T{3}}$}}
      \put(160,112){\makebox(0,0){$\hd{\T{3}}$}}
      \put(180,112){\makebox(0,0){$\tl{\T{2}}$}}
      \put(220,112){\makebox(0,0){$\hd{\T{2}}$}}
      \put(240,112){\makebox(0,0){$\hd{\T{5}}$}}
      \put(280,112){\makebox(0,0){$\tl{\T{5}}$}}
      \put(300,112){\makebox(0,0){$\hd{\T{4}}$}}
      \put(340,112){\makebox(0,0){$\tl{\T{4}}$}}
      \put(360,112){\makebox(0,0){$\tl{\T{3}}$}}
      \put(400,112){\makebox(0,0){$\hd{\T{3}}$}}
      \put(420,112){\makebox(0,0){$\tl{\T{5}}$}}
      \put(460,112){\makebox(0,0){$\hd{\T{5}}$}}
      \put(480,112){\makebox(0,0){$\tl{\T{4}}$}}
      \put(520,112){\makebox(0,0){$\hd{\T{4}}$}}
      
      \put(  0,10){\circle*{5}}
      \put( 40,10){\circle*{5}}
      \put( 60,10){\circle*{5}}
      \put(100,10){\circle*{5}}
      \put(120,10){\circle*{5}}
      \put(160,10){\circle*{5}}
      \put(180,10){\circle*{5}}
      \put(220,10){\circle*{5}}
      \put(240,10){\circle*{5}}
      \put(280,10){\circle*{5}}
      \put(300,10){\circle*{5}}
      \put(340,10){\circle*{5}}
      \put(360,10){\circle*{5}}
      \put(400,10){\circle*{5}}
      \put(420,10){\circle*{5}}
      \put(460,10){\circle*{5}}
      \put(480,10){\circle*{5}}
      \put(520,10){\circle*{5}}
      \put(540,10){\circle*{5}}
      \put(580,10){\circle*{5}}

      \put(-25,-2){\makebox(0,0){$B$}}
      \put(  0,-2){\makebox(0,0){$\tl{\T{1}}$}}
      \put( 40,-2){\makebox(0,0){$\hd{\T{1}}$}}
      \put( 60,-2){\makebox(0,0){$\tl{\T{6}}$}}
      \put(100,-2){\makebox(0,0){$\hd{\T{6}}$}}
      \put(120,-2){\makebox(0,0){$\tl{\T{2}}$}}
      \put(160,-2){\makebox(0,0){$\hd{\T{2}}$}}
      \put(180,-2){\makebox(0,0){$\tl{\T{3}}$}}
      \put(220,-2){\makebox(0,0){$\hd{\T{3}}$}}
      \put(240,-2){\makebox(0,0){$\tl{\T{1}}$}}
      \put(280,-2){\makebox(0,0){$\hd{\T{1}}$}}
      \put(300,-2){\makebox(0,0){$\tl{\T{7}}$}}
      \put(340,-2){\makebox(0,0){$\hd{\T{7}}$}}
      \put(360,-2){\makebox(0,0){$\tl{\T{3}}$}}
      \put(400,-2){\makebox(0,0){$\hd{\T{3}}$}}
      \put(420,-2){\makebox(0,0){$\tl{\T{4}}$}}
      \put(460,-2){\makebox(0,0){$\hd{\T{4}}$}}
      \put(480,-2){\makebox(0,0){$\tl{\T{1}}$}}
      \put(520,-2){\makebox(0,0){$\hd{\T{1}}$}}
      \put(540,-2){\makebox(0,0){$\tl{\T{3}}$}}
      \put(580,-2){\makebox(0,0){$\hd{\T{3}}$}}
    \end{picture}
    \end{tabular}
  \end{center}
  \caption{\label{fig:multi-adj-graph}Natural genomes $A=\tel\T{1}\z\T{3}\z\T{2}\z\rev{\T{5}}\z\rev{\T{4}}\z\T{3}\z\T{5}\z\T{4}\tel$ and~$B=\tel\T{1}\z\T{6}\z\T{2}\z\T{3}\z\T{1}\z\T{7}\z\T{3}\z\T{4}\z\T{1}\z\T{3}\tel$ can give rise to many distinct pairs of matched singular genomes.
  The relational diagrams of two of these pairs are represented here, in the top and center.
  In the bottom we show the multi-relational diagram $\MR(A,B)$.
  The decomposition that gives the diagram in the top is represented in red/orange.
  Similarly, the decomposition that gives the diagram in the center is represented in blue/cyan.
  Edges that are in both decompositions have two colors.
  }
 \end{figure}

\subsection{Multi-relational diagram}

While the original relational diagram clearly depends on the singularity of common markers, when they appear in multiple copies we can obtain a data structure that integrates the properties of all possible relational diagrams of matched genomes. 
The \emph{multi-relational diagram} $\MR(A,B)$ of two natural genomes $A$ and $B$
also has a set $V(A)$ with a vertex for each of the two extremities of each marker occurrence of genome $A$ and a set $V(B)$ with a vertex for each of the two extremities of each marker occurrence of genome $B$.

Again, sets $\adjedges{A}$ and $\adjedges{B}$ contain adjacency edges
connecting adjacent extremities of markers in $A$ and in $B$.
But here the set $E_{\ext}$ contains, for each marker $\mk\in\mg$, an
extremity edge connecting each vertex in $V(A)$ that represents an occurrence of
$\tl{\mk}$ to each vertex in $V(B)$ that represents an occurrence of $\tl{\mk}$, 
and an extremity edge connecting each vertex in $V(A)$ that represents an
occurrence of $\hd{\mk}$ to each vertex in $V(B)$ that represents an occurrence
of $\hd{\mk}$.
Furthermore, for each marker $\mk\in\mtu$ 
with $\occof{A}(\mk)>\occof{B}(\mk)$,
the set $\selfedges{A}$ contains one indel edge connecting the vertices
representing the two extremities of the same occurrence of~$\mk$ in $A$. 
Similarly, for each marker $\mk'\in\mtu$ with $\occof{B}(\mk')>\occof{A}(\mk')$,
the set $\selfedges{B}$ contains one indel edge connecting the vertices
representing the two extremities of the same occurrence of~$\mk'$ in $B$.
An example of a multi-relational diagram is given in Figure~\ref{fig:multi-adj-graph} (bottom).

\paragraph{Consistent decompositions.}
Note that if $A$ and $B$ are singular genomes, $\MR(A,B)$ reduces to the ordinary $R(A,B)$. On the other hand, in the presence of duplicate common markers, $\MR(A,B)$
may contain vertices of degree larger than two. A \emph{decomposition}
is a collection of vertex-disjoint \emph{components}, that can be cycles and/or paths, covering all vertices of $\MR(A,B)$. 
There can be multiple ways of selecting a decomposition,  
and we need to find one that 
allows to match occurrences of a marker in genome $A$ with 
occurrences of the same marker in genome $B$.

Let $\mk_{(A)}$ and $\mk_{(B)}$ be, respectively, occurrences of the same marker $\mk$ in genomes $A$ and $B$.
The extremity edge that connects $\hd{\mk_{(A)}}$ to $\hd{\mk_{(B)}}$ and the extremity edge that connects $\tl{\mk_{(A)}}$ to~$\tl{\mk_{(B)}}$ are called \emph{siblings}. 
A set $S \subseteq E_{\ext}$ is a \emph{sibling set} if it is exclusively composed of pairs of siblings and does not contain any pair of incident edges.
In other words, there is a clear bijection between matchings of occurrences and sibling sets of $\MR(A,B)$.
In particular, a \emph{maximal} sibling set $S$ corresponds to a maximal matching of occurrences of common markers in both genomes and we denote by $\AMi{S}$ and $\BMi{S}$ the matched singular genomes corresponding to the sibling set $S$.

The set of edges~$D[S]$ \emph{induced} by a maximal sibling set $S$ is said to be a \emph{consistent decomposition} of $\MR(A,B)$ and can be obtained by the following two steps: (i) in the beginning, $D[S]$ is the union of $S$ with the sets of adjacency edges $\adjedges{A}$ and $\adjedges{B}$; (ii) for each indel edge $e$, if its two endpoints have degree one or zero in $D[S]$, then~$e$ is added to $D[S]$.
Note that the consistent decomposition $D[S]$ covers all vertices of $\MR(A,B)$ and is composed of cycles and paths, allowing us to
compute the value
\[
\ddcjid(D) = n_* - c_D - \frac{i_D}{2} +\sum_{C \in D}\!\! \lambda(C) -\delta_D \,,
\] 
where $c_D$ and
$i_D$ are the numbers of $\AB$-cycles and $\AB$-paths in $D[S]$,
respectively, and $\delta_D$ is the optimal deduction of recombinations of paths from $D[S]$.
Since 
$n_*$ is constant for any consistent decomposition, we can separate the part of the formula that depends on $D[S]$, called \emph{weight} of $D[S]$:
\begin{equation*}\label{eq:decomposition-weight}
w(D[S]) = c_D + \frac{i_D}{2} -\sum_{C \in D[S]}\!\! \lambda(C) +\delta_D \,.
\end{equation*}

Given two natural genomes $A$ and $B$, the DCJ-indel distance of $A$ and $B$ can then be computed by the following equation:
\[ 
\ddcjid(A, B) = \min_{S \in \mathbb{S}_\textsc{max}}\{\ddcjid(D[S]) \} = n_* - \max_{S \in \mathbb{S}_\textsc{max}}\{w(D[S])\}\,,
\]  
where $\mathbb{S}_\textsc{max}$ is the set of all maximal sibling sets of $\MR(A,B)$.

A consistent decomposition $D[S]$ such that $\ddcjid(D[S])=\ddcjid(A, B)$ is
said to be \emph{optimal}.
Computing the DCJ-indel distance between two natural genomes $A$ and $B$, or, equivalently, finding an optimal consistent decomposition of $\MR(A,B)$ is an NP-hard problem. In Section~\ref{sec:ilp} we will describe an efficient ILP formulation to solve it. Before that, we need to introduce a transformation of~$\MR(A,B)$ that is necessary for our ILP.

\section{\label{app:capping}Capping}

The ends of linear chromosomes produce some difficulties for the decomposition.
Fortunately there is an elegant technique to overcome this problem, called \emph{capping}~\citep{HAN-PEV-1995}.
It consists of modifying the genomes by adding \emph{artificial} singular common markers, called \emph{caps}, that circularize all linear chromosomes, so that their relational diagram is composed of cycles only, but, if the capping is optimal, the genomic distance is preserved.

\subsection{Capping of canonical genomes}

When two singular genomes $A$ and $B$ have no exclusive markers, they are called \emph{canonical genomes}.
  
\smallskip

The diagram $R(A,B)$ of canonical genomes $A$ and $B$ has no indel edges and the indel-potential of any component $C$ is $\lambda(C)=0$. In this case, the upper bound given by Lemma~\ref{lem:distance-upper} is tight, and the distance formula can be simplified to $\ddcjid(A,B) = |\mg| - c -
\frac{i}{2}$, as it was already shown in~\citep{BER-MIX-STO-2006}.

\smallskip

Also, obtaining an optimal capping of canonical genomes is quite straightforward~\citep{HAN-PEV-1995,YAN-ATT-FRI-2005,BRA-STO-2010}, as shown in Table \ref{tab:capping-canonical}: the
caps should guarantee that each $\AB$-path is closed into a
separate $\AB$-cycle, and each pair composed of an $\AA$- and a $\BB$-path is closed
into an $\AB$-cycle by linking each extremity of the $\AA$-path to one of the two
extremities of the $\BB$-path (there are two possibilities of linking, and any
of the two is optimal). If the numbers of linear chromosomes in~$A$ and in $B$ are
different, there will be some $\AA$- or $\BB$-paths remaining. For each of these
an \emph{artificial} adjacency between caps is created in the
genome with less linear chromosomes, and each artificial adjacency closes
each remaining $\AA$- or $\BB$-path into a separate
$\AB$-cycle.

\begin{table}
\caption{\label{tab:capping-canonical}Linking paths from $R(A,B)$ of canonical genomes.
The symbol $\Gamma_A$ represents an artificial adjacency in $A$ and the symbol $\Gamma_B$ represents an artificial adjacency in $B$.
The value $\Delta d$ corresponds to $\Delta n - \Delta c - \Delta (2i)$. 
}
\begin{center}
\scriptsize
\begin{tabular}[t]{|c|c|r|r|r|r|}
\hline
{\bf ~~~~~~~~paths~~~~~~~~} & {\bf linking $\AB$-cycle} & \boldmath$\Delta n$ & \boldmath $\Delta c$ & \boldmath$\Delta(2i)$ & \boldmath$\Delta d$\\
\hline
\hline
$\AB$ & $(\AB)$ & $+0.5$ & $+1$ & $-0.5$ & $0$\\[1mm]
$\AA + \BB$ & $(\AA, \BB)$ & $+1$ & $+1$ & $0$ & $0$\\
\hline
\multicolumn{6}{c}{}\\
\hline
{\bf remaining paths} & {\bf linking $\AB$-cycle} & \boldmath$\Delta n$ & \boldmath $\Delta c$ & \boldmath$\Delta(2i)$ & \boldmath$\Delta d$\\
\hline
\hline
$\AA$ & $(\AA, \Gamma_B)$ & $+1$ & $+1$ & $0$ & $0$\\[1mm]
$\BB$ & $(\BB, \Gamma_A)$ & $+1$ & $+1$ & $0$ & $0$\\
\hline
\end{tabular}
\end{center}
\end{table}

Let $\kappa_{\!A}$ be the total number of linear chromosomes in $A$ and $\kappa_{\!B}$ be the total number of linear chromosomes in $B$. The difference between the number of $\AA$- or $\BB$-paths is equal to the difference between $\kappa_{\!A}$ and $\kappa_{\!B}$. In other words, if $R(A,B)$ has $a$ $\AA$-paths, $b$ $\BB$-paths and $i$ $\AB$-paths, 
the number of artificial adjacencies in such an optimal capping is exactly $a_*=|\kappa_{\!A}-\kappa_{\!B}|=|a-b|$. Moreover, 
the number of caps to be added is
\[
p_* = \max\{\kappa_{\!A},\kappa_{\!B}\} = \max\{a,b\} + \frac{i}{2}\,.
\]

We can show that the capping described above is optimal by verifying the corresponding DCJ-indel distances. Let the original genomes $A$ and $B$ have $n$ markers and $R(A,B)$ have $c$ $\AB$-cycles, besides the paths.
Then, after capping, the circular genomes $\AC$ and $\BC$ have $n' = n+p_*$ markers and
$R(\AC,\BC)$ has $c'=c+i+\max\{a,b\}$ $\AB$-cycles and no path, so that
\[
\ddcjid(\AC,\BC) = n'-c' = n+\max\{a,b\}+\frac{i}{2}-c-i-\max\{a,b\}=n-c-\frac{i}{2} = \ddcjid(A,B)\,.
\]

An example of an optimal capping of two canonical linear genomes is given in Figure~\ref{fig:capping-canonical}.

\begin{figure}
\begin{center}

\tiny
\setlength{\unitlength}{.8pt}
\begin{tabular}{c@{~~~~~~~~~~~~~~~~~~~~~~~~~~~~~~~~~}c}
\begin{picture}(210,80)

\put(0,0){\color{red}
\qbezier(10,70)(25,40)(40,10)
\qbezier(80,70)(65,40)(50,10)
\dottedline{1}(40,10)(50,10)
}
\put(0,0){\color{blue}
\qbezier(40,70)(25,40)(10,10)
\qbezier(50,70)(65,40)(80,10)
\dottedline{1}(40,70)(50,70)
}

\put(0,0){\color{orange}
\qbezier(140,70)(155,40)(170,10)
\qbezier(210,70)(195,40)(180,10)
\dottedline{1}(170,10)(180,10)
}
\put(0,0){\color{cyan}
\qbezier(170,70)(155,40)(140,10)
\qbezier(180,70)(195,40)(210,10)
\dottedline{1}(170,70)(180,70)
}

\put( 10,70){\circle*{4}}
\put( 40,70){\circle*{4}}
\put( 50,70){\circle*{4}}
\put( 80,70){\circle*{4}}
\put(140,70){\circle*{4}}
\put(170,70){\circle*{4}}
\put(180,70){\circle*{4}}
\put(210,70){\circle*{4}}

\put(-30,80){\makebox(0,0){$A$}}
\put( 10,80){\makebox(0,0){$\tel \tl{\T{2}}$}}
\put( 39,80){\makebox(0,0){$\hd{\T{2}}$}}
\put( 51,80){\makebox(0,0){$\tl{\T{1}}$}}
\put( 80,80){\makebox(0,0){$\hd{\T{1}}\tel$}}
\put(140,80){\makebox(0,0){$\tel\tl{\T{4}}$}}
\put(169,80){\makebox(0,0){$\hd{\T{4}}$}}
\put(181,80){\makebox(0,0){$\tl{\T{3}}$}}
\put(210,80){\makebox(0,0){$\hd{\T{3}}\tel$}}

\put( 10,10){\circle*{4}}
\put( 40,10){\circle*{4}}
\put( 50,10){\circle*{4}}
\put( 80,10){\circle*{4}}
\put(140,10){\circle*{4}}
\put(170,10){\circle*{4}}
\put(180,10){\circle*{4}}
\put(210,10){\circle*{4}}

\put(-30,-2){\makebox(0,0){$B$}}
\put( 10,-2){\makebox(0,0){$\tel \tl{\T{1}}$}}
\put( 39,-2){\makebox(0,0){$\hd{\T{1}}$}}
\put( 51,-2){\makebox(0,0){$\tl{\T{2}}$}}
\put( 80,-2){\makebox(0,0){$\hd{\T{2}}\tel$}}
\put(140,-2){\makebox(0,0){$\tel\tl{\T{3}}$}}
\put(169,-2){\makebox(0,0){$\hd{\T{3}}$}}
\put(181,-2){\makebox(0,0){$\tl{\T{4}}$}}
\put(210,-2){\makebox(0,0){$\hd{\T{4}}\tel$}}

\end{picture} &
\begin{picture}(210,80)
\put(0,0){\color{red}
\qbezier(10,70)(25,40)(40,10)
\qbezier(80,70)(65,40)(50,10)
\dottedline{1}(40,10)(50,10)
}
\put(0,0){\color{blue}
\qbezier(40,70)(25,40)(10,10)
\qbezier(50,70)(65,40)(80,10)
\dottedline{1}(40,70)(50,70)
}

\put(0,0){\color{orange}
\qbezier(130,70)(145,40)(160,10)
\qbezier(200,70)(185,40)(170,10)
\dottedline{1}(160,10)(170,10)
}
\put(0,0){\color{cyan}
\qbezier(160,70)(145,40)(130,10)
\qbezier(170,70)(185,40)(200,10)
\dottedline{1}(160,70)(170,70)
}

\put(0,0){\color{gray}
\drawline(0,70)(0,10)
\dottedline{1}( 0,70)(10,70)
\dottedline{1}( 0,10)(10,10)
\dottedline{1}(80,70)(90,70)
\dottedline{1}(80,10)(90,10)
\drawline(90,70)(90,10)

\drawline(120,70)(120,10)
\dottedline{1}(120,70)(130,70)
\dottedline{1}(120,10)(130,10)
\dottedline{1}(200,70)(210,70)
\dottedline{1}(200,10)(210,10)
\drawline(210,70)(210,10)
}

\put(  0,70){\circle*{4}}
\put( 10,70){\circle*{4}}
\put( 40,70){\circle*{4}}
\put( 50,70){\circle*{4}}
\put( 80,70){\circle*{4}}
\put( 90,70){\circle*{4}}
\put(120,70){\circle*{4}}
\put(130,70){\circle*{4}}
\put(160,70){\circle*{4}}
\put(170,70){\circle*{4}}
\put(200,70){\circle*{4}}
\put(210,70){\circle*{4}}

\put(-40,80){\makebox(0,0){$\AC$}}
\put( -1,80){\makebox(0,0){$\hd{\T{5}}$}}
\put( 11,80){\makebox(0,0){$\tl{\T{2}}$}}
\put( 39,80){\makebox(0,0){$\hd{\T{2}}$}}
\put( 51,80){\makebox(0,0){$\tl{\T{1}}$}}
\put( 79,80){\makebox(0,0){$\hd{\T{1}}$}}
\put( 91,80){\makebox(0,0){$\tl{\T{5}}$}}
\put(119,80){\makebox(0,0){$\hd{\T{6}}$}}
\put(131,80){\makebox(0,0){$\tl{\T{4}}$}}
\put(159,80){\makebox(0,0){$\hd{\T{4}}$}}
\put(171,80){\makebox(0,0){$\tl{\T{3}}$}}
\put(199,80){\makebox(0,0){$\hd{\T{3}}$}}
\put(211,80){\makebox(0,0){$\tl{\T{6}}$}}

\put(  0,10){\circle*{4}}
\put( 10,10){\circle*{4}}
\put( 40,10){\circle*{4}}
\put( 50,10){\circle*{4}}
\put( 80,10){\circle*{4}}
\put( 90,10){\circle*{4}}
\put(120,10){\circle*{4}}
\put(130,10){\circle*{4}}
\put(160,10){\circle*{4}}
\put(170,10){\circle*{4}}
\put(200,10){\circle*{4}}
\put(210,10){\circle*{4}}

\put(-40,-2){\makebox(0,0){$\BC$}}
\put( -1,-2){\makebox(0,0){$\hd{\T{5}}$}}
\put( 11,-2){\makebox(0,0){$\tl{\T{1}}$}}
\put( 39,-2){\makebox(0,0){$\hd{\T{1}}$}}
\put( 51,-2){\makebox(0,0){$\tl{\T{2}}$}}
\put( 79,-2){\makebox(0,0){$\hd{\T{2}}$}}
\put( 91,-2){\makebox(0,0){$\tl{\T{5}}$}}
\put(119,-2){\makebox(0,0){$\hd{\T{6}}$}}
\put(131,-2){\makebox(0,0){$\tl{\T{3}}$}}
\put(159,-2){\makebox(0,0){$\hd{\T{3}}$}}
\put(171,-2){\makebox(0,0){$\tl{\T{4}}$}}
\put(199,-2){\makebox(0,0){$\hd{\T{4}}$}}
\put(211,-2){\makebox(0,0){$\tl{\T{6}}$}}

\end{picture}
\end{tabular}
\end{center}
\caption{\label{fig:capping-canonical}Optimal capping of canonical genomes $A= \{\tel \T{2}\z\T{1}\tel, \tel \T{4}\z\T{3}\tel\}$ and $B=\{\tel \T{1}\z\T{2}\z\tel, \tel\T{3}\z\T{4}\tel \}$ into $\AC=\{(\T{2}\z\T{1}\z\T{5}),(\T{4}\z\T{3}\z\T{6})\}$ and $\BC=\{(\T{1}\z\T{2}\z\T{5}), (\T{3}\z\T{4}\z\T{6})\}$.
Each pair of $\AA$- + $\BB$-path is linked into a separate $\AB$-cycle.}
\end{figure}

\subsection{Singular genomes: correspondence between recombinations and capping}

When exclusive markers occur, we can obtain an optimal capping by simply finding caps that properly link the sources of each recombination group (listed in Table~\ref{tab:recomb-group})
into a single $\AB$-cycle. Indeed, in Table~\ref{tab:cap-group} we give a linking that achieves the optimal $\Delta d$ for each recombination group, followed by the optimal linking of remaining paths.
The remaining paths are treated exactly as the linking of paths in canonical genomes. 
By greedily linking the paths following a top-down order of the referred Table~\ref{tab:cap-group} we clearly obtain an optimal capping that transforms $A$ and $B$ into circular genomes $\AC$ and $\BC$ 
with~$\ddcjid(\AC,\BC) = \ddcjid(A,B)$.
See an example in Figure~\ref{fig:capping-singular}. 

\begin{figure}
\begin{center}
\scriptsize
\setlength{\unitlength}{.85pt}
\begin{picture}(350,80)

      \put(0,0){\color{red}
      \drawline(30,70)(60,70)
      \drawline(90,10)(120,10)
      \qbezier(70,70)(100,40)(130,10)
      \qbezier(140,70)(110,40)(80,10)
      \dottedline{1}(60,70)(70,70)
      \dottedline{1}(80,10)(90,10)
      \dottedline{1}(120,10)(130,10)

      }
      \put(0,0){\color{blue}
      \drawline(10,10)(40,10)
      \qbezier(100,70)(130,40)(160,10)
      \qbezier(110,70)(80,40)(50,10)
      \dottedline{1}(40,10)(50,10)
      \dottedline{1}(100,70)(110,70)
      }
      
      \put(0,0){\color{orange}
      \drawline(190,70)(220,70)
      \drawline(250,10)(280,10)
      \qbezier(230,70)(260,40)(290,10)
      \qbezier(340,70)(290,40)(240,10)
      \dottedline{1}(220,70)(230,70)
      \dottedline{1}(240,10)(250,10)
      \dottedline{1}(280,10)(290,10)
      }
      \put(0,0){\color{cyan}
      \drawline(270,70)(300,70)
      \qbezier(260,70)(290,40)(320,10)
      \qbezier(310,70)(260,40)(210,10)
      \dottedline{1}(260,70)(270,70)
      \dottedline{1}(300,70)(310,70)
      }

      \put( 30,70){\circle*{4}}
      \put( 60,70){\circle*{4}}
      \put( 70,70){\circle*{4}}
      \put(100,70){\circle*{4}}
      \put(110,70){\circle*{4}}
      \put(140,70){\circle*{4}}

      \put(190,70){\circle*{4}}
      \put(220,70){\circle*{4}}
      \put(230,70){\circle*{4}}
      \put(260,70){\circle*{4}}
      \put(270,70){\circle*{4}}
      \put(300,70){\circle*{4}}
      \put(310,70){\circle*{4}}
      \put(340,70){\circle*{4}}

\put(-10,80){\makebox(0,0){$A$}}
\put( 30,80){\makebox(0,0){$\tel\tl{\T{5}}$}}
\put( 59,80){\makebox(0,0){$\hd{\T{5}}$}}
\put( 71,80){\makebox(0,0){$\tl{\T{2}}$}}
\put( 99,80){\makebox(0,0){$\hd{\T{2}}$}}
\put(111,80){\makebox(0,0){$\tl{\T{1}}$}}
\put(140,80){\makebox(0,0){$\hd{\T{1}}\tel$}}
\put(190,80){\makebox(0,0){$\tel\tl{\T{5}}$}}
\put(219,80){\makebox(0,0){$\hd{\T{5}}$}}
\put(231,80){\makebox(0,0){$\tl{\T{4}}$}}
\put(259,80){\makebox(0,0){$\hd{\T{4}}$}}
\put(271,80){\makebox(0,0){$\tl{\T{5}}$}}
\put(299,80){\makebox(0,0){$\hd{\T{5}}$}}
\put(311,80){\makebox(0,0){$\tl{\T{3}}$}}
\put(340,80){\makebox(0,0){$\hd{\T{3}}\tel$}}

      \put( 10,10){\circle*{4}}
      \put( 40,10){\circle*{4}}
      \put( 50,10){\circle*{4}}
      \put( 80,10){\circle*{4}}
      \put( 90,10){\circle*{4}}
      \put(120,10){\circle*{4}}
      \put(130,10){\circle*{4}}
      \put(160,10){\circle*{4}}
      
      \put(210,10){\circle*{4}}
      \put(240,10){\circle*{4}}
      \put(250,10){\circle*{4}}
      \put(280,10){\circle*{4}}
      \put(290,10){\circle*{4}}
      \put(320,10){\circle*{4}}
      
      \put(-10,-2){\makebox(0,0){$B$}}
      \put( 10,-2){\makebox(0,0){$\tel\tl{\T{6}}$}}
      \put( 39,-2){\makebox(0,0){$\hd{\T{6}}$}}
      \put( 51,-2){\makebox(0,0){$\tl{\T{1}}$}}
      \put( 79,-2){\makebox(0,0){$\hd{\T{1}}$}}
      \put( 91,-2){\makebox(0,0){$\tl{\T{6}}$}}
      \put(119,-2){\makebox(0,0){$\hd{\T{6}}$}}
      \put(131,-2){\makebox(0,0){$\tl{\T{2}}$}}
      \put(160,-2){\makebox(0,0){$\hd{\T{2}}\tel$}}
      \put(210,-2){\makebox(0,0){$\tel\tl{\T{3}}$}}
      \put(239,-2){\makebox(0,0){$\hd{\T{3}}$}}
      \put(251,-2){\makebox(0,0){$\tl{\T{6}}$}}
      \put(279,-2){\makebox(0,0){$\hd{\T{6}}$}}
      \put(291,-2){\makebox(0,0){$\tl{\T{4}}$}}
      \put(320,-2){\makebox(0,0){$\hd{\T{4}}\tel$}}

    \end{picture} 
    
    \vspace{10mm}
    
    \begin{picture}(330,110)
       \put(0,0){\color{red}
      \drawline(10,100)(40,100)
      \drawline(90,10)(120,10)
      \qbezier(50,100)(90,55)(130,10)
      \qbezier(120,100)(100,55)(80,10)
      \dottedline{1}(40,100)(50,100)
      \dottedline{1}(80,10)(90,10)
      \dottedline{1}(120,10)(130,10)

      }
      \put(0,0){\color{blue}
      \drawline(10,10)(40,10)
      \qbezier(80,100)(120,55)(160,10)
      \qbezier(90,100)(70,55)(50,10)
      \dottedline{1}(40,10)(50,10)
      \dottedline{1}(80,100)(90,100)
      }
      
      \put(0,0){\color{orange}
      \drawline(170,100)(200,100)
      \drawline(250,10)(280,10)
      \qbezier(210,100)(250,55)(290,10)
      \qbezier(320,100)(280,55)(240,10)
      \dottedline{1}(200,100)(210,100)
      \dottedline{1}(240,10)(250,10)
      \dottedline{1}(280,10)(290,10)
      }
      \put(0,0){\color{cyan}
      \drawline(250,100)(280,100)
      \qbezier(240,100)(280,55)(320,10)
      \qbezier(290,100)(250,55)(210,10)
      \dottedline{1}(240,100)(250,100)
      \dottedline{1}(280,100)(290,100)
      }
      
      \put(0,0){\color{gray}
      \dottedline{1}(0,100)(10,100)
      \dottedline{1}(120,100)(130,100)
      \dottedline{1}(160,100)(170,100)
      \dottedline{1}(320,100)(330,100)
      \dottedline{1}(0,10)(10,10)
      \dottedline{1}(160,10)(170,10)
      \dottedline{1}(200,10)(210,10)
      \dottedline{1}(320,10)(330,10)

      \qbezier[150](0,100)(72,67)(165,55)
      \qbezier[150](165,55)(257,43)(330,10)
      \qbezier(130,100)(150,55)(170,10)
      \qbezier(160,100)(180,55)(200,10)
      \qbezier[100](330,100)(257,67)(165,55)
      \qbezier[100](165,55)(72,43)(0,10)
      }
      
      \put(  0,100){\circle*{4}}
      \put( 10,100){\circle*{4}}
      \put( 40,100){\circle*{4}}
      \put( 50,100){\circle*{4}}
      \put( 80,100){\circle*{4}}
      \put( 90,100){\circle*{4}}
      \put(120,100){\circle*{4}}
      \put(130,100){\circle*{4}}

      \put(160,100){\circle*{4}}
      \put(170,100){\circle*{4}}
      \put(200,100){\circle*{4}}
      \put(210,100){\circle*{4}}
      \put(240,100){\circle*{4}}
      \put(250,100){\circle*{4}}
      \put(280,100){\circle*{4}}
      \put(290,100){\circle*{4}}
      \put(320,100){\circle*{4}}
      \put(330,100){\circle*{4}}

\put(-20,110){\makebox(0,0){$\AC$}}
\put( -1,110){\makebox(0,0){$\tl{\T{7}}$}}
\put( 11,110){\makebox(0,0){$\tl{\T{5}}$}}
\put( 39,110){\makebox(0,0){$\hd{\T{5}}$}}
\put( 51,110){\makebox(0,0){$\tl{\T{2}}$}}
\put( 79,110){\makebox(0,0){$\hd{\T{2}}$}}
\put( 91,110){\makebox(0,0){$\tl{\T{1}}$}}
\put(119,110){\makebox(0,0){$\hd{\T{1}}$}}
\put(131,110){\makebox(0,0){$\tl{\T{8}}$}}
\put(159,110){\makebox(0,0){$\hd{\T{8}}$}}
\put(171,110){\makebox(0,0){$\tl{\T{5}}$}}
\put(199,110){\makebox(0,0){$\hd{\T{5}}$}}
\put(211,110){\makebox(0,0){$\tl{\T{4}}$}}
\put(239,110){\makebox(0,0){$\hd{\T{4}}$}}
\put(251,110){\makebox(0,0){$\tl{\T{5}}$}}
\put(279,110){\makebox(0,0){$\hd{\T{5}}$}}
\put(291,110){\makebox(0,0){$\tl{\T{3}}$}}
\put(319,110){\makebox(0,0){$\hd{\T{3}}$}}
\put(331,110){\makebox(0,0){$\hd{\T{7}}$}}

      \put(  0,10){\circle*{4}}
      \put( 10,10){\circle*{4}}
      \put( 40,10){\circle*{4}}
      \put( 50,10){\circle*{4}}
      \put( 80,10){\circle*{4}}
      \put( 90,10){\circle*{4}}
      \put(120,10){\circle*{4}}
      \put(130,10){\circle*{4}}
      \put(160,10){\circle*{4}}
      \put(170,10){\circle*{4}}
      \put(200,10){\circle*{4}}
      \put(210,10){\circle*{4}}
      \put(240,10){\circle*{4}}
      \put(250,10){\circle*{4}}
      \put(280,10){\circle*{4}}
      \put(290,10){\circle*{4}}
      \put(320,10){\circle*{4}}
      \put(330,10){\circle*{4}}
      
      \put(-20,-2){\makebox(0,0){$\BC$}}
      \put( -1,-2){\makebox(0,0){$\hd{\T{7}}$}}
      \put( 11,-2){\makebox(0,0){$\tl{\T{6}}$}}
      \put( 39,-2){\makebox(0,0){$\hd{\T{6}}$}}
      \put( 51,-2){\makebox(0,0){$\tl{\T{1}}$}}
      \put( 79,-2){\makebox(0,0){$\hd{\T{1}}$}}
      \put( 91,-2){\makebox(0,0){$\tl{\T{6}}$}}
      \put(119,-2){\makebox(0,0){$\hd{\T{6}}$}}
      \put(131,-2){\makebox(0,0){$\tl{\T{2}}$}}
      \put(159,-2){\makebox(0,0){$\hd{\T{2}}$}}
      \put(171,-2){\makebox(0,0){$\tl{\T{8}}$}}
      \put(199,-2){\makebox(0,0){$\hd{\T{8}}$}}
      \put(211,-2){\makebox(0,0){$\tl{\T{3}}$}}
      \put(239,-2){\makebox(0,0){$\hd{\T{3}}$}}
      \put(251,-2){\makebox(0,0){$\tl{\T{6}}$}}
      \put(279,-2){\makebox(0,0){$\hd{\T{6}}$}}
      \put(291,-2){\makebox(0,0){$\tl{\T{4}}$}}
      \put(319,-2){\makebox(0,0){$\hd{\T{4}}$}}
      \put(331,-2){\makebox(0,0){$\tl{\T{7}}$}}
      
    \end{picture}
 
  \end{center}
  \caption{\label{fig:capping-singular}Optimal capping of singular genomes $A= \{\tel \T{5}\z\T{2}\z\T{1}\tel, \tel \T{5}\z\T{4}\z\T{5}\z\T{3}\tel\}$ and $B=\{\tel \T{6}\z\T{1}\z\T{6}\z\T{2}\z\tel, \tel\T{3}\z\T{6}\z\T{4}\tel \}$ into $\AC=\{(\rev{\T{7}}\z\T{5}\z\T{2}\z\T{1\z}\T{8}\z\T{5}\z\T{4}\z\T{5}\z\T{3})\}$ and $\BC=\{(\T{7}\z\T{6}\z\T{1}\z\T{6}\z\T{2}\z\T{8}\z\T{3}\z\T{6}\z\T{4})\}$.
  This capping shows how to optimally link the four sources of the chained recombinations of Figure~\ref{fig:chained-recomb} into a single $\AB$-cycle.}
\end{figure}

Furthermore, similarly to the case of canonical genomes, the numbers of artificial adjacencies and caps in such a capping are respectively $a_*=|\kappa_{\!A}-\kappa_{\!B}|$ and~$p_* = \max\{\kappa_{\!A},\kappa_{\!B}\}$ as we will show in the following.

In Table~\ref{tab:cap-group} we can observe that there are two types of groups: (i) balanced, that contain the same number of $\AA$- and $\BB$-paths, and (ii) unbalanced, in which the numbers of $\AA$- and $\BB$-paths are distinct. Unbalanced groups require some extra elements to link the cycle. These elements can be indel-free $\AA$- or $\BB$-paths (of the type that is under-represented in the group) or, if these paths do not exist, artificial adjacencies either in genome $A$ or in genome $B$ (again, of the genome that is under-represented in the group). We then need to examine these unbalanced groups to determine the number of caps and of artificial adjacencies that are required for an optimal capping.

\begin{table}
\caption{Linking sources of chained recombination groups from Table~\ref{tab:recomb-group}.
The symbol $\Gamma_A$ represents an artificial adjacency in $A$ and the symbol $\Gamma_B$ represents an artificial adjacency in $B$. The notation $\AA_\no\!\prec\!\Gamma_A$ means that an $\AA_\no$-path is preferred to close the cycle, but if it does not exist, we take an artificial adjacency in $A$.
In order to give the correct order of linking, we sometimes need to represent a path $\ABab$ by $\BAba$ and a path $\ABba$ by $\BAab$. 
The value $\Delta d$ corresponds to $\Delta n - \Delta c - \Delta (2i) + \Delta \lambda$. Unbalanced groups over-represented in genome $A$ are marked with a ``$\cup$'', while unbalanced groups over-represented in genome $B$ are marked with a ``$\cap$'' .
}
\label{tab:cap-group}
\begin{center}
\scriptsize
\begin{tabular}[t]{|rc|l|l|c|r|r|r|r|r|}
\hline
& {\bf id} &  {\bf sources} & {\bf linking \boldmath$\AB$-cycle} & & \boldmath$\Delta n$ & \boldmath $\Delta c$ & \boldmath$\Delta(2i)$ & \boldmath$\Delta \lambda$ & \boldmath$\Delta d$\\
\hline
\hline
$\mathcal P$ & $\W\M$ & $\AAab + \BBab$ & $(\AAab, \BBba)$ & & $+1$ & $+1$ & $0$ & $-2$ & $-2$\\
\hline
\hline
& & & & && & &&\\[-3mm]
$\mathcal Q$ 
& $\W\W\MA\MB$ & $2\times\AAab + \BBaa + \BBbb$ & $(\AAab, \BBbb, \AAba, \BBaa)$ & & $+2$ & $+1$ & $0$ & $-4$ & $-3$\\[1mm]
& $\M\M\WA\WB$ & $2\times\BBab + \AAaa + \AAbb$ & $(\BBab, \AAbb, \BBba, \AAaa)$ & & $+2$ & $+1$ & $0$ & $-4$ & $-3$\\
\hline
\hline
& & & & && & &&\\[-3mm]
$\mathcal T$ 
& $\W\Z\MA$ & $\AAab + \BBaa + \ABab$ & $(\ABab, \AAba, \BBaa)$ & & $+1.5$ & $+1$ & $-0.5$ & $-3$ & $-2$\\
& $\W\W\MA$ & $2\times\AAab + \BBaa$ & $(\AAba, \BBaa, \AAab, \BB_\no\!\prec\!\Gamma_B)$ & $\cup$& $+2$ & $+1$ & $0$ & $-3$ & $-2$\\[1mm]
& $\W\N\MB$ & $\AAab + \BBbb + \ABba$ & $(\ABba, \AAab, \BBbb)$ & & $+1.5$ & $+1$ & $-0.5$ & $-3$ & $-2$\\
& $\W\W\MB$ & $2\times\AAab + \BBbb$ & $(\AAab, \BBaa, \AAab, \BB_\no\!\prec\!\Gamma_B)$ & $\cup$& $+2$ & $+1$ & $0$ & $-3$ & $-2$\\[1mm]
& $\M\N\WA$ & $\BBab + \AAaa + \ABba$ & $(\ABba, \AAaa, \BBab)$ & & $+1.5$ & $+1$ & $-0.5$ & $-3$ & $-2$\\
& $\M\M\WA$ & $2\times\BBab + \AAaa$ & $(\BBba, \AAaa, \BBab, \AA_\no\!\prec\!\Gamma_A)$ &$\cap$& $+2$ & $+1$ & $0$ & $-3$ & $-2$\\[1mm]
& $\M\Z\WB$ & $\BBab + \AAbb + \ABab$& $(\ABab, \AAbb, \BBba)$ & & $+1.5$ & $+1$ & $-0.5$ & $-3$ & $-2$\\
& $\M\M\WB$ & $2\times\BBab + \AAbb$& $(\BBab, \AAbb, \BBba, \AA_\no\!\prec\!\Gamma_A)$ &$\cap$& $+2$ & $+1$ & $0$ & $-3$ & $-2$\\[1mm]
\hline
\hline
$\mathcal S$ 
&$\Z\N$ & $\ABab + \ABba$  & $(\ABab, \ABba)$ && $+1$ & $+1$ & $-1$ & $-2$ & $-1$\\[1mm]
& $\WA\MA$ & $\AAaa + \BBaa$  & $(\AAaa, \BBaa)$ && $+1$ & $+1$ & $0$ & $-1$ & $-1$\\
& $\WB\MB$ & $\AAbb + \BBbb$  & $(\AAbb, \BBbb)$ && $+1$ & $+1$ & $0$ & $-1$ & $-1$\\[1mm]
& $\W\MA$ & $\AAab + \BBaa$  & $(\AAba, \BBaa)$ && $+1$ & $+1$ & $0$ & $-1$ & $-1$\\[1mm]
& $\W\MB$ & $\AAab + \BBbb$  & $(\AAab, \BBbb)$ && $+1$ & $+1$ & $0$ & $-1$ & $-1$\\[1mm]
& $\W\Z$ & $\AAab + \ABab$ & $(\AAba,\BB_\no\!\prec\!\Gamma_B, \ABab)$ &$\cup$&  $+1.5$ & $+1$ & $-0.5$ & $-2$ & $-1$\\[1mm]
& $\W\N$  & $\AAab + \ABba$ & $(\AAab,\BB_\no\!\prec\!\Gamma_B, \ABba)$ &$\cup$& $+1.5$ & $+1$ & $-0.5$ & $-2$ & $-1$\\[1mm]
& $\W\W$& $\AAab + \AAab$  & $(\AAab, \BB_\no\!\prec\!\Gamma_B, \AAba, \BB_\no\!\prec\!\Gamma_B)$ &$\cup$& $+2$ & $+1$ & $0$ & $-2$ & $-1$\\[1mm]
& $\M\WA$ & $\BBab + \AAaa$  & $(\AAaa, \BBab)$ && $+1$ & $+1$ & $0$ & $-1$ & $-1$\\[1mm]
& $\M\WB$ & $\BBab + \AAbb$  & $(\AAbb, \BBba)$ && $+1$ & $+1$ & $0$ & $-1$ & $-1$\\[1mm]
&$\M\Z$& $\BBab + \ABab$ & $(\BBba,\ABab,\AA_\no\!\prec\!\Gamma_A)$ &$\cap$& $+1.5$ & $+1$ & $-0.5$ & $-2$ & $-1$\\[1mm]
&$\M\N$& $\BBab + \ABba$ & $(\BBab,\ABba,\AA_\no\!\prec\!\Gamma_A)$ &$\cap$& $+1.5$ & $+1$ & $-0.5$ & $-2$ & $-1$\\[1mm]
&$\M\M$&  $\BBab + \BBab$ & $(\BBab, \AA_\no\!\prec\!\Gamma_A, \BBba, \AA_\no\!\prec\!\Gamma_A)$ &$\cap$& $+2$ & $+1$ & $0$ & $-2$ & $-1$\\
\hline
\hline
& & & & && & &&\\[-3mm]
$\mathcal M$ 
&$\Z\Z\WB\MA$& $2\times\ABab + \AAbb + \BBaa$  & $(\ABab, \AAbb, \BAba, \BBaa)$ && $+2$ & $+1$ & $-1$ & $-4$ & $-2$\\[1mm]
&$\N\N\WA\MB$& $2\times\ABba + \AAaa + \BBbb$ & $(\ABba, \AAaa, \BAab, \BBbb)$ && $+2$ & $+1$ & $-1$ & $-4$ & $-2$\\
\hline
\hline
& & & & && & &&\\[-3mm]
$\mathcal N$ 
&$\Z\WB\MA$& $\ABab + \AAbb + \BBaa$ & $(\ABab, \AAbb, \BBaa)$ && $+1.5$ & $+1$ & $-0.5$ & $-2$ & $-1$\\[1mm]
&$\Z\Z\WB$& $2\times\ABab + \AAbb$ & $(\ABab, \AAbb, \BAba, \BB_\no\!\prec\!\Gamma_B)$ &$\cup$& $+2$ & $+1$ & $-1$ & $-3$ & $-1$\\[1mm]
&$\Z\Z\MA$& $2\times\ABab + \BBaa$ & $(\BAba, \BBaa, \ABab, \AA_\no\!\prec\!\Gamma_A)$ &$\cap$& $+2$ & $+1$ & $-1$ & $-3$ & $-1$\\[1mm]
&$\N\WA\MB$& $\ABba + \AAaa + \BBbb$ & $(\ABba, \AAaa, \BBbb)$ && $+1.5$ & $+1$ & $-0.5$ & $-2$ & $-1$\\[1mm]
&$\N\N\WA$& $2\times\ABba + \AAaa$ & $(\ABba, \AAaa, \BAab, \BB_\no\!\prec\!\Gamma_B)$ &$\cup$& $+2$ & $+1$ & $-1$ & $-3$ & $-1$\\[1mm]
&$\N\N\MB$& $2\times\ABba + \BBbb$ & $(\BAab, \BBbb, \ABba, \AA_\no\!\prec\!\Gamma_A)$ &$\cap$& $+2$ & $+1$ & $-1$ & $-3$ & $-1$\\
\hline
\hline
& & {\bf remaining paths} & {\bf linking $\AB$-cycle} & & \boldmath$\Delta n$ & \boldmath $\Delta c$ & \boldmath$\Delta(2i)$ & \boldmath$\Delta \lambda$ & \boldmath$\Delta d$\\
\hline
\hline
& & $\AB_*$ & $(\AB_*)$ && $+0.5$ & $+1$ & $-0.5$ & $0$ & $0$\\
& & $\AA_* + \BB_*$ & $(\AA_*, \BB_*)$ && $+1$ & $+1$ & $0$ & $0$ & $0$\\
\hline
\hline
& & $\AA_*$ & $(\AA_*, \Gamma_B)$ &$\cup$& $+1$ & $+1$ & $0$ & $0$ & $0$\\[1mm]
& & $\BB_*$ & $(\BB_*, \Gamma_A)$ &$\cap$& $+1$ & $+1$ & $0$ & $0$ & $0$\\
\hline

\end{tabular}
\end{center}
\end{table}

\begin{proposition}\label{prop:one-type-unbalanced}
After identifying the recombination groups, either we have only unbalanced groups that are over-represented in genome $A$ or we have only unbalanced groups that are over-represented in genome $B$.
\end{proposition}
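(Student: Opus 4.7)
The plan is to argue by symmetry between the $A$ and $B$ sides: it suffices to show the greedy procedure cannot simultaneously produce a $\cup$ group and a $\cap$ group. The proof proceeds by a stage-by-stage analysis of $\mathcal P$, $\mathcal Q$, $\mathcal T$, $\mathcal S$, $\mathcal M$, $\mathcal N$, exploiting invariants enforced by the balanced groups that take precedence at each stage.

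The foundational observation is that no stage creates new paths of type $\W$ or $\M$, so $\mathcal P$'s saturation of $\W\M$ pairs (leaving $\min(\W,\M)=0$) persists to every subsequent stage. Since every $\cup$ group in $\mathcal T \cup \mathcal S$ requires $\W \geq 1$ while every $\cap$ group in these stages requires $\M \geq 1$, no $\cup$/$\cap$ coexistence within $\mathcal T \cup \mathcal S$ is possible. For stage $\mathcal N$, whose unbalanced groups depend on $\WA, \WB, \MA, \MB$ rather than on $\W, \M$, the balanced $\mathcal S$-groups $\WA\MA$ and $\WB\MB$ ensure $\min(\WA,\MA)=\min(\WB,\MB)=0$ after $\mathcal S$, and the balanced $\mathcal M$-groups $\Z\Z\WB\MA$, $\N\N\WA\MB$ together with the balanced $\mathcal N$-groups $\Z\WB\MA$, $\N\WA\MB$ pair up any surviving cross-combinations. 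A case split on which of $\{\WA,\MA\}$ and $\{\WB,\MB\}$ is emptied in $\mathcal S$ then shows that at most one of $\cup$/$\cap$ is applicable among the $\mathcal N$-unbalanced groups.

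The remaining cross-stage combinations---a $\cup$ group from $\mathcal T$ or $\mathcal S$ together with a $\cap$ group from $\mathcal N$ (and vice versa)---require a more detailed enumeration. For each concrete pair I will derive a contradiction from the preconditions that triggered the unbalanced group. As a prototypical case, $\W\W\MA$ in $\mathcal T$ is applied only once the balanced $\W\Z\MA$ has exhausted $\Z$ (and $\Z$ is not created by any later stage), ruling out $\Z\Z\MA$ in $\mathcal N$; moreover the applicability of $\W\W\MA$ after $\mathcal Q$ forces the preceding $\mathcal Q$-step $\W\W\MA\MB$ to have been limited by $\MB$ rather than by $\W$ or $\MA$ (otherwise either $\W<2$ or $\MA=0$ post-$\mathcal Q$, contradicting applicability of $\W\W\MA$), leaving $\MB=0$ and thereby blocking $\N\N\MB$ as well. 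The main obstacle is handling these cross-cases uniformly; the cleanest route is to observe that every applicability condition for a $\cup$ group forces the $\M$-side counterpart required by a matching $\cap$ group in $\mathcal N$ to be already exhausted at the moment the $\cup$ group is formed, and this exhaustion persists to $\mathcal N$ because the relevant path types are not created in subsequent stages.
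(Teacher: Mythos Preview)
Your proposal is correct and rests on essentially the same mechanism as the paper's proof: both exploit that greedy saturation of the balanced groups preceding each stage exhausts the path types that a later unbalanced group of the opposite orientation would need. The paper phrases this as a repackaging observation (the joint sources of a coexisting $\cup$/$\cap$ pair would have been consumed by an earlier balanced group of equal or better $\Delta d$, e.g.\ $\Z\Z\WB+\M\M\WB$ becomes $2\times\M\Z\WB$), whereas you phrase it operationally by tracking which counts hit zero at each stage; unpacking the paper's argument yields exactly your invariant-tracking, so the two are the same proof in different clothing.
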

\begin{proof}
It is clear that, after $\mathcal P$ and until $\mathcal N$, we have either only groups $\W*$ (over-represented in $A$), or only groups $\M*$ (over-represented in $B$). The question is whether groups in $\mathcal N$ that are over-represented in $B$ are compatible with previous groups of type $\W*$ and, symmetrically, whether groups in $\mathcal N$ that are over-represented in $A$ are compatible with previous groups of type $\M*$. 
 
Let us examine the case of group $\Z\Z\WB$. (i) At a first glance one could think that this group is compatible with $\M\M\WB$. However, if all components of these two unbalanced groups would be in the diagram, we would instead have two times the group $\M\Z\WB$, that is balanced and located before the two other groups in the table (observe that $2\times\M\Z\WB$ has a smaller $\Delta d$ than $\Z\Z\WB+\M\M\WB$). (ii) When we test the compatibility of $\Z\Z\WB$ with $\M\M\WA$, we see that with the same components we would get $\M\M\WA\WB$, that is balanced and located before the two other groups in the table (observe that $\M\M\WA\WB$ has the same $\Delta d$ as $\Z\Z\WB+\M\M\WA$).
 
With a similar analysis we can show that for all cases either we have only unbalanced groups that are over-represented in genome $A$ or we have only unbalanced groups that are over-represented in genome $B$.
\qed
\end{proof}

\begin{proposition}\label{prop:close-unbalanced}
When an unbalanced group is being linked, either there is a remaining $\AA$- or $\BB$-path (of the genome that is under-represented), that is then used to link the group, or there is no remaining $\AA$- or $\BB$-path (of the genome that is under-represented) and an artificial adjacency links the group.
\end{proposition}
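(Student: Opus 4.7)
By Proposition~\ref{prop:one-type-unbalanced}, I may assume without loss of generality that every unbalanced group over-represents genome~$A$; the symmetric case follows by swapping the roles of~$A$ and~$B$. Inspecting Table~\ref{tab:cap-group}, every such $\cup$-group requires for its linking $\AB$-cycle a closing element drawn from the under-represented genome~$B$, either a $\BBno$-path or an artificial adjacency~$\Gamma_B$.

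The plan is to establish that, at the moment a $\cup$-group is about to be linked, every $\BB$-path that contains at least one run (that is, of type $\BBab$, $\BBaa$, or $\BBbb$) has already been consumed by an earlier group. This is precisely the content of the exchange argument used in the proof of Proposition~\ref{prop:one-type-unbalanced}: any indel-enclosing $\BB$-path still coexisting with a source of a $\cup$-group could be recombined with one of the over-representing $\AA$-paths to produce a balanced group located higher in the table, with a strictly smaller $\Delta d$; this would contradict the greedy top-down processing that defines the optimal capping. Consequently, the only $\BB$-paths that may still be present when a $\cup$-group is linked are indel-free $\BBno$-paths, and the dichotomy of the proposition is exhaustive: either at least one $\BBno$-path is available, in which case it is used to close the cycle (as prescribed by the notation $\BBno\!\prec\!\Gamma_B$), or no $\BB$-path remains at all and an artificial adjacency $\Gamma_B$ must be introduced in~$B$.

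The main obstacle will be the case-by-case verification that each indel-enclosing type of $\BB$-path is indeed exhausted before any $\cup$-group is reached. This check is mechanical: for each row of Table~\ref{tab:cap-group} marked~$\cup$, one enumerates the potentially coexisting $\BB$-paths of types $\BBab$, $\BBaa$, $\BBbb$ and exhibits, for each combination, an earlier balanced group whose $\Delta d$ is strictly smaller, exactly in the style of the analysis of $\Z\Z\WB$ already carried out in the proof of Proposition~\ref{prop:one-type-unbalanced}. I do not foresee any genuine conceptual difficulty beyond this bookkeeping.
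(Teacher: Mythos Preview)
Your plan follows the paper's general strategy, but it overlooks a case that the paper treats separately and that breaks your uniform contradiction argument. You claim that whenever an indel-enclosing $\BB$-path coexists with a $\cup$-group, one can always assemble a balanced group \emph{higher} in the table with \emph{strictly smaller} $\Delta d$. This fails for the unbalanced groups in $\mathcal{T}$. Take $\W\W\MA$ (sources $2\times\AAab + \BBaa$, $\Delta d = -2$) and suppose an additional $\BBaa$-path remains. No group in $\mathcal{P}$ or $\mathcal{Q}$ absorbs the combination $2\times\AAab + 2\times\BBaa$: $\W\M$ needs a $\BBab$, and $\W\W\MA\MB$ needs a $\BBbb$. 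So there is no higher balanced group to invoke, and the greedy order does not by itself exclude this configuration.

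The paper handles this not by contradiction but by an \emph{alternative linking}: instead of $\W\W\MA$ plus a leftover $\BBaa$, one links the four paths as two copies of the balanced group $\W\MA$ from $\mathcal{S}$, each with $\Delta d=-1$, for the same total $\Delta d=-2$. Both copies are balanced, so no extra $\BBno$ or $\Gamma_B$ is needed, and optimality is preserved. Only after disposing of this $\mathcal{T}$ subtlety does the paper apply your style of argument to the remaining $\cup$-groups in $\mathcal{S}$ and $\mathcal{N}$, where indeed any surviving indel-enclosing $\BB$-path would yield a balanced group higher up (and there the paper only claims ``at least the same $\Delta d$'', not strictly smaller). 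Your bookkeeping sketch therefore needs an extra ingredient for the $\mathcal{T}$ rows: an equal-cost re-grouping rather than a contradiction.
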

\begin{proof}
First we observe that, after distributing all paths of the relational diagram among the recombination groups, following the top-down greedy approach, there could be $\AA$- and/or $\BB$-paths remaining, that were not assigned to any group, and they might be useful to link unbalanced groups.
We will now examine the procedure of linking the unbalanced groups either with those remaining paths or with artificial adjacencies.

A particular case are the unbalanced groups from $\mathcal T$.
Since all unbalanced groups in $\mathcal T$ have analogous compositions, without loss of generality, suppose a group over-represented in genome $A$ of type $\W\W\MA$ is being linked.
If, at this point, there is a remaining indel-enclosing $\BB$-path, it cannot be $\BBab$ or $\BBbb$, otherwise with the components of the group being linked and the existing remaining path we could form a balanced group that appears in a higher position of the table, with at least the same $\Delta d$, which is a contradiction.
We could however have an extra $\BBaa$-path.
In this case we would take the alternative solution of linking each pair $\AAab$+$\BBaa$ into a separate cycle, that is twice group $\W\MA$ of $\mathcal S$, achieving the same $\Delta d$.
If no $\BBaa$-path remains, we would have the standard linking of the three paths into a single cycle including either an indel-free $\BB$-path or an artificial adjacency in $B$.

The unbalanced groups from $\mathcal S$ or $\mathcal N$ are easier to analyze:
if one of these groups,
over-represented in genome $A$ (respectively in genome $B$), is being linked, there cannot be any remaining indel-enclosing $\BB$-path (respectively $\AA$-path).
We can verify this by supposing, without loss of generality, that an unbalanced group over-represented in genome $A$ is being linked. If, at this point, there is a remaining indel-enclosing $\BB$-path, then with the components of the group being linked and the existing remaining path we could form a balanced group that appears in a higher position of the table, with at least the same $\Delta d$, which is a contradiction. 
\qed
\end{proof}

Propositions~\ref{prop:one-type-unbalanced} 
and~\ref{prop:close-unbalanced} prove the following result.

\begin{theorem}\label{thm:capping-singular}
Let $\kappa_{\!A}$ and $\kappa_{\!B}$ be, respectively, the total numbers of linear chromosomes in singular genomes $A$ and $B$. We can obtain an optimal capping of $A$ and $B$ with exactly
\[
p_* = \max\{\kappa_{\!A},\kappa_{\!B}\}
\]
caps and $a_*=|\kappa_{\!A}-\kappa_{\!B}|$ artificial adjacencies between caps.
\end{theorem}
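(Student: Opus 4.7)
My plan is to derive Theorem~\ref{thm:capping-singular} directly from Propositions~\ref{prop:one-type-unbalanced} and~\ref{prop:close-unbalanced}, using a short degree-counting invariant on cap-extremities. Without loss of generality assume $\kappa_{\!A} \geq \kappa_{\!B}$; the symmetric case is handled identically after swapping the roles of $A$ and $B$. Proposition~\ref{prop:one-type-unbalanced} then forces every unbalanced recombination group produced by the greedy procedure of Table~\ref{tab:recomb-group} to be over-represented in genome $A$, and Proposition~\ref{prop:close-unbalanced} guarantees that the closing strategy of Table~\ref{tab:cap-group} is feasible: each unbalanced group, as well as each remaining $\AA$-path, is completed either by absorbing an indel-free $\BBno$-path already present in the diagram or by introducing a fresh artificial adjacency $\Gamma_B$. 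In particular, no $\Gamma_A$ adjacency is ever introduced; writing $\alpha_{\!A},\alpha_{\!B}$ for the numbers of artificial adjacencies placed into $A$ and $B$ respectively, this immediately gives $\alpha_{\!A} = 0$ and $a_* = \alpha_{\!A} + \alpha_{\!B} = \alpha_{\!B}$.

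Next, I would run a telomere-counting argument in each of the capped genomes. Since $\AC$ and $\BC$ are entirely circular, every vertex has degree two, so every original telomere is adjacent to exactly one cap-extremity. This consumes $2\kappa_{\!A}$ cap-extremities in $A$ and $2\kappa_{\!B}$ cap-extremities in $B$, while the cap-extremities not used in this way are matched in pairs by artificial adjacencies, contributing $2\alpha_{\!A}$ more in $A$ and $2\alpha_{\!B}$ more in $B$. Since every cap-marker carries exactly two extremities per genome, the total number of cap-markers $p_*$ satisfies
\[
p_* \;=\; \kappa_{\!A} + \alpha_{\!A} \;=\; \kappa_{\!B} + \alpha_{\!B}\,.
\]
Substituting $\alpha_{\!A}=0$ yields $p_* = \kappa_{\!A} = \max\{\kappa_{\!A},\kappa_{\!B}\}$ and $\alpha_{\!B} = \kappa_{\!A}-\kappa_{\!B}$, hence $a_* = |\kappa_{\!A}-\kappa_{\!B}|$, which is exactly the claim.

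The main obstacle is really lodged in Proposition~\ref{prop:one-type-unbalanced}: without its one-sided guarantee, both $\alpha_{\!A}$ and $\alpha_{\!B}$ could be strictly positive and the counting invariant would not collapse to the clean maxima and absolute values stated in the theorem. Once both propositions are in hand, the remainder is a routine degree count that is insensitive to whether individual caps circularize a single chromosome, fuse two chromosomes, or form a stand-alone cap pair, because the invariant only tracks the multiset of cap-extremities adjacent to telomeres versus those consumed by artificial adjacencies.
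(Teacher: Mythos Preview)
Your overall strategy matches the paper's: the paper also derives Theorem~\ref{thm:capping-singular} directly from Propositions~\ref{prop:one-type-unbalanced} and~\ref{prop:close-unbalanced}, and your degree-counting invariant $p_* = \kappa_{\!A} + \alpha_{\!A} = \kappa_{\!B} + \alpha_{\!B}$ is exactly the right bookkeeping device to turn those propositions into the stated formulas. However, one step in your chain of implications is not justified and is in fact false.

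You write that, after assuming $\kappa_{\!A}\geq\kappa_{\!B}$, ``Proposition~\ref{prop:one-type-unbalanced} then forces every unbalanced recombination group \dots\ to be over-represented in genome $A$.'' Proposition~\ref{prop:one-type-unbalanced} asserts only that all unbalanced groups lie on the \emph{same} side; it says nothing about which side, and the side is \emph{not} determined by the sign of $\kappa_{\!A}-\kappa_{\!B}$. A concrete counterexample: take three $\AA_\no$-paths and two $\BBab$-paths (so $a=3$, $b=2$, $i=0$, hence $\kappa_{\!A}=3>\kappa_{\!B}=2$). No group in $\mathcal P$, $\mathcal Q$, $\mathcal T$ applies (there is no $\AAab$, $\AAaa$, or $\AAbb$), and the greedy procedure selects the group $\M\M$ from $\mathcal S$, which is over-represented in~$B$, not~$A$. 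Its linking $(\BBab,\AA_\no,\BBba,\AA_\no)$ absorbs two of the three $\AA_\no$-paths, and the single remaining $\AA_\no$ is then closed with a $\Gamma_B$; so $\alpha_{\!A}=0$ and $\alpha_{\!B}=1$, consistent with the theorem, but reached through an over-in-$B$ group.

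The fix is small and keeps your counting argument intact. What Propositions~\ref{prop:one-type-unbalanced} and~\ref{prop:close-unbalanced} together actually give you is that artificial adjacencies are introduced in at most one genome, i.e.\ $\min\{\alpha_{\!A},\alpha_{\!B}\}=0$: by Proposition~\ref{prop:one-type-unbalanced} all unbalanced groups are over-represented on one side, say~$X$; by Proposition~\ref{prop:close-unbalanced}, whenever a $\Gamma_Y$ (the other side) is introduced while closing such a group, no $YY$-path of any kind remains, so the subsequent ``remaining paths'' stage cannot produce a $\Gamma_X$. If no $\Gamma_Y$ is introduced during group closing, the remaining-paths stage still produces at most one type of artificial adjacency. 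Combining $\min\{\alpha_{\!A},\alpha_{\!B}\}=0$ with your identity $\alpha_{\!B}-\alpha_{\!A}=\kappa_{\!A}-\kappa_{\!B}$ then yields $\alpha_{\!A}=0$, $\alpha_{\!B}=\kappa_{\!A}-\kappa_{\!B}$ when $\kappa_{\!A}\geq\kappa_{\!B}$, and hence $p_*=\max\{\kappa_{\!A},\kappa_{\!B}\}$ and $a_*=|\kappa_{\!A}-\kappa_{\!B}|$ as claimed.
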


\subsection{Capped multi-relational diagram}

We can transform $\MR(A,B)$ into the \emph{capped multi-relational diagram} $\MRC(A,B)$ as follows.
First we need to create $4p_*$ new vertices, named $\cext{A}{1},\cext{A}{2},\ldots,\cext{A}{2p_*}$ and $\cext{B}{1}, \cext{B}{2},\ldots,\cext{B}{2p_*}$, each one representing a \emph{cap extremity}. 
Each of the $2\kappa_{\!A}$ telomeres of $A$ is connected by an adjacency edge to a distinct cap extremity among $\cext{A}{1},\cext{A}{2},\ldots,\cext{A}{2\kappa_{\!A}}$.
Similarly, each of the $2\kappa_{\!B}$ telomeres of $B$ is connected by an adjacency edge to a distinct cap extremity among $\cext{B}{1},\cext{B}{2},\ldots,\cext{B}{2\kappa_{\!B}}$.
Moreover, if $\kappa_{\!A} < \kappa_{\!B}$, for $i = 2\kappa_{\!A}+1,2\kappa_{\!A}+3,\ldots,2\kappa_{\!B}-1$, connect $\cext{A}{i}$ to $\cext{A}{i+1}$ by an \emph{artificial adjacency edge}. Otherwise, if $\kappa_{\!B} < \kappa_{\!A}$, for $j = 2\kappa_{\!B}+1,2\kappa_{\!B}+3,\ldots,2\kappa_{\!A}-1$, connect $\cext{B}{j}$ to $\cext{B}{j+1}$ by an artificial adjacency edge. 
All these new adjacency edges and artificial adjacency edges are added to $\adjedges{A}$ and $\adjedges{B}$, respectively.
We also connect each $\cext{A}{i}$, $1\leq i\leq 2p_*$, by a \emph{cap extremity edge} to each $\cext{B}{j}$, $1\leq j\leq 2p_*$, and denote by $E_{\cp}$ the set of cap extremity edges. An example of a capped multi-relational diagram is given in Figure~\ref{fig:capped-multi-adj-graph}.

\begin{figure}
\begin{center}
\tiny
\setlength{\unitlength}{.7pt}
    \begin{picture}(570,160)
    
      \put(0,0){\color{cyan}
      \drawline(240,149)(280,149)
      \drawline(300,150)(340,150)
      }
     \put(0,0){\color{blue}
      \drawline(420,149)(460,149)
      }
      
      \put(0,0){\color{purple}
      \drawline(240,151)(280,151)
      \drawline(420,151)(460,151)
      \drawline(480,150)(520,150)
      }
      
      \put(0,0){\color{orange}
      \drawline(0,10)(40,10)
      \drawline(60,11)(100,11)
      }
      \put(0,0){\color{purple}
      \drawline(300,11)(340,11)
      \drawline(480,11)(520,11)
      \drawline(540,11)(580,11)
      }
      
      \put(0,0){\color{blue}
      \drawline(60,9)(100,9)
      \drawline(240,10)(280,10)
      \drawline(300,9)(340,9)
      }
      \put(0,0){\color{lightgray}
      \drawline(180,10)(220,10)
      \drawline(360,10)(400,10)
      }
      \put(0,0){\color{cyan}
      \drawline(480,9)(520,9)
      \drawline(540,9)(580,9)
      }
      
      \put(0,0){\color{gray}
      \qbezier[150](40,150)(140,95)(320,80)
      \qbezier[150](320,80)(500,65)(600,10)
      
      \qbezier(60,150)(270,80)(480,10)
      \qbezier(100,150)(300,77)(520,10)
      \qbezier(120,150)(340,83)(540,10)
      \qbezier(160,150)(370,80)(580,10)
      \qbezier(360,150)(450,80)(540,10)
      \qbezier(400,150)(490,80)(580,10)
      
      \qbezier[200](540,150)(440,100)(260,80)
      \qbezier[200](260,80)(100,60)(-20,10)
      }
      \put(0,0){\color{cyan}
      \qbezier(41,150)(11,80)(-19,10)
      \qbezier(60,150)(30,80)(0,10)
      }
      \put(0,0){\color{orange}
      \qbezier(39,150)(9,80)(-21,10)
      \qbezier(60,150)(150,80)(240,10)
      }
      \put(0,0){\color{blue}
      \qbezier(100,150)(70,80)(40,10)
      }
      \put(0,0){\color{purple}
      \qbezier(100,150)(190,80)(280,10)
      \qbezier(120,150)(150,80)(180,10)
      }
      \put(0,0){\color{blue}
      \qbezier(120,150)(240,80)(360,10)
      }
      
      \put(0,0){\color{orange}
      \qbezier(179,150)(149,80)(119,10)
      }
      \put(0,0){\color{blue}
      \qbezier(181,150)(151,80)(121,10)
      }
      \put(0,0){\color{orange}
      \qbezier(160,150)(190,80)(220,10)
      }
      \put(0,0){\color{blue}
      \qbezier(160,150)(280,80)(400,10)
      }
      
      \put(0,0){\color{purple}
      \qbezier(219,150)(189,80)(159,10)
      }
      \put(0,0){\color{cyan}
      \qbezier(221,150)(191,80)(161,10)
      }
      \put(0,0){\color{purple}
      \qbezier(300,150)(380,80)(460,10)
      }
      \put(0,0){\color{cyan}
      \qbezier(360,150)(270,80)(180,10)
      }
      \put(0,0){\color{purple}
      \qbezier(340,150)(380,80)(420,10)
      \drawline(360,150)(360,10)
      }
      
      \put(0,0){\color{blue}
      \qbezier(400,150)(310,80)(220,10)
      }
      \put(0,0){\color{purple}
      \drawline(400,150)(400,10)
      \qbezier(539,150)(569,80)(599,10)
      }
      \put(0,0){\color{blue}
      \qbezier(480,150)(450,80)(420,10)
      }
      
      \put(0,0){\color{cyan}
      \qbezier(520,150)(490,80)(460,10)
      \qbezier(541,150)(571,80)(601,10)
      }
      
      \put(0,0){\color{purple}
      \dottedline[$\cdot$]{2}(100,151)(120,151)
      \dottedline[$\cdot$]{2}(220,151)(240,151)
      \dottedline[$\cdot$]{2}(280,151)(300,151)
      \dottedline[$\cdot$]{2}(340,151)(360,151)
      \dottedline[$\cdot$]{2}(400,151)(420,151)
      \dottedline[$\cdot$]{2}(460,151)(480,151)
      \dottedline[$\cdot$]{2}(520,151)(540,151)
      \dottedline[$\cdot$]{2}(160,11)(180,11)
      \dottedline[$\cdot$]{2}(280,11)(300,11)
      \dottedline[$\cdot$]{2}(340,11)(360,11)
      \dottedline[$\cdot$]{2}(400,11)(420,11)
      \dottedline[$\cdot$]{2}(460,11)(480,11)
      \dottedline[$\cdot$]{2}(520,11)(540,11)
      \dottedline[$\cdot$]{2}(580,11)(600,11)
      }
      \put(0,0){\color{blue}
      \dottedline[$\cdot$]{2}(100,149)(120,149)
      \dottedline[$\cdot$]{2}(160,149)(180,149)
      \dottedline[$\cdot$]{2}(400,149)(420,149)
      \dottedline[$\cdot$]{2}(460,149)(480,149)
      \dottedline[$\cdot$]{2}(40,9)(60,9)
      \dottedline[$\cdot$]{2}(100,9)(120,9)
      \dottedline[$\cdot$]{2}(220,9)(240,9)
      \dottedline[$\cdot$]{2}(280,9)(300,9)
      \dottedline[$\cdot$]{2}(340,9)(360,9)
      \dottedline[$\cdot$]{2}(400,9)(420,9)
      }
      \put(0,0){\color{orange}
      \dottedline[$\cdot$]{2}(40,151)(60,151)
      \dottedline[$\cdot$]{2}(160,151)(180,151)
      \dottedline[$\cdot$]{2}(-20,11)(0,11)
      \dottedline[$\cdot$]{2}(40,11)(60,11)
      \dottedline[$\cdot$]{2}(100,11)(120,11)
      \dottedline[$\cdot$]{2}(220,11)(240,11)
      }
      \put(0,0){\color{cyan}
      \dottedline[$\cdot$]{2}(40,149)(60,149)
      \dottedline[$\cdot$]{2}(220,149)(240,149)
      \dottedline[$\cdot$]{2}(280,149)(300,149)
      \dottedline[$\cdot$]{2}(340,149)(360,149)
      \dottedline[$\cdot$]{2}(520,149)(540,149)
      \dottedline[$\cdot$]{2}(-20,9)(0,9)
      \dottedline[$\cdot$]{2}(160,9)(180,9)
      \dottedline[$\cdot$]{2}(460,9)(480,9)
      \dottedline[$\cdot$]{2}(520,9)(540,9)
      \dottedline[$\cdot$]{2}(580,9)(600,9)
      }
      
      \put( 40,150){\circle*{5}}
      \put( 60,150){\circle*{5}}
      \put(100,150){\circle*{5}}
      \put(120,150){\circle*{5}}
      \put(160,150){\circle*{5}}
      \put(180,150){\circle*{5}}
      \put(220,150){\circle*{5}}
      \put(240,150){\circle*{5}}
      \put(280,150){\circle*{5}}
      \put(300,150){\circle*{5}}
      \put(340,150){\circle*{5}}
      \put(360,150){\circle*{5}}
      \put(400,150){\circle*{5}}
      \put(420,150){\circle*{5}}
      \put(460,150){\circle*{5}}
      \put(480,150){\circle*{5}}
      \put(520,150){\circle*{5}}
      \put(540,150){\circle*{5}}

      \put(-50,162){\makebox(0,0){$A$}}
      \put( 40,162){\makebox(0,0){$\circ_A^1$}}
      \put( 60,162){\makebox(0,0){$\tl{\T{1}}$}}
      \put(100,162){\makebox(0,0){$\hd{\T{1}}$}}
      \put(120,162){\makebox(0,0){$\tl{\T{3}}$}}
      \put(160,162){\makebox(0,0){$\hd{\T{3}}$}}
      \put(180,162){\makebox(0,0){$\tl{\T{2}}$}}
      \put(220,162){\makebox(0,0){$\hd{\T{2}}$}}
      \put(240,162){\makebox(0,0){$\hd{\T{5}}$}}
      \put(280,162){\makebox(0,0){$\tl{\T{5}}$}}
      \put(300,162){\makebox(0,0){$\hd{\T{4}}$}}
      \put(340,162){\makebox(0,0){$\tl{\T{4}}$}}
      \put(360,162){\makebox(0,0){$\tl{\T{3}}$}}
      \put(400,162){\makebox(0,0){$\hd{\T{3}}$}}
      \put(420,162){\makebox(0,0){$\tl{\T{5}}$}}
      \put(460,162){\makebox(0,0){$\hd{\T{5}}$}}
      \put(480,162){\makebox(0,0){$\tl{\T{4}}$}}
      \put(520,162){\makebox(0,0){$\hd{\T{4}}$}}
      \put(540,162){\makebox(0,0){$\circ_A^2$}}
      
      \put(-20,10){\circle*{5}}
      \put(  0,10){\circle*{5}}
      \put( 40,10){\circle*{5}}
      \put( 60,10){\circle*{5}}
      \put(100,10){\circle*{5}}
      \put(120,10){\circle*{5}}
      \put(160,10){\circle*{5}}
      \put(180,10){\circle*{5}}
      \put(220,10){\circle*{5}}
      \put(240,10){\circle*{5}}
      \put(280,10){\circle*{5}}
      \put(300,10){\circle*{5}}
      \put(340,10){\circle*{5}}
      \put(360,10){\circle*{5}}
      \put(400,10){\circle*{5}}
      \put(420,10){\circle*{5}}
      \put(460,10){\circle*{5}}
      \put(480,10){\circle*{5}}
      \put(520,10){\circle*{5}}
      \put(540,10){\circle*{5}}
      \put(580,10){\circle*{5}}
      \put(600,10){\circle*{5}}

      \put(-50,-2){\makebox(0,0){$B$}}
      \put(-20,-2){\makebox(0,0){$\circ_B^1$}}
      \put(  0,-2){\makebox(0,0){$\tl{\T{1}}$}}
      \put( 40,-2){\makebox(0,0){$\hd{\T{1}}$}}
      \put( 60,-2){\makebox(0,0){$\tl{\T{6}}$}}
      \put(100,-2){\makebox(0,0){$\hd{\T{6}}$}}
      \put(120,-2){\makebox(0,0){$\tl{\T{2}}$}}
      \put(160,-2){\makebox(0,0){$\hd{\T{2}}$}}
      \put(180,-2){\makebox(0,0){$\tl{\T{3}}$}}
      \put(220,-2){\makebox(0,0){$\hd{\T{3}}$}}
      \put(240,-2){\makebox(0,0){$\tl{\T{1}}$}}
      \put(280,-2){\makebox(0,0){$\hd{\T{1}}$}}
      \put(300,-2){\makebox(0,0){$\tl{\T{7}}$}}
      \put(340,-2){\makebox(0,0){$\hd{\T{7}}$}}
      \put(360,-2){\makebox(0,0){$\tl{\T{3}}$}}
      \put(400,-2){\makebox(0,0){$\hd{\T{3}}$}}
      \put(420,-2){\makebox(0,0){$\tl{\T{4}}$}}
      \put(460,-2){\makebox(0,0){$\hd{\T{4}}$}}
      \put(480,-2){\makebox(0,0){$\tl{\T{1}}$}}
      \put(520,-2){\makebox(0,0){$\hd{\T{1}}$}}
      \put(540,-2){\makebox(0,0){$\tl{\T{3}}$}}
      \put(580,-2){\makebox(0,0){$\hd{\T{3}}$}}
      \put(600,-2){\makebox(0,0){$\circ_B^2$}}
    \end{picture}
  \end{center}
  \caption{\label{fig:capped-multi-adj-graph}Natural genomes $A=\tel\T{1}\z\T{3}\z\T{2}\z\rev{\T{5}}\z\rev{\T{4}}\z\T{3}\z\T{5}\z\T{4}\tel$ and~$B=\tel\T{1}\z\T{6}\z\T{2}\z\T{3}\z\T{1}\z\T{7}\z\T{3}\z\T{4}\z\T{1}\z\T{3}\tel$ and their capped multi-relational diagram $\MRC(A,B)$}
\end{figure}

A set $P \subseteq E_{\cp}$ is a \emph{capping set} if it does not contain any pair of incident edges.
A \emph{capped consistent decomposition} $Q[S,P]$ of $\MRC(A,B)$ is induced by a maximal sibling set $S \subseteq E_\ext$ and a maximal capping set $P \subseteq E_\cp$ and is composed of vertex disjoint cycles covering all vertices of $\MRC(A,B)$.  
We then have~$\ddcjid(Q[S,P]) = n_* + p_* - w(Q[S,P])$, where the weight of $Q[S,P]$ can be computed by the simple formula: 
\[
w(Q[S,P]) = c_Q -\sum_{C \in Q[S,P]}\!\! \lambda(C)\,.
\]

\begin{theorem}\label{thm:capped-decomposition}
Let $\mathbb{P}_\textsc{max}$ be the set of all maximal capping sets from $\MRC(A,B)$.
For each maximal sibling set $S$ of $\MR(A,B)$ and $\MRC(A,B)$, we have
\[
w(D[S]) = \max_{P \in \mathbb{P}_\textsc{max}}\{ w(Q[S,P])\}\,.
\]
\end{theorem}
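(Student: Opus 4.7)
The plan is to transfer the problem to the classical capping theory developed in Section~\ref{app:capping} by identifying $D[S]$ with the relational diagram of the matched singular genomes determined by $S$.

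Fix a maximal sibling set $S$. The induced set $D[S]$ coincides with the relational diagram $R(\AMi{S},\BMi{S})$ of the matched singular genomes, so Theorem~\ref{thm:exact-distance} gives $\ddcjid(\AMi{S},\BMi{S}) = n_* - w(D[S])$. Inside $\MRC(A,B)$, the cap-to-telomere adjacency edges together with the $a_*$ artificial adjacency edges already realise the cap-placement prescribed by Theorem~\ref{thm:capping-singular}; the only remaining freedom is how to pair the $2p_*$ cap extremities on the $A$-side with those on the $B$-side, and exactly this pairing is what a maximal capping set $P\in\mathbb{P}_\textsc{max}$ supplies. Consequently, every $P$ realises a genuine circular capping $(\AC,\BC)$ of $\AMi{S},\BMi{S}$, and $Q[S,P]$ is the relational diagram $R(\AC,\BC)$, composed entirely of cycles; the circular distance formula then yields $\ddcjid(Q[S,P]) = (n_* + p_*) - w(Q[S,P])$.

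For the direction realising the maximum, I would appeal to Theorem~\ref{thm:capping-singular}: it guarantees an optimal capping of $\AMi{S},\BMi{S}$ using exactly $p_*$ caps and $a_*$ artificial adjacencies, satisfying $\ddcjid(\AC,\BC) = \ddcjid(\AMi{S},\BMi{S})$. The linking underlying this capping, read off from Table~\ref{tab:cap-group} applied to the components of $D[S]$, defines a specific $P^*\in\mathbb{P}_\textsc{max}$; comparing the two distance formulas then pins down $w(Q[S,P^*])$ and shows that it saturates the right-hand side. For the matching inequality in the other direction, any $P\in\mathbb{P}_\textsc{max}$ induces a legal (if suboptimal) capping, so $\ddcjid(\AC,\BC) \geq \ddcjid(\AMi{S},\BMi{S})$; rearranging via the two distance formulas turns this into the required upper bound on $w(Q[S,P])$. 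Combining the two bounds yields the equality.

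The main obstacle lies in the converse bound: verifying that no non-standard pairing of caps can beat the greedy optimum captured by the recombination-group table. Concretely, one must show that every way of closing the paths of $D[S]$ through cap edges of $P$ decomposes into the local linking patterns catalogued in Tables~\ref{tab:cap-group} and~\ref{tab:capping-canonical}, and that the per-pattern contributions to $\Delta c$, $\Delta(2i)$, and $\Delta\lambda$ aggregate to at most the recombination deduction $\delta_D$. Propositions~\ref{prop:one-type-unbalanced} and~\ref{prop:close-unbalanced} already handle the bulk of this case analysis by excluding mixed-direction unbalances and by certifying the availability of the needed indel-free $\AA$- and $\BB$-paths (or artificial adjacencies on the correct side); the rest is a routine accounting that collapses the chained inequalities into the claimed weight identity.
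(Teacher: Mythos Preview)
Your approach is essentially the paper's: identify $D[S]$ with $R(\AMi{S},\BMi{S})$, identify each $Q[S,P]$ with the relational diagram of some capped circular pair, and invoke Theorem~\ref{thm:capping-singular} to produce the $P^*$ that realises the maximum. The paper's own proof is little more than this invocation, so you are on the right track and in fact supply more detail than the paper does.

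Your ``obstacle'' paragraph, however, misattributes the difficulty. The inequality $w(Q[S,P])\le w(D[S])$ for an arbitrary $P$ is the \emph{easy} direction: any $P$ gives a genuine capping of $\AMi{S},\BMi{S}$ into some circular pair $(\AC^P,\BC^P)$, and erasing caps from any optimal sorting of $\AC^P$ into $\BC^P$ yields a sorting of $\AMi{S}$ into $\BMi{S}$ of no greater length, so $\ddcjid(\AC^P,\BC^P)\ge\ddcjid(\AMi{S},\BMi{S})$, which rearranges to the bound you want. You already stated this correctly two sentences earlier; there is no need to decompose an arbitrary $P$ into the patterns of Table~\ref{tab:cap-group} (and in fact a suboptimal $P$ will \emph{not} decompose into those optimal patterns). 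Propositions~\ref{prop:one-type-unbalanced} and~\ref{prop:close-unbalanced} are doing the opposite job: they underpin Theorem~\ref{thm:capping-singular} by ensuring that the optimal linking needs exactly $a_*=|\kappa_A-\kappa_B|$ artificial adjacencies, all on the side with fewer linear chromosomes---precisely the side where $\MRC(A,B)$ places them---so that the optimal capping is realisable as some $P^*\in\mathbb{P}_\textsc{max}$. That is, they secure the \emph{existence} direction, not the upper bound.
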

\proof
 Recall that each maximal sibling set $S$ of $\MR(A,B)$ corresponds to a pair of matched singular genomes $\AMi{S}$ and $\BMi{S}$.
 Furthermore, in $\MRC(A,B)$, (i) each maximal capping set~$P$ corresponds to exactly $p_*$ caps, and (ii) all adjacencies, including the $|\kappa_A - \kappa_B|$ artificial adjacencies between cap extremities, are part of each consistent decomposition $Q[S,P]$.
Theorem~\ref{thm:capping-singular}~states that 
the pair of matched singular genomes $\AMi{S}$ and $\BMi{S}$
 can be optimally capped with $p_*$ caps and $|\kappa_A - \kappa_B|$ artificial adjacencies. 
 Therefore, it is clear that at least one optimal capping of $\AMi{S}$ and $\BMi{S}$ corresponds to a consistent decomposition of $\MRC(A,B)$, that is, $w(D[S]) = \max_{P \in \mathbb{P}_\textsc{max}}\{ w(Q[S,P])\}$.
\qed

As a consequence of Theorem~\ref{thm:capped-decomposition}, if $\mathbb{S}_\textsc{max}$ is the set of all maximal sibling sets and $\mathbb{P}_\textsc{max}$ is the set of all maximal caping-sets, we have
\[ 
\ddcjid(A, B) = n_* + p_* - \max_{S \in\mathbb{S}_\textsc{max},P \in\mathbb{P}_\textsc{max}}\{w(Q[S,P])\}\,.
\]  


Each decomposition $Q[S,P]$ corresponds to several capped versions of singular genomes $\AMi{S}$ and $\BMi{S}$, depending on how the cap extremities are paired. We do not need to identify the exact capped version we get, because all versions obtained with the same capping set $P$ give the same DCJ-indel distance. 

\section{\label{sec:ilp}An algorithm to compute the DCJ-indel distance of natural genomes}

\newcommand{\st}[1]{\{#1\}}

An ILP formulation for computing the distance of two balanced genomes $A$ and $B$ was given by \citet{SHA-LIN-MOR-2015}.
In this section we describe an extension of that formulation for computing the
DCJ-indel distance of natural genomes $A$ and $B$, based on consistent cycle
decompositions of $\MRC(A,B)$. The main difference is that here we need to
address the challenge of computing the indel-potential $\lambda(C)$ for each cycle $C$ of each decomposition. 
Note that a cycle $C$ of~$R(A,B)$ has either 0, or 1, or an even number of runs, therefore its indel-potential can be computed as follows:
\[
\lambda(C) = 
\begin{cases}
~~\Lambda(C)\:, & \mbox{ if $\Lambda(C)\leq1$}; \\[2mm]
\frac{\Lambda(C)}{2}+1\:, & \mbox{ if $\Lambda(C) \geq 2$}.
\end{cases}
\]

The formula above can be redesigned to a simpler one, that is easier to implement in the ILP. First, let a \emph{transition} in a decomposition $Q[S,P]$ be an indel-free path that is flanked by an indel edge from $\selfedges{A}$ and an indel-edge from $\selfedges{B}$. Each transition is part of some cycle $C$ of $Q[S,P]$ and we denote by $\aleph(C)$ the number of transitions in $C$. 
Observe that, if $C$ is indel-free, then obviously $\aleph(C) = 0$. If $C$ has a single run, then we also have $\aleph(C) = 0$. On the other hand, if $C$ has at least 2 runs, then $\aleph(C)=\Lambda(C)$. 
Our new formula is then split into a part that simply tests whether $C$ is indel-enclosing and a part that depends on the number of transitions $\aleph(C)$. 

\begin{proposition}
Given the function $r(C)$ defined as $r(C)=1$ if $\Lambda(C)\geq1$, otherwise $r(C)=0$, the indel-potential $\lambda(C)$ can be computed from the number of transitions $\aleph(C)$ with the formula 
\[
\lambda(C)=\frac{\aleph(C)}{2} + r(C)\,.
\]
\end{proposition}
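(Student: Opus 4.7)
The plan is to prove the proposition by case analysis on $\Lambda(C)$, matching the cases of the preceding piecewise formula for $\lambda(C)$, and relating $\aleph(C)$ to $\Lambda(C)$ in each case.

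First I would recall the key structural fact: in a cycle $C$ of a consistent decomposition, runs are arranged cyclically, and consecutive runs must be of different types, i.e., an $\ma$-run is always followed by a $\mb$-run and vice versa (otherwise they would have been merged into a single, longer run by maximality). Consequently, if $\Lambda(C) \geq 2$, then $\Lambda(C)$ is even, the runs alternate $\ma,\mb,\ma,\mb,\ldots$ around the cycle, and each of the $\Lambda(C)$ boundaries between two successive runs is precisely an indel-free subpath flanked by an indel edge from $\selfedges{A}$ and one from $\selfedges{B}$, that is, a transition.

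Next, I would check the three cases. If $\Lambda(C)=0$, then $C$ contains no indel edges at all, so no transition is possible and $\aleph(C)=0$; moreover $r(C)=0$, so the formula gives $0 = \lambda(C)$. If $\Lambda(C)=1$, then $C$ has a single run containing only indel edges of one genome, so no transition between an $\ma$- and a $\mb$-run can exist and $\aleph(C)=0$; with $r(C)=1$ the formula gives $1 = \lambda(C)$. If $\Lambda(C)\geq 2$, the alternation argument above gives $\aleph(C)=\Lambda(C)$ and $r(C)=1$, so the formula yields $\Lambda(C)/2 + 1$, which coincides with the value $\lambda(C)=\Lambda(C)/2+1$ stated in the new piecewise definition just before the proposition.

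I do not expect a real obstacle here; the only subtlety is arguing carefully that every run-to-run boundary in a cycle is a transition in the sense defined (an indel-free path flanked by an indel edge from each of $\selfedges{A}$ and $\selfedges{B}$) and conversely that every transition sits between two distinct runs. Both directions follow from maximality of runs and from the fact that the endpoints of a transition are indel edges from different genomes by definition, so the counting $\aleph(C)=\Lambda(C)$ in the $\Lambda(C)\geq 2$ case is exact. Combining the three cases yields the claimed identity $\lambda(C)=\aleph(C)/2 + r(C)$.
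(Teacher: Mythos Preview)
Your proposal is correct and follows essentially the same approach as the paper: the paper establishes the three cases $\aleph(C)=0$ when $\Lambda(C)=0$, $\aleph(C)=0$ when $\Lambda(C)=1$, and $\aleph(C)=\Lambda(C)$ when $\Lambda(C)\geq 2$ in the paragraph immediately preceding the proposition, and then the identity follows by plugging into the piecewise formula for $\lambda(C)$. Your write-up simply makes the alternation argument for the last case more explicit than the paper does.
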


Note that
$\sum_{C \in Q[S,P]}\! r(C) = c^{r}_Q + s_Q$, 
where $c^{r}_Q$ and $s_Q$ are the number of indel-enclosing $\AB$-cycles and
the number of circular singletons in $Q[S,P]$, respectively.
Furthermore, the number of transitions in $Q[S,P]$, given by the sum $\sum_{C \in Q[S,P]}\! \aleph(C)$ does not really need to be computed per cycle, that is, we can directly count the number of transitions in $Q[S,P]$ without keeping trace of which cycle each transition belongs to. We then denote by $\aleph_Q =\sum_{C \in Q[S,P]}\! \aleph(C)$ the number of transitions in $Q[S,P]$.

Now, we need to find a consistent decomposition $Q[S,P]$ of $\MRC(A,B)$ maximizing the weight
\[
w(Q[S,P]) = c_Q -\!\sum_{C \in Q[S,P]}\! \lambda(C) = c_Q - \!\left( c^{r}_Q + s_Q + \sum_{C \in Q[S,P]}\!\!\frac{\aleph(C)}{2}
\right)=c^{\tilde{r}}_Q-\frac{\aleph_Q}{2}-s_Q\,,
\]
where $c^{\tilde{r}}_Q=c_Q - c^{r}_Q$ is the number of indel-free $\AB$-cycles in $Q[S,P]$.

\subsection{\label{sec:ilp-details}ILP formulation}

Our formulation (shown in Algorithm~\ref{alg:ilp}) searches for an optimal consistent cycle decomposition of $\MRC(A,B) = (V,E)$, where the set of edges $E$ is the union of all disjoint sets of the distinct types of edges, $E=E_\ext\cup E_\cp\cup \adjedges{A}\cup\adjedges{B}\cup\selfedges{A}\cup\selfedges{B}$.

In the first part we use the same strategy as \citet{SHA-LIN-MOR-2015}.
A binary variable~$x_e$ (\texttt{D.01}) is introduced for every edge $e$, indicating whether $e$ is part of the computed decomposition. Constraint \texttt{C.01} ensures that adjacency edges are in all decompositions, Constraint \texttt{C.02} ensures that each vertex of each decomposition has degree 2, and Constraint \texttt{C.03} ensures that an extremity edge is selected only together with its sibling.
Counting the number of cycles in each
decomposition is achieved by assigning a unique identifier $i$ to each
vertex $v_i$ that is then used to label each cycle with the numerically
smallest identifier of any contained vertex (see Constraint \texttt{C.04}, Domain \texttt{D.02}). A vertex $v_i$ is then marked by variable $z_i$ (\texttt{D.03}) as representative of a cycle if its cycle label $y_i$ is equal to $i$ (\texttt{C.06}). However, unlike Shao~\textit{et al.}, we permit each variable $y_i$ to
take on value $0$ which, by Constraint~\texttt{C.05}, will be enforced whenever the
corresponding cycle is indel-enclosing. Since the smallest label of any vertex is $1$ (cf.\ \texttt{D.02}), any cycle with label 0 will not
be counted. 

The second part is our extension for counting transitions. We introduce binary variables $r_v$ (\texttt{D.04}) to label runs.
To this end, Constraint \texttt{C.07} ensures that each vertex $v$ is labeled
$0$ if $v$ is part of an $\ma$-run and otherwise it is labeled $1$ indicating its
participation in a $\mb$-run. Transitions between $\ma$- and $\mb$-runs in a
cycle are then recorded by binary variable $t_e$ (\texttt{D.05}). If a label change
occurs between any neighboring pair of vertices $u,v\in V$ of a cycle, Constraint \texttt{C.08} causes
transition variable $t_{\st{u, v}}$ to be set to 1.
We avoid an excess of co-optimal solutions by canonizing the
locations in which transitions are observed. More specifically, Constraint \texttt{C.09} prohibits
label changes in adjacencies not directly connected to an indel and Constraint \texttt{C.10} in edges other than adjacencies of genome $A$, resulting in the transition being observed as close to the $A$-run as possible.

In the third part we add a new constraint and a new domain to our ILP, so that we
can count the number of circular singletons. Let $K$ be the circular chromosomes in both genomes and $E^k_{id}$ be the set of indel edges of a circular chromosome $k\in K$.
For each circular chromosome we introduce a decision variable $s_k$ (\texttt{D.06}), that is 1 if $k$ is a circular singleton and 0 otherwise.
A circular chromosome is then a singleton if all its indel edges are set (see Constraint \texttt{C.11}). Only in that case the left side of the inequality will take on value $1$ and enforces $s_k$ to be set to $1$ as well.

The objective of our ILP is to maximize the weight of a consistent
decompositon, that is equivalent to maximizing the number of indel-free cycles,
counted by the sum over variables $z_i$, while simultaneously minimizing 
the number of transitions in indel-enclosing $\AB$-cycles,
calculated by half the sum over variables $t_e$, and
the number of circular singletons, calculated by the sum over variables $s_k$.

\begin{algorithm}
	\caption{\label{alg:ilp}ILP for the computation of the DCJ-indel distance of natural genomes}
    
    \small
    
    \medskip
    \noindent\textbf{Objective:}\\
    \hspace*{6em}\texttt{Maximize} $\displaystyle\sum_{1\leq i\leq|V|}\!\!\! z_i - \frac{1}{2} \displaystyle\sum_{e \in E} t_e - \sum_{k \in K}s_k$
    \medskip

        \setlength{\tabcolsep}{3pt}\renewcommand{\arraystretch}{1.3}
        \begin{tabular}{lll}
	        \multicolumn{3}{l}{\textbf{Constraints:}}\\
            \;(\texttt{C.01}) & $x_e = 1$                              & $\forall~e \in \adjedges{A} \cup \adjedges{B}$ \\
            \;(\texttt{C.02}) & $\displaystyle\sum_{\{u,v\}\in E}\!\!\! x_{\{u,v\}} = 2$ & $\forall~u \in V$ \\
            \;(\texttt{C.03}) & $x_e = x_d$                            & $\forall~e,d \in E_\ext \text{ such that }$\\[-1mm]
                              &                                        & $\text{$e$ and $d$ are siblings}$\\
            \;(\texttt{C.04}) & $y_i \leq y_j + i(1-x_{\{v_i, v_j\}})$ & $ \forall~ \{v_i, v_j\}\in E$ \\
            \;(\texttt{C.05}) & $y_i \leq i (1 - x_{\{v_i,v_j\}})$     & $\forall~\st{v_i,v_j} \in \selfedges{A}\cup \selfedges{B}$\\
            \;(\texttt{C.06}) & $i\cdot z_i \leq y_i$                  & $ \forall~1 \leq i \leq |V|$\\
            \;(\texttt{C.07}) & $r_v \leq 1 - x_{\st{u, v}}$           & $ \forall~\st{u, v}\in \selfedges{A}\,,$\\
                              & $r_{v'} \geq x_{\st{u',v'}}$           & $ \forall~\st{u',v'} \in \selfedges{B}$ \\
            \;(\texttt{C.08}) & $t_{\{u, v\}} \geq r_v - r_u - (1-x_{\{u, v\}})$  & $\forall~\st{u, v} \in E$ \\
            \;(\texttt{C.09}) & $\displaystyle\sum_{\substack{d \in \selfedges{A},\\ d \cap e \neq \varnothing}}
            \!\!\!x_{d} - t_{e} \geq 0$ & $\forall~e \in \adjedges{A}$\\
            \;(\texttt{C.10}) & $t_e = 0$                              & $\forall~e \in E\setminus \adjedges{A}$ \\ 
            \;(\texttt{C.11}) & $ \displaystyle\sum_{e \in E^k_{id}} x_e - |k| + 1\leq s_k$ & $\forall k \in K $ \medskip\\
            \multicolumn{3}{l}{\textbf{Domains:}}\\
            \;(\texttt{D.01}) & $x_e \in \{0, 1\}$                      & $\forall~e \in E$ \\
            \;(\texttt{D.02}) & $0~ \leq y_i \leq i$                    & $\forall~1 \leq i \leq |V|$ \\
            \;(\texttt{D.03}) & $z_i \in  \{0, 1\}$                     & $\forall~1 \leq i \leq |V|$ \\
            \;(\texttt{D.04}) & $r_v \in \{0, 1\} $                     & $\forall~v \in V$ \\
            \;(\texttt{D.05}) & $t_e \in \{0, 1\} $                     & $\forall~e \in E$ \\
            \;(\texttt{D.06}) & $s_k \in \{0,1\}$                       & $\forall~k \in K$ \\
        \end{tabular}
        \medskip
\end{algorithm}

\paragraph{Implementation.}
We implemented the construction of the ILP as a python application, available at \url{https://gitlab.ub.uni-bielefeld.de/gi/ding}.

\paragraph{Comparison to the approach by Lyubetsky~\textit{et al.}}
As mentioned in the Introduction, another ILP for the comparison of genomes with  unequal content and paralogs was presented by \citet{LYU-GER-GOR-2017}.
In order to compare our method to theirs, we ran our ILP using CPLEX on a single thread with the two small artificial examples given in that paper on page~8.
The results in terms of DCJ distance were the same. A comparison of running times is presented in Table~\ref{tab:lyubetsky}.

\begin{table}
\caption{\label{tab:lyubetsky}Comparison of running times and memory usage to the ILP in~\citep{LYU-GER-GOR-2017}.}
\begin{center}
\renewcommand{\tabcolsep}{5pt}
\footnotesize
\begin{tabular}{cccccc}
\toprule
 & & \bf\#marker & \bf running time as reported & \bf our & \bf our peak\\
\raisebox{1.3ex}[0ex]{\bf dataset} & 
\raisebox{1.3ex}[0ex]{\bf \#markers}
& \bf occurrences & \bf by~\citet{LYU-GER-GOR-2017} & \bf running time & \bf memory\\
\midrule
Example 1 & 5/5 & 9/9 & ``about 1.5h'' & .16s & 13200kb\\
Example 2 & 10/10 & 11/11 & ``about 3h'' & .05s& 13960kb\\
\bottomrule
\end{tabular}
\end{center}
\end{table}

\subsection{\label{sec:eva}Performance benchmark}

For benchmarking purposes, we used Gurobi 9.0 as solver. In all our experiments, we ran Gurobi on a single thread.

\paragraph{Generation of simulated data.}
Here we describe our simulation tool that is included in our software repository (\url{https://gitlab.ub.uni-bielefeld.de/gi/ding}) 
and used for evaluating the performance of our ILP implementation.

Our method samples marker order sequences over a user-defined phylogeny. However, here we restrict our simulations to pairwise comparisons generated over rooted, weighted trees of two leaves. Starting from an initial marker order sequence of user-defined length (i.e., number of markers), the simulator samples Poisson-distributed DCJ events with expectation equal to the corresponding edge weights. Likewise, insertion, deletion and duplication events of one or more consecutive markers are sampled, yet, their frequency is additionally dependent on a \emph{rate factor} that can be adjusted by the user. The length of each segmental insertion, deletion, and duplication is drawn from a Zipfian distribution, whose parameters can also be adjusted by the user. At each internal node of the phylogeny, the succession of mutational operations is performed in the following order: DCJ operations, duplications, deletions, insertions. To this end, cut points, as well as locations for insertions, deletions and duplications are uniformly drawn over the entire genome. 

In our simulations, we used $s=4$ for Zipfian distributions of insertions and deletions, and $s=6$ for duplications. Unless specified otherwise, insertion and deletion rates were set to be $0.1$ and $0.2$ respectively. We set the length of the root genome to 20{,}000 marker occurrences.

\paragraph{Evaluating the impact of the number of duplicate occurrences.}
In order to evaluate the impact of the number of duplicate occurrences on the running time,
we first keep the number of simulated DCJ events fixed to $10{,}000$ and  vary parameters that affect the number of duplicate occurrences.

Our ILP solves the decomposition problem efficiently for real-sized genomes under small to moderate numbers of duplicate occurrences: solving times for genome pairs with less than $10{,}000$ duplicate occurrences ($\sim 50\%$ of the genome size) shown in Figure~\ref{fig:solv_t}~(i) are with few exceptions below $5$ minutes and exhibit a linear increase, but solving time is expected to increase dramatically with higher numbers of duplicate occurrences. To further exploit the conditions under which the ILP is no longer solvable with reasonable compute resources we continued the experiment with even higher amounts of duplicate occurrences and instructed Gurobi to terminate within 1 hour of computation. We then partitioned the simulated data set into 8 intervals of length 500 according to the observed number of duplicate occurrences.  For each interval, we determined the average as well as the maximal multiplicity of any duplicate marker and examined the average \emph{optimality gap}, i.e., the difference in percentage between the best primal and the best dual solution computed within the time limit. The results are shown in Table~\ref{tab:gaps} and emphasize the impact of duplicate occurrences in solving time: below 14,000 duplicate occurrences, the optimality gap remains small and sometimes even the exact solution is computed, whereas above that threshold the gap widens very quickly.


\begin{figure}

\begin{minipage}{.45\linewidth}
\centering
{\bf (i)}\\
\includegraphics[height=5.5cm]{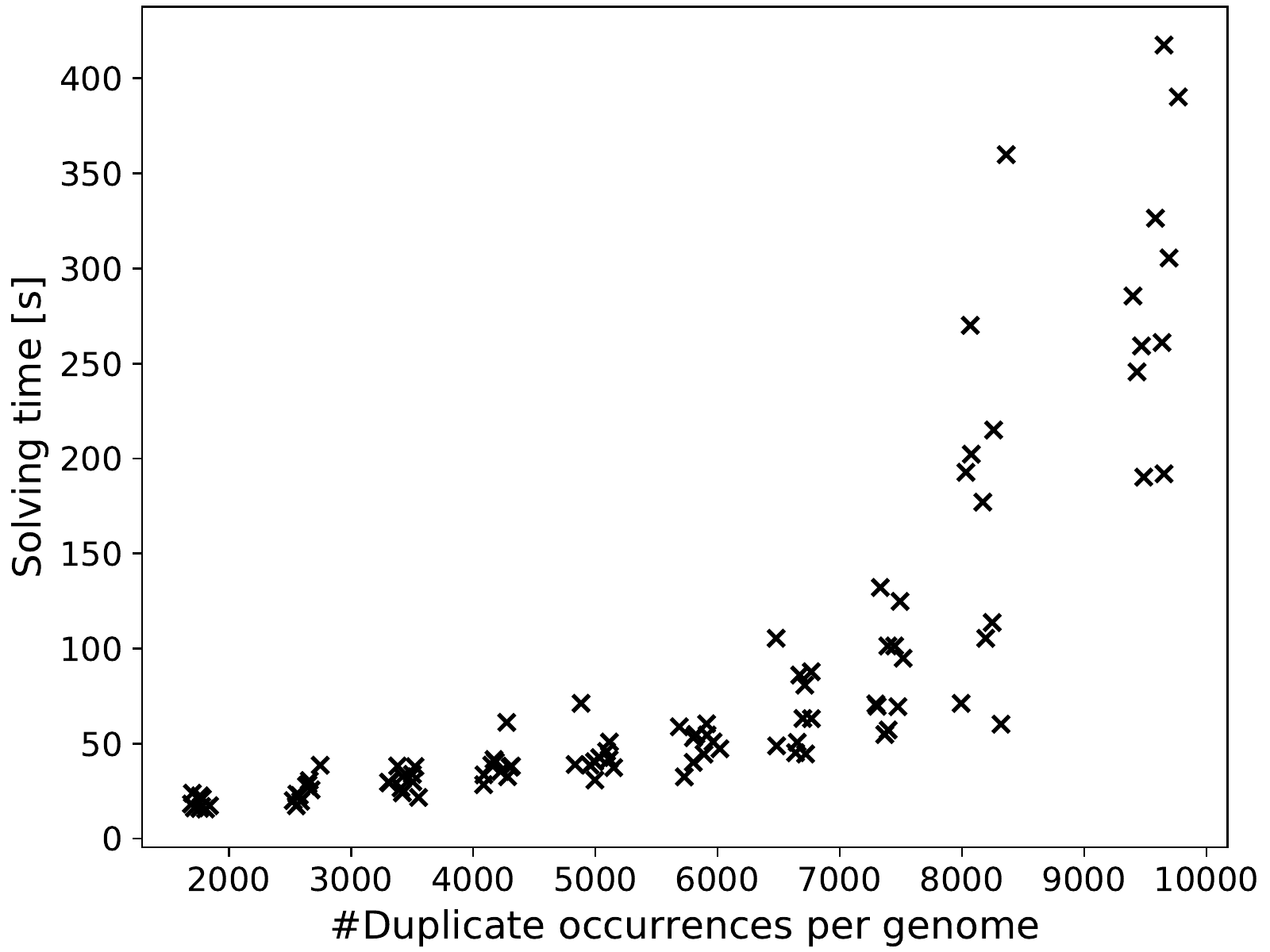}
\end{minipage}
\hspace{8mm}
\begin{minipage}{.45\linewidth}
\centering
{\bf (ii)}\\
\includegraphics[height=5.5cm]{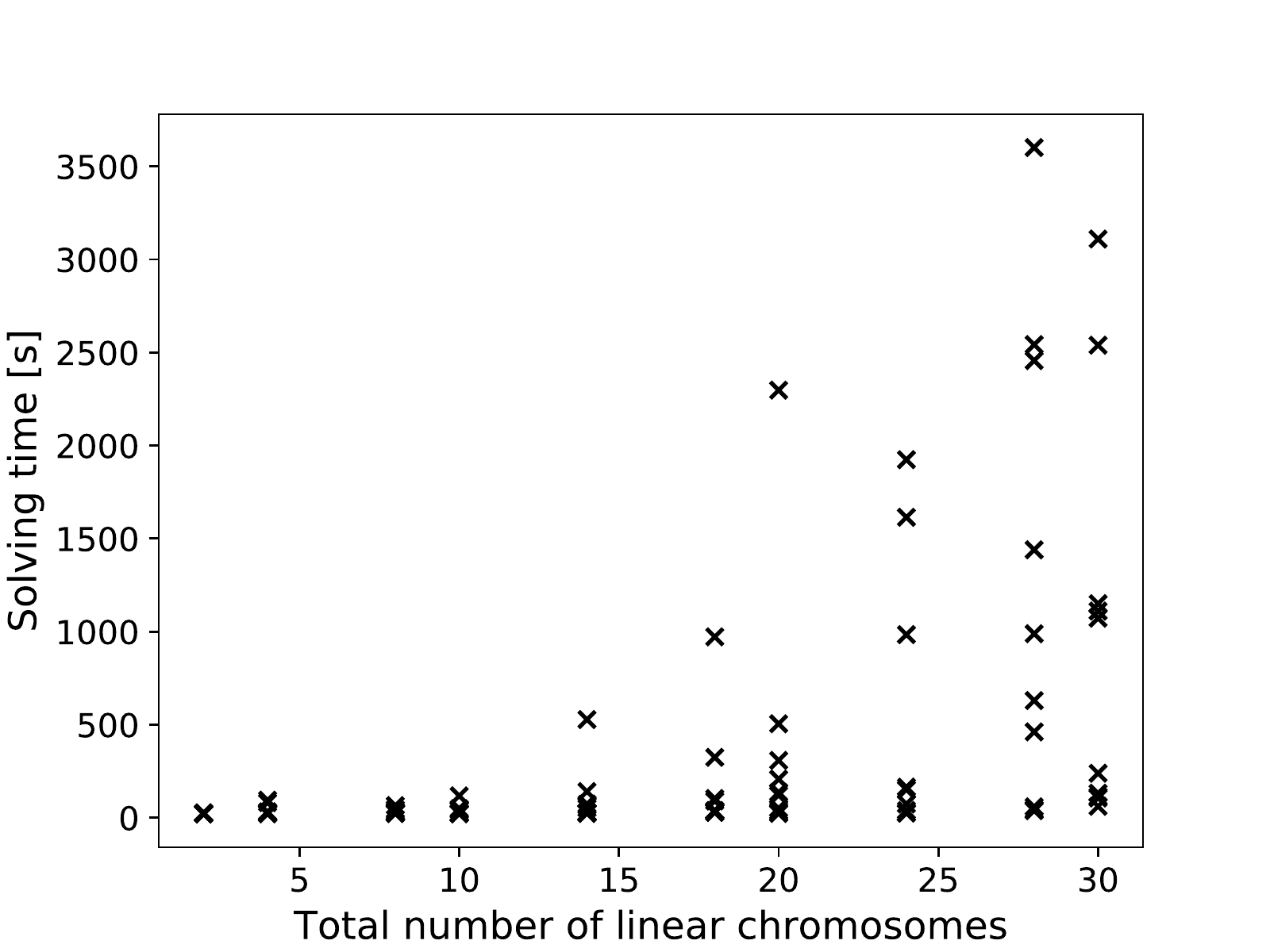}
\end{minipage}

\bigskip

\begin{minipage}{.45\linewidth}
\centering
{\bf (iii)}\\ 
\includegraphics[height=5.5cm]{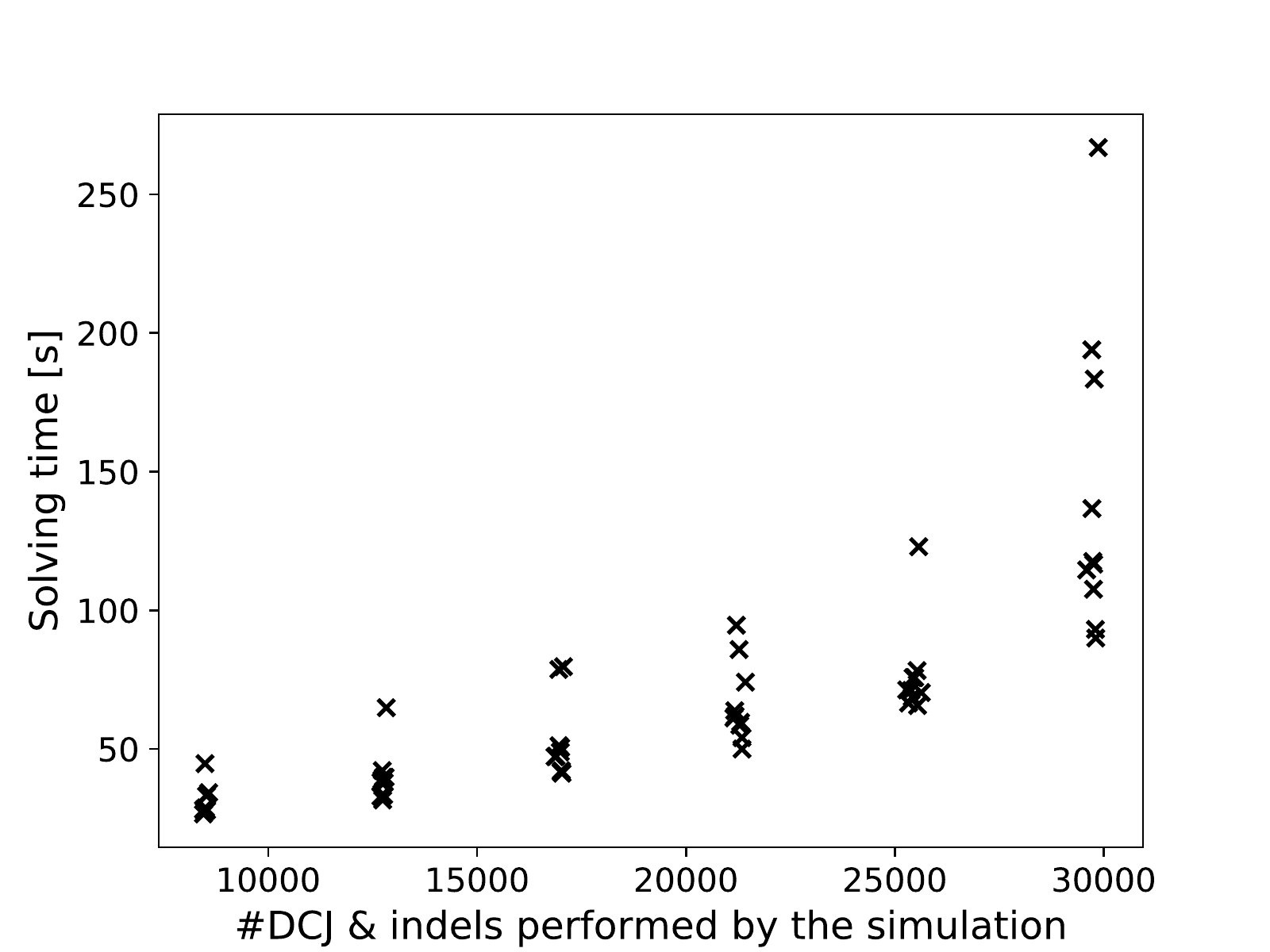}
\end{minipage}
\hspace{8mm}
\begin{minipage}{.45\linewidth}
\centering
{\bf (iv)}
\includegraphics[height=5.5cm]{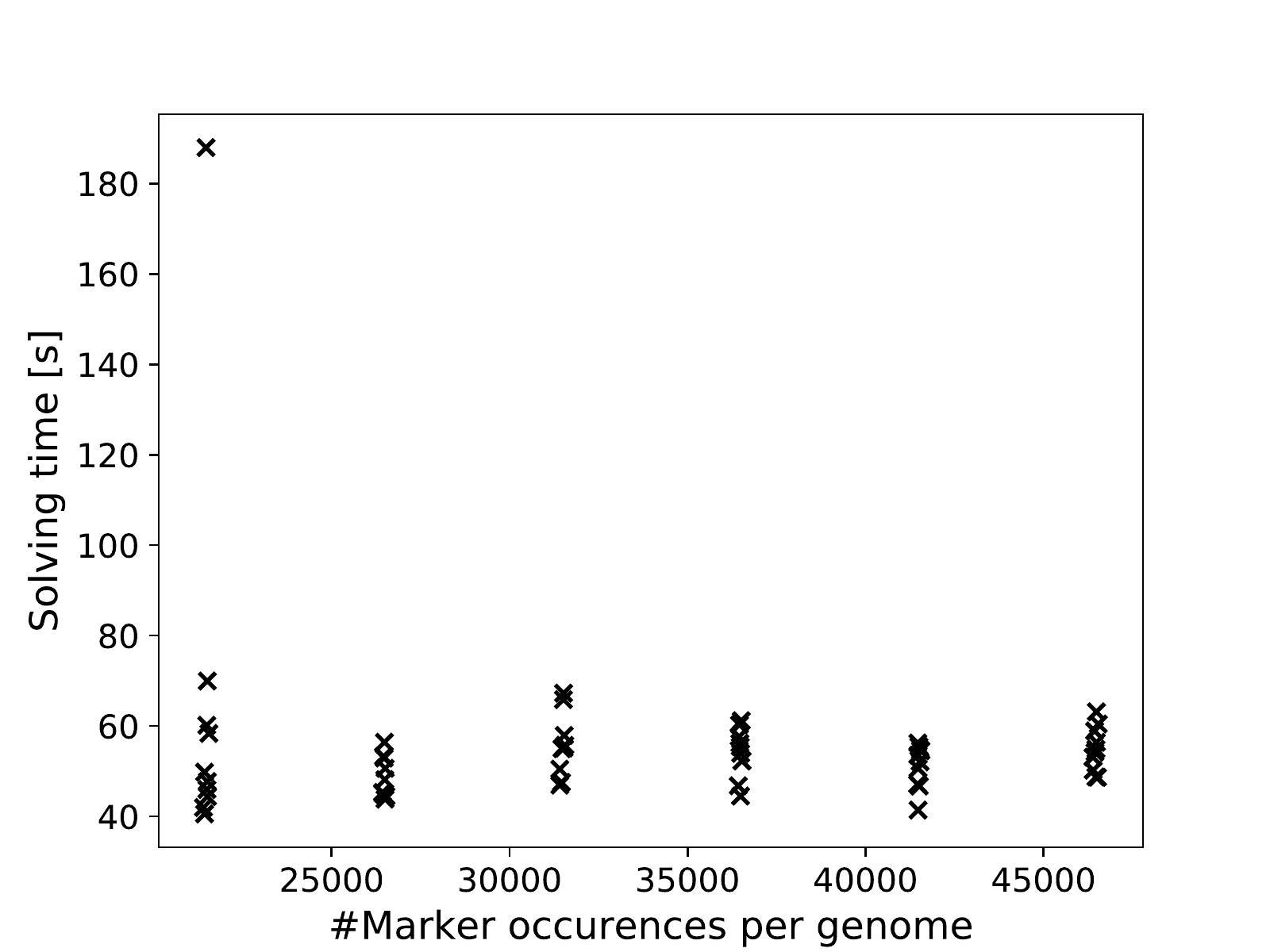}
\end{minipage}

\caption{\label{fig:solv_t}Solving times for: (i) genomes with varying number of duplicate occurrences, totaling 20,000 marker occurrences per genome; (ii) genome pairs with varying number of linear chromosomes with 20,000 marker occurrences per genome;
(iii) varying number of DCJs and indels applied by the simulation to genomes of  $\sim 35{,}000$ marker occurrences; and (iv)
genome pairs with varying total number of marker occurrences from both genomes.}
\end{figure}

\begin{table}
\caption{\label{tab:gaps}Average optimality gap for simulated genome pairs grouped by number of duplicate occurrences after 1h of running time.}
\begin{center}
\setlength{\tabcolsep}{5pt}
\footnotesize
\begin{tabular}{C{2.5cm}C{3cm}C{2.5cm}C{2.5cm}}
\toprule
\bf \#duplicate occurrences & \bf avg.\ mult.\ of dupl.\ markers & \bf max.\ multiplicity & \bf avg.\ opt.\ gap~(\%)\\
\midrule
11500..11999& 2.206&8& 0.000\\
12000..12499& 2.219&8& 0.031\\
12500..12999& 2.217&7& 0.025\\
13000..13499& 2.233&9& 0.108\\
13500..13999& 2.247&8& 0.812\\
14000..14499& 2.260&8& 1.177\\
14500..14999& 2.274&8& 81.865\\
15000..15499& 2.276&9& 33.102\\
\bottomrule
\end{tabular}
\end{center}
\end{table}

\paragraph{Evaluating additional parameters.}
So far we examined only the impact of duplicates on solving times of our program. However, other parameters of our experiment are expected to have an effect on the solving times, too. 
We ran three experiments, in each varying one of the following parameters while keeping the others fixed: (i) genome size, (ii) number of simulated DCJs and indels, and (iii) number of chromosomes. The duplication rate was fixed at $0.4$ for these experiments and the running time was limited to 1 hour.

The results, shown in Figure~\ref{fig:solv_t}, parts~(ii), (iii) and~(iv),
indicate that the number of linear chromosomes plays a major factor in the solving time. At the same time, solving times vary more widely with increasing chromosome number.
The latter has a simple explanation: telomeres, represented as caps in the multi-relational diagram, behave in the same way as duplicate occurrences of the same marker do. Increasing their number (by increasing the number of linear chromosomes) increases exponentially the search space of matching possibilities.

Conversely, the number of simulated DCJs and indels has a minor impact on the solving times of our simulation runs. However, while initially exhibiting collinearity, the solving times for higher numbers of DCJs and indels divert super-linearly. Lastly, the genome size has a negligible effect on solving time within the tested range of $20{,}000$ to $50{,}000$ marker occurrences.

\subsection{Real data analysis}
In order to demonstrate the applicability to real data sets, we compared the genomes of six \textit{Drosophila} species and reconstructed their phylogeny from pairwise DCJ-indel distances.
The species names and the NCBI accession numbers of the assemblies are listed in Table~\ref{tab:drosophilas}.

\begin{table}
\caption{\label{tab:drosophilas}List of genome assemblies used in our experiments}
\begin{center}
\footnotesize
\begin{tabular}{llcc}
\toprule
\bf Species & \bf NCBI Assembly & \bf \#genes & \bf \#segments\\
\midrule
\textit{Drosophila busckii} & ASM1175060v1 & 11371 & 23285\\
\textit{Drosophila melanogaster} & Release 6 plus ISO1 MT & 13048 & 62415\\
\textit{Drosophila pseudoobscura} & UCI\_Dpse\_MV25 & 13399 & 46692\\
\textit{Drosophila sechellia} & ASM438219v1 & 13037 & 60855\\
\textit{Drosophila simulans} & ASM75419v2 & 13023 & 59520\\
\textit{Drosophila yakuba}  & dyak\_caf1 & 12835 & 60946\\
\bottomrule
\end{tabular}
\end{center}
\end{table}

We used two types of markers.
In our first experiment, 
markers correspond to the longest annotated coding sequences (CDSs) per locus obtained from the respective NCBI annotations.
Their numbers are listed in Table~\ref{tab:drosophilas} as well.
Subsequently, we inferred \emph{hierarchical orthologous groups (HOGs)} of these markers with \textit{D.~busckii} being the outgroup species
running OMA standalone version 2.4.1~\citep{ALT-LEV-ZAR-TOM-XXX-2019} with default settings.
As before, we used Gurobi in computing pairwise DCJ-indel distances.
As can be seen in Table~\ref{tab:distdrosophila}, Gurobi was able to solve most instances within seconds with the exception of one pair, which took about 9 hours to compute, emphasizing again the sensitivity of the ILP's solving time to the number of duplicate occurrences.

\begin{table}
\caption{\label{tab:distdrosophila}Pairwise comparisons of the six \textit{Drosophila} species (\textit{busckii} (dbus), \textit{melanogaster} (dmel), \textit{pseudoobscura} (dpse), \textit{sechellia} (dsec),  \textit{simulans} (dsim) and \textit{yakuba} (dyak)). Genomes were constructed using genes as markers. All instances were solved by Gurobi on a single thread.}
\begin{center}
\footnotesize
\begin{tabular}{C{2cm}C{3.2cm}C{2cm}C{2cm}C{2cm}C{2cm}}
\toprule
\bf Genome pair&\bf Max. multiplicity of dupl. marker&\bf \#duplicate markers&\bf \#duplicate occ. &\boldmath$d_{DCJ}^{id}$ &\bf solving time [s] \\ 
\midrule
dbus-dmel&23&303&832&4661&6.02\\
dbus-dpse&17&361&934&4688&5.29\\
dbus-dsec&15&295&766&4710&5.64\\
dbus-dsim&13&281&721&4767&5.05\\
dbus-dyak&19&318&785&4756&5.00\\
dmel-dpse&23&469&1319&3799&32218.93\\
dmel-dsec&23&326&902&901&6.78\\
dmel-dsim&23&322&893&1093&5.73\\
dmel-dyak&23&362&972&1379&7.22\\
dpse-dsec&17&464&1227&3866&13.82\\
dpse-dsim&17&449&1198&3962&6.81\\
dpse-dyak&19&481&1259&3951&8.96\\
dsec-dsim&15&314&843&1138&5.67\\
dsec-dyak&19&354&903&1516&6.56\\
dsim-dyak&19&347&864&1661&23.07\\
\bottomrule
\end{tabular}
\end{center}
\end{table}

Using Neighbor Joining in \textit{MEGA X}~\citep{KUM-STE-LI-KNY-2018}, we constructed a phylogeny of the considered species.
The tree rooted by \textit{D.~busckii} is shown in Figure~\ref{fig:treedrosophila}~(i).
It is consistent with the knowledge on the \textit{Drosophila} phylogeny so far, except for the resolution of the subtree containing the taxa \textit{melanogaster}, \textit{sechellia} and \textit{simulans}.
Considering the corresponding Splits diagram constructed by \textit{NeighborNet} in \textit{SplitsTree}~\citep{HUS-BRY-2005} (see Figure~\ref{fig:treedrosophila}~(ii)), we observe that the distances in this subtree do not behave very tree-like. This suggests that, rather than an erroneous tree being computed, the resolution of the gene-based inference of markers simply does not provide distances for meaningfully clustering any two of the three taxa together.

\begin{figure}
\begin{minipage}{.45\linewidth}
\centering
{
\bf (i)}\\
\includegraphics[scale=0.32]{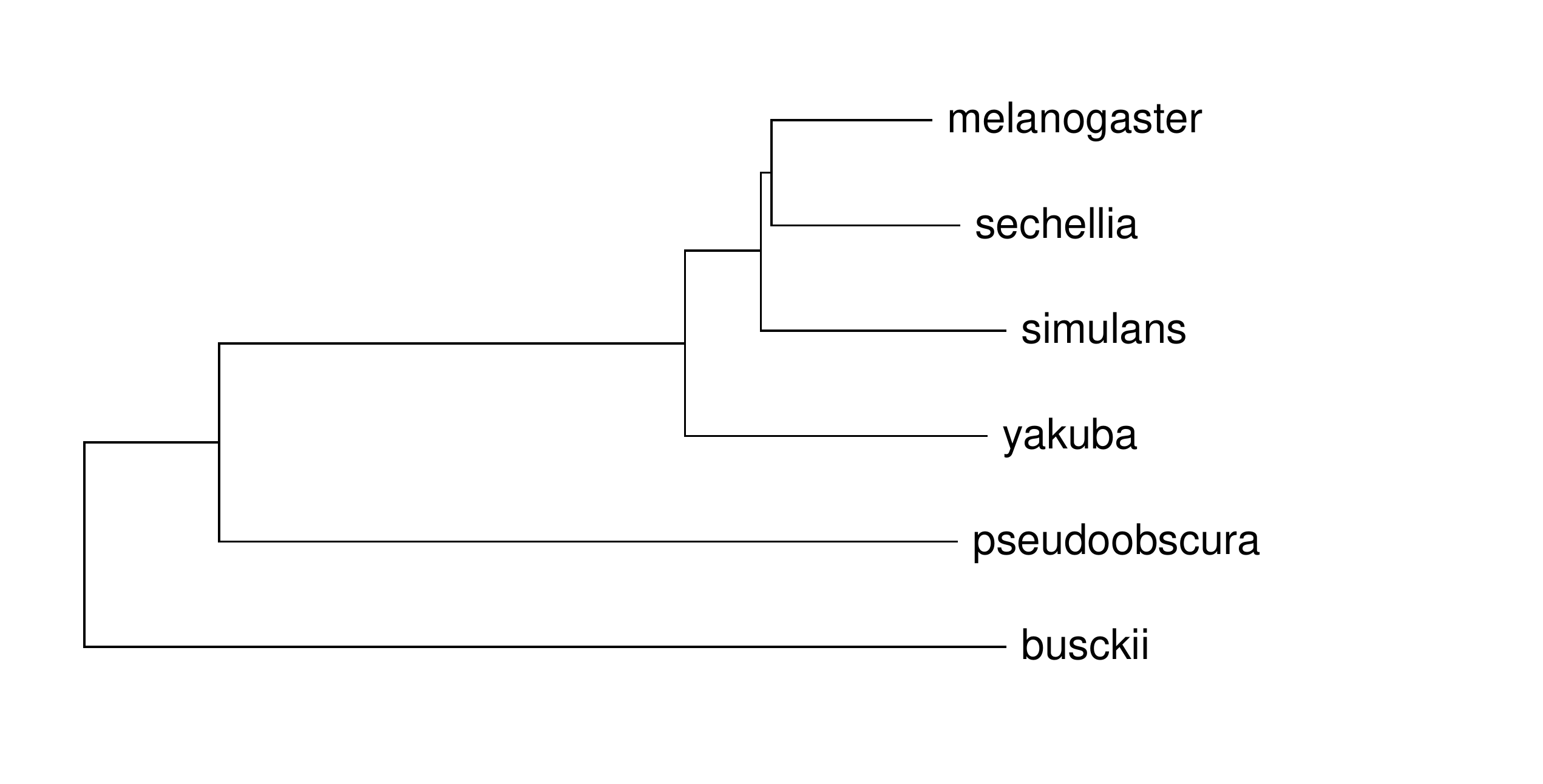}
\end{minipage}
\hspace{8mm}
\begin{minipage}{.45\linewidth}
\centering
{

\bf (ii)}\\
\includegraphics[scale=0.15]{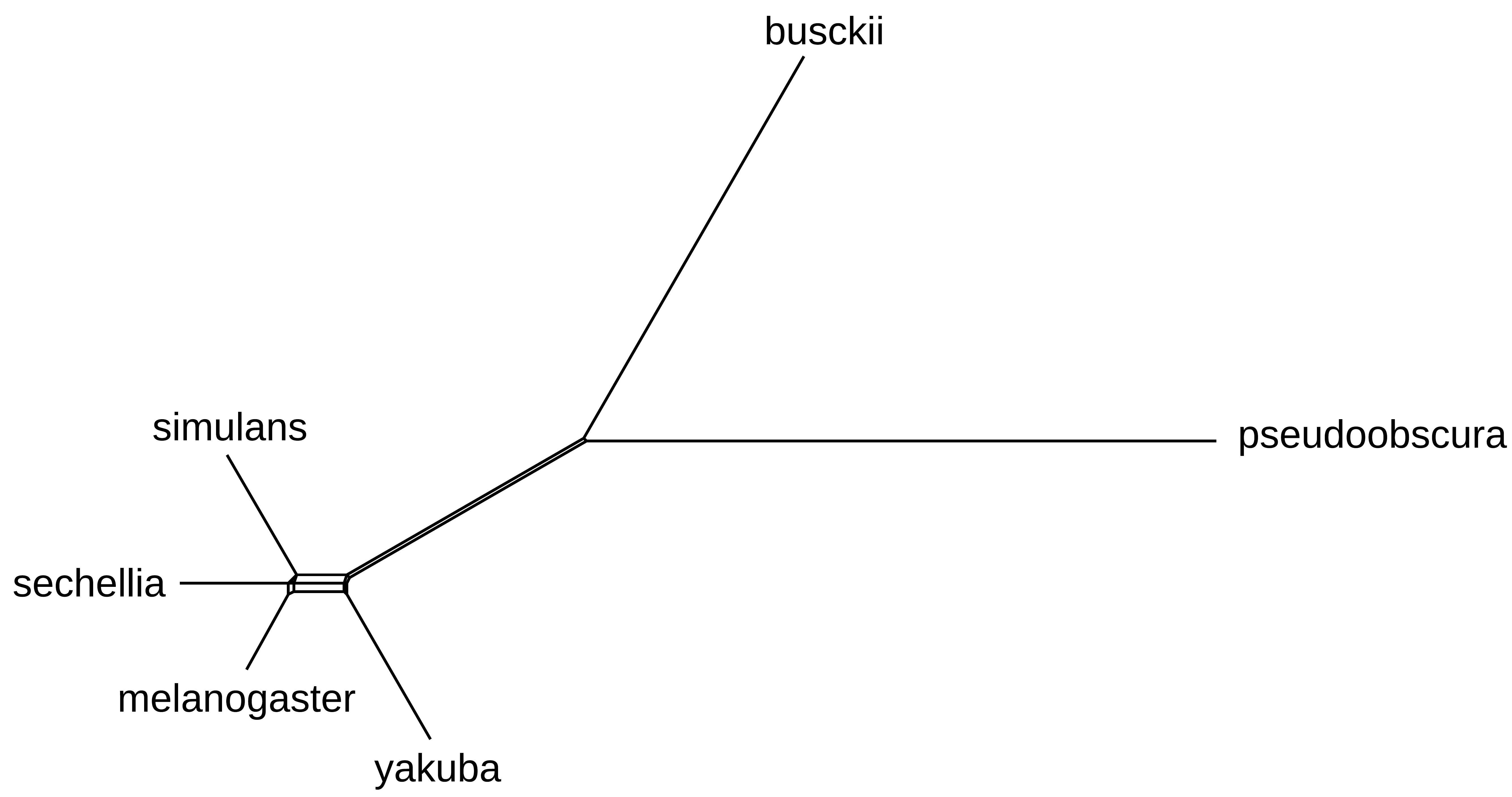}
\end{minipage}
\caption{\label{fig:treedrosophila}The gene-based distances in Table~\ref{tab:distdrosophila} are used as input to reconstruct \textit{Drosophila}-Phylogeny: (i) with Neighbor Joining; and (ii) as a Splits diagram.}
\end{figure}

In order to increase coverage and resolution, we also generated a second set of markers, directly from the genomic sequences and not restricted to genes or CDSs. We used GEESE~\citep{RUB-MAR-STO-DOE-2020} to construct genomic markers of length at least 500bp. GEESE implements a heuristic for the genome segmentation problem~\citep{Visnovska:2013ua} and takes as input local pairwise sequence alignments that we computed with LASTZ. The parameter settings used for both GEESE and LASTZ are detailed in our software repository.
The number of markers in each genome are shown in Table~\ref{tab:drosophilas}.
Again using Gurobi on a single thread, we were able to solve all corresponding instances of the ILP within a few minutes.
The distances as well as data concerning duplicates and solving times can be found in Table~\ref{tab:distdrosophila_seg}.

Using the same procedure as above to construct the Neighbor Joining tree and the Splits diagram (see Figure~\ref{fig:treedrosophila_seg}, parts~(i) and~(ii), respectively) we find that the segmentation-based approach not only produces the correct topology of the tree, but also improves the strength of all correct splits in the previously problematic subtree, including those involving \textit{D.~yakuba}.
We notice however that the branch length of \textit{D.~busckii} is comparatively short.
This is most likely due to the lack of markers which could be inferred on the \textit{D.~busckii} genome (see Table~\ref{tab:drosophilas}), thus leading to some rearrangements being missed.
One might attribute the fact that the segmentation did not infer many homologies in this case to more rapid sequence evolution in non-coding regions.

\begin{table}
\caption{\label{tab:distdrosophila_seg}Pairwise comparisons of the six \textit{Drosophila} species (\textit{busckii} (dbus), \textit{melanogaster} (dmel), \textit{pseudoobscura} (dpse), \textit{sechellia} (dsec),  \textit{simulans} (dsim) and \textit{yakuba} (dyak)). Genomes were constructed using segmentation. All instances were solved by Gurobi on a single thread.}
\begin{center}
\footnotesize
\begin{tabular}{C{2cm}C{3.2cm}C{2cm}C{2cm}C{2cm}C{2cm}}
\toprule
\bf Genome pair&\bf Max. multiplicity of dupl. marker&\bf \#duplicate markers&\bf \#duplicate occ. &\boldmath$d_{DCJ}^{id}$ &\bf solving time [s] \\ 
\midrule
dbus-dmel&15&582&1439&13965&31.77\\
dbus-dpse&46&675&1882&14329&29.94\\
dbus-dsec&15&578&1429&13877&41.97\\
dbus-dsim&15&545&1349&13822&23.69\\
dbus-dyak&15&615&1508&13801&12.51\\
dmel-dpse&43&952&2480&18660&20.38\\
dmel-dsec&43&1166&3126&5137&378.21\\
dmel-dsim&14&1045&2561&4791&23.85\\
dmel-dyak&34&1697&3896&7384&35.41\\
dpse-dsec&22&966&2511&18469&22.72\\
dpse-dsim&14&897&2264&18362&19.39\\
dpse-dyak&46&1174&3109&18602&19.64\\
dsec-dsim&23&1228&3151&3403&29.61\\
dsec-dyak&23&1701&3908&7361&27.95\\
dsim-dyak&14&1562&3492&7141&30.20\\

\bottomrule
\end{tabular}
\end{center}
\end{table}

\begin{figure}
\begin{minipage}{.45\linewidth}
\centering
{
\bf (i)}\\
\includegraphics[scale=0.32]{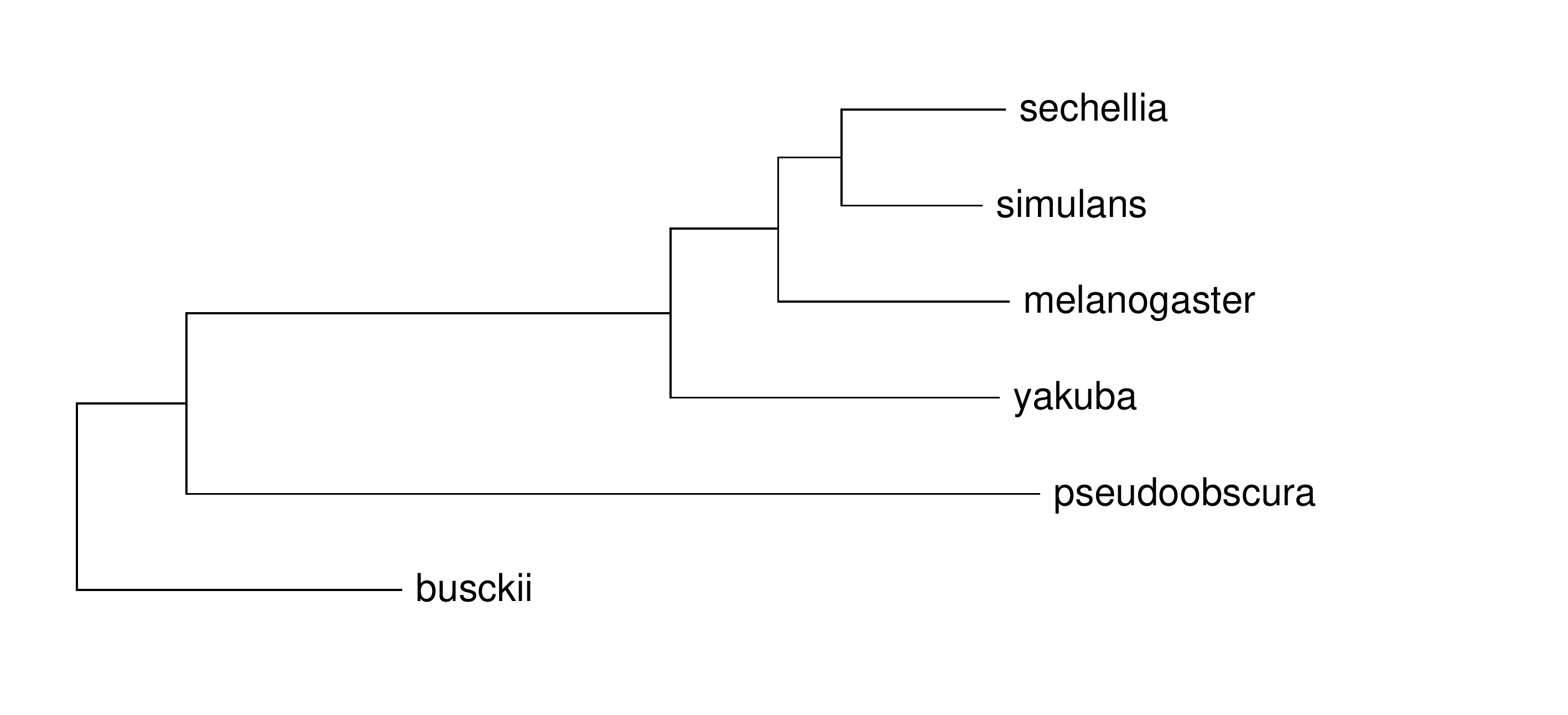}
\end{minipage}
\hspace{8mm}
\begin{minipage}{.45\linewidth}
\centering
{
\bf (ii)}\\
\includegraphics[scale=0.15]{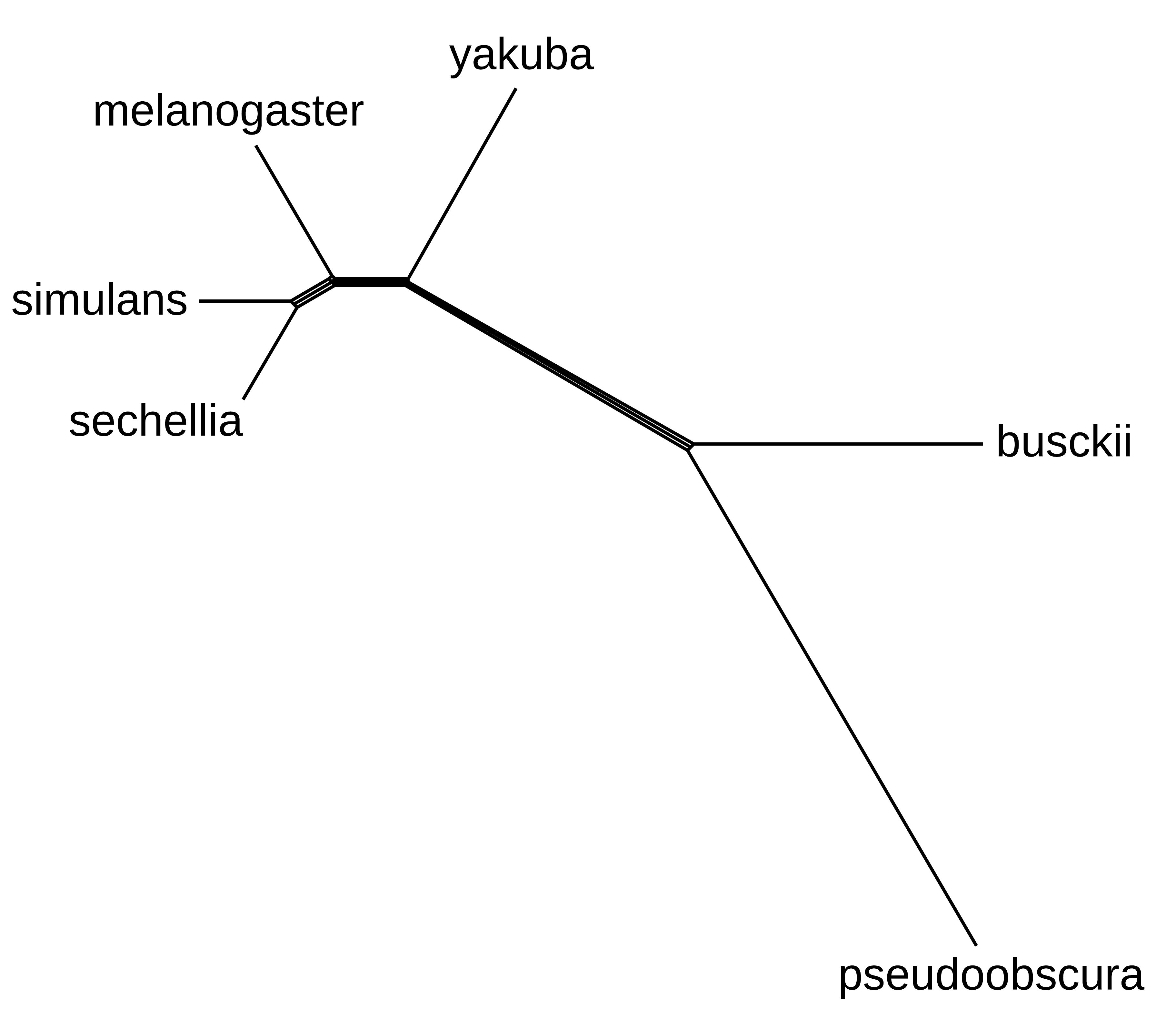}
\end{minipage}
\caption{\label{fig:treedrosophila_seg}The segmentation-based distances in Table~\ref{tab:distdrosophila_seg} are used as input to reconstruct \textit{Drosophila}-Phylogeny: (i) with Neighbor Joining; and (ii) as a Splits diagram.}
\end{figure}



\section{Conclusion}

By extending the DCJ-indel model to allow for duplicate markers, we introduced a rearrangement model that is capable of handling \emph{natural genomes}, i.e., genomes that contain shared, individual, and duplicated markers. In other words, under this model genomes require no further processing nor manipulation once genomic markers and their homologies are inferred.
The DCJ-indel distance of natural genomes being NP-hard, we presented a fast method for its calculation in form of an integer linear program. Our program is capable of handling real-sized genomes, as evidenced in simulation and real data experiments. It can be applied universally in comparative genomics and enables uncompromising analyses of genome rearrangements.

Our experiments on real data show that our approach is easily applicable to real world genomes with markers generated by different methods.
The power of the method however depends on the quality of the markers.
Genes as markers proved reliable in resolving distances and relations between further related taxa while not being expressive enough to resolve some closer relations.
In contrast, segmentation-based markers are better suited to resolve close distances, but might underestimate larger distances due to lack of markers.

We hope that similar analyses will provide further insights into the underlying mutational mechanisms of other, less well studied species.
Conversely, we expect the here presented model to be extended and specialized in future to reflect the insights gained by these analyses.
A follow up work with a \emph{family-free} version of our model has just appeared in the Proceedings of the Workshop on Algorithms in Bioinformatics (WABI 2020)~\citep{RUB-MAR-BRA-2020}. 


\end{document}